\newcommand{\rnet}{\{$\mathscr{S,C,R}\}\,$}
\newcommand{\rnetbar}{\{$\mathscr{S,\bar{C},\bar{R}}\}\,$}
\newcommand{\RS}{\ensuremath{\mathbb{R}^{\mathscr{S}}}}
\newcommand{\RSst}{\ensuremath{\mathbb{R}^{\mathscr{S^*}}}}
\newcommand{\RU}{\ensuremath{\mathbb{R}^{\mathscr{U}}}}
\newcommand{\PS}{\ensuremath{\mathbb{R}_+^{\mathscr{S}}}}
\newcommand{\PR}{\ensuremath{\mathbb{R}_+^{\mathscr{R}}}}
\newcommand{\RR}{\ensuremath{\mathbb{R}^{\mathscr{R}}}}
\newcommand{\PbarS}{\ensuremath{\overline{\mathbb{R}}_+^{\mathscr{S}}}}
\newcommand{\PbarU}{\ensuremath{\overline{\mathbb{R}}_+^{\mathscr{U}}}}
\newcommand{\PbarUst}{\ensuremath{\overline{\mathbb{R}}_{+*}^{\mathscr{U}}}}
\newcommand{\kinsys}{\{$\mathscr{S,C,R,K}\}$}
\newcommand{\scrS}{\ensuremath{\mathscr{S}}}
\newcommand{\scrC}{\ensuremath{\mathscr{C}}}
\newcommand{\scrR}{\ensuremath{\mathscr{R}}}
\newcommand{\scrU}{\ensuremath{\mathscr{U}}}
\newcommand{\scrRst}{\ensuremath{\mathscr{R}^*}}
\newcommand{\scrSst}{\ensuremath{\mathscr{S}^*}}
\newcommand{\scrRo}{\ensuremath{\mathscr{R}_0}}
\newcommand{\scrSo}{\ensuremath{\mathscr{S}_0}}
\newcommand{\scrK}{\ensuremath{\mathscr{K}}}
\newcommand{\rxn}{\ensuremath{y \to y'}}
\newcommand{\supp}{\ensuremath{\mathrm{supp} \,}}
\newcommand{\sgn}{\ensuremath{\mathrm{sgn} \,}}
\theoremstyle{plain}
\newtheorem{theorem}{Theorem}[section]
\newtheorem{lemma}[theorem]{Lemma}
\newtheorem{proposition}[theorem]{Proposition}
\newtheorem{corollary}[theorem]{Corollary}
\theoremstyle{definition}
\newtheorem{definition}[theorem]{Definition}
\newtheorem{example}[theorem]{Example}
\theoremstyle{remark}
\newtheorem{rem}[theorem]{Remark}
\begin{document}

\author{Guy Shinar\thanks{InBrain Therapeutics Ltd., 12 Metzada St., Ramat Gan 52235, Israel.  E-mail: shinarg@gmail.com. Work initiated while GS was in the Department of Molecular Cell Biology, Weizmann Institute of Science, Rehovot 76100, Israel.}
\and Martin Feinberg\thanks{The William G. Lowrie Department of Chemical \& Biomolecular Engineering and Department of Mathematics, Ohio State University, 140 W. 19th Avenue, Columbus, OH, USA 43210.  E-mail: feinberg.14@osu.edu. MF was supported by NSF grant EF-1038394 and NIH grant 1R01GM086881-01.} \thanks{Corresponding author. Phone: 614-688-4883.}} 

\title{\emph{Concordant Chemical Reaction Networks and the Species-Reaction Graph}}

\date{\emph{\today}}
\maketitle
\begin{abstract}In a recent paper it was shown that, for chemical reaction networks possessing a subtle structural property called \emph{concordance}, dynamical behavior of a very circumscribed (and largely stable) kind is enforced, so long as the kinetics lies within the very broad and natural \emph{weakly monotonic} class. In particular, multiple equilibria are precluded, as are degenerate positive equilibria. Moreover, under certain circumstances, also related to concordance, all real eigenvalues associated with a positive equilibrium are negative. Although concordance of a reaction network can be decided by readily available computational means, we show here that, when a nondegenerate network's \emph{Species-Reaction Graph} satisfies certain mild conditions, concordance and its dynamical consequences are ensured. These conditions are weaker than earlier ones invoked to establish kinetic system injectivity, which, in turn, is just one ramification of network concordance.   Because the Species-Reaction Graph resembles pathway depictions often drawn by biochemists, results here expand the possibility of inferring significant dynamical information directly from standard biochemical reaction diagrams.
\end{abstract}
\newpage
\tableofcontents
\newpage
\section{Introduction}
\label{sec:Intro}
\subsection{Background}
\label{subsec:Background}
	This article is intended as a supplement to another one \cite{shinar_concordant_2011}, in which we defined the large class of \emph{concordant chemical reaction networks}\footnote{The formal definitions of concordance and strong concordance appear here in Sections \ref{sec:ConcordantNetworks} and \ref{sec:StrongConcordance}.} and deduced many of the striking properties common to all members of that class. We argued that, so long as the kinetics is weakly monotonic ($\S$\ref{subsec:Definitions}), network concordance enforces behavior of a very circumscribed kind. 

	Among other things, we showed that the class of concordant networks coincides \emph{precisely} with the class of networks which, when taken with any weakly monotonic kinetics, invariably give rise to kinetic systems that are \emph{injective} --- a quality that precludes, for example, the possibility of switch-like transitions between distinct positive stoichiometrically compatible steady states.  Moreover, we showed that certain properties of concordant networks taken with weakly monotonic kinetics extend to still broader categories of kinetics --- including kinetics that involve product inhibition  --- provided that the networks considered are not only concordant but also   \emph{strongly concordant} \cite{shinar_concordant_2011}. 

	Although reaction network concordance is a subtle structural property, determination of whether or not a network is concordant (or strongly concordant) is readily accomplished with the help of easy-to-use and freely available software \cite{Ji_toolboxWinV21}, at least if the network is of moderate size.\footnote{Some of the underlying algorithms are given in \cite{Ji_Thesis}.} In this way, one can easily determine whether the several dynamical consequences of concordance or strong concordance accrue to a particular reaction network of interest.

\subsection{The role of the Species-Reaction Graph}

	The \emph{Species-Reaction Graph} (SR Graph), defined in Section \ref{sec:MainTheorems}, is a graphical depiction of a chemical reaction network resembling pathway diagrams often drawn by biochemists. For so-called ``fully open" systems ($\S$\ref{subsec:SRFullyOpen}), earlier work \cite{craciun_multiple_2006-1,craciun_understanding_2006-1,banaji_graph-theoretic_2009,banaji_graph-theoretic_2010}  indicated that, when the SR Graph satisfies certain structural conditions and when the kinetics is within a specified class, the governing differential equations can only admit behavior of a restricted kind. A survey of some of that earlier work, beginning with the Ph.D. research of Paul Schlosser\cite{Schlosser_PhDThesis,schlosser_theory_1994} is provided in \cite{shinar_concordant_2011}. Although most of the initial SR Graph results for fully open networks were focused on mass action kinetics, that changed with the surprising work of Banaji and Craciun \cite{banaji_graph-theoretic_2009,banaji_graph-theoretic_2010}. (In contrast to the SR Graph results, however, the network analysis tools in \cite{Ji_toolboxWinV21} are indifferent to whether the network is fully open.)

	With this as background, in \cite{shinar_concordant_2011} we asserted without proof that attributes of the SR Graph shown earlier to be sufficient for other network properties \cite{banaji_graph-theoretic_2009,craciun_multiple_2006-1,craciun_understanding_2006-1,banaji_graph-theoretic_2010} are, in fact, sufficient to ensure not only concordance but also strong concordance, at least in the ``fully open" setting. In turn, those other properties (e.g., the absence of multiple steady states when the kinetics is weakly monotonic)  largely \emph{derive} from concordance. \emph{That certain SR Graph attributes might ensure concordance seems, then, to be the fundamental idea, with attendant dynamical consequences of those same SR Graph attributes ultimately \emph{descending} from concordance.}

	Although handy computational tools in \cite{Ji_toolboxWinV21} will generally be more incisive than the SR Graph in determining a network's concordance properties and although those computational tools are indifferent to whether the network is fully open, the SR Graph nevertheless has its strong attractions. Not least of these is the close relationship that the SR Graph bears to reaction network diagrams often drawn by biochemists. For this reason alone,  far-from-obvious dynamical consequences that might be inferred from these ubiquitous diagrams take on considerable interest. Moreover, theorems that tie network concordance to properties of the SR Graph point to consequences of reaction network structure that are not easily gleaned from implementation of computational tests on a case-by-case basis.

	Thus, our primary purpose here is to prove the assertions made in \cite{shinar_concordant_2011} about connections between concordance of a network and the nature of its SR Graph. In fact, we go further in two directions, directions that would have been somewhat askew to the main thrusts of \cite{shinar_concordant_2011}:
 
\subsubsection{The Species-Reaction Graph and the concordance of ``fully open" networks}
\label{subsec:SRFullyOpen}
	First, we show that one can infer concordance of a so-called ``fully open" network when its SR graph satisfies conditions, stated in Theorem \ref{thm:ConcordanceTheorem}, that are substantially weaker than those stated in \cite{shinar_concordant_2011} (or in \cite{banaji_graph-theoretic_2009,banaji_graph-theoretic_2010,craciun_multiple_2006-1, craciun_understanding_2006-1});  Corollary \ref{cor:ConcordanceThmCorollary} invokes the SR Graph conditions stated in\cite{shinar_concordant_2011}. A fully open network is one containing a ``degradation reaction" $s\to 0$ for each species $s$ in the network.\footnote{Nevertheless, the SR graph is drawn for the so-called ``true" chemistry, devoid of degradation-synthesis reactions such as $s \to 0$ or $0 \to s$.}  Proof of concordance when the aforementioned SR Graph conditions are satisfied is by far the main undertaking of this article. 

\subsubsection{The Species-Reaction Graph and the concordance of networks that are not ``fully open"}
\label{subsec:NotFullyOpen}

	Second, we examine more completely concordance information that the SR Graph gives for reaction networks that are \emph{not} fully open. 

	For a network that is not fully open, its \emph{fully open extension} is the network obtained from the original one by adding degradation reactions for all species for which such degradation reactions are not already present. In \cite{shinar_concordant_2011} we proved that a \emph{normal} network is concordant (respectively, strongly concordant) if its fully open extension is concordant (strongly concordant). A normal network is one satisfying a very mild condition described in \cite{craciun_multiple_2010},  \cite{shinar_concordant_2011}, and, here again, in $\S$\ref{subsec:normality-weaknormality}. \emph{The large class of normal networks includes all weakly reversible networks and, in particular, all reversible networks \cite{craciun_multiple_2010}.}  

	Thus, if inspection of the SR Graph can serve to establish, by means of  theorems crafted  \emph{for fully open networks}, that a \emph{normal} network's fully open extension is concordant or strongly concordant, \emph{then the original network itself has that same property}. In this way, the seemingly restricted power of ``fully open" SR graph theorems extends far beyond the fully open setting. 

	This has some importance, especially in relation to reversible networks. Consider a reaction network whose SR Graph ensures, by virtue of Corollary \ref{cor:ConcordanceThmCorollary}, concordance of the network's fully open extension.  Then the original network can \emph{fail} to be concordant only if it is not reversible. But a kinetic system based on a network that is not reversible is usually deemed to be an approximation of a more exact ``nearby" kinetic system in which all reactions are reversible, perhaps with some reverse reactions having extremely small rates.  

	To the extent that this is the case, lack of concordance in the original network is an artifact of the approximation:  The reversible network underlying the more exact kinetic system  is concordant, so that system inherits all of the dynamical properties in \cite{shinar_concordant_2011} that are consequences of network concordance. (In particular, it inherits those properties listed as items (i) -- (iii) in Theorem \ref{thm:DynamicalTheorem}.) Thus, any absence of these attributes in a kinetic system based on the original network is, again, an artifact of the approximation.

	Even so, kinetic system models based on networks that are not reversible (or, more generally, weakly reversible) have an intrinsic interest. They are ubiquitous, and one would like to understand their inherent properties. Results in \cite{shinar_concordant_2011} about network normality and concordance already go a long way in this direction, for the class of normal networks extends well beyond the weakly reversible class.

	Here we argue that these same results extend to the still larger class of \emph{nondegenerate} reaction networks, a class that includes all normal networks and, in particular, all weakly reversible networks. A network is nondegenerate if, taken with \emph{some} differentiably monotonic kinetics \cite{shinar_concordant_2011},  the resulting kinetic system admits \emph{even one} positive composition at which the derivative of the species-formation-function is nonsingular\footnote{A positive composition is one at which all species concentrations are (strictly) positive. When we say that the derivative (Jacobian matrix) of the species-formation rate function is nonsingular we mean that its null space contains no nonzero vector of the stoichiometric subspace.}; otherwise we say that the network is \emph{degenerate} (Definition \ref{def:nondegenerate}). 

	In fact,  degenerate reaction networks are \emph{never} concordant (\S\ref{subsec:normality-weaknormality}), so for them questions about the possibility of concordance are moot.   On the other hand,  a\emph{ nondegenerate} network is concordant (strongly concordant) if the network's fully open extension is concordant (strongly concordant) ($\S$\ref{subsec:ThmsOnSmallToLarge}). \emph{Thus, SR Graph theorems that give information about the concordance or strong concordance of a nondegenerate network's fully open extension give that same information about the network itself.}

	Implications of network normality were considered in some depth in \cite{shinar_concordant_2011}. Because corresponding considerations of the broader notion of nondegeneracy are similar and because those considerations are rather different in spirit from the largely graph-theoretical aspects of most of this article, we chose to defer the entire discussion of nondegeneracy to Section \ref{sec:Extensions}, which includes computational tests whereby normality and nondegeneracy of a network can be affirmed.

	In any case, it should be kept in mind that the computational tools provided in \cite{Ji_toolboxWinV21} are indifferent to whether the network under study is fully open (or nondegenerate).

\subsection{The Species-Reaction Graph and  consequences of concordance}
\label{subsec:ConsequencesOfConcordance}
 For the most part the theorems in this article have as their objective the drawing of connections between the concordance of a reaction network and properties of the network's Species-Reaction Graph. However, it is important to keep in mind the  dynamical consequences of these theorems. When an SR Graph theorem asserts concordance of a particular network, \emph{the same theorem is also asserting that the network  inherits all of the properties shown in \cite{shinar_concordant_2011} to accrue to all concordant networks.} 

	Thus, for example, a theorem that asserts concordance for any reaction network whose SR Graph has Property X is \emph{also} asserting that, for any reaction network whose SR Graph has Property X, \emph{there is no possibility of two distinct stoichiometrically compatible equilibria, at least one of which is positive, no matter what the kinetics might be, so long as it is weakly monotonic}.  

	A theorem that does make explicit connections between the SR Graph and dynamical consequences is offered at the close of the next section.

\section{Main theorems}
\label{sec:MainTheorems}
	The \emph{Species-Reaction Graph (SR Graph)} of a chemical reaction network is a bipartite graph constructed in the following way: The vertices are of two kinds --- \emph{species vertices} and \emph{reaction vertices}. The species vertices are simply the species of the network. The reaction vertices are the reactions of the network but with the understanding that a reversible reaction pair such as $A + B \rightleftarrows P$ is identified with a single vertex. If a  species appears in a particular reaction, then an edge is drawn that connects the species with that reaction, and the edge is labeled with the name of the complex in which the species appears. (The \emph{complexes} of a reaction network are the objects that appear before and after the reaction arrows. Thus, the complexes of reaction $A+B \to P$ are $A+B$ and $P$.) The \emph{stoichiometric coefficient of an edge} is the stoichiometric coefficient of the adjacent species in the labeling complex. We show the Species-Reaction Graph for network \eqref{eq:NiceNetworkExample} in Figure \ref{fig:FirstSRGraphExample}. The arrows on some of the edges will be explained shortly.

\begin{eqnarray}
\label{eq:NiceNetworkExample}
A + B &\rightleftarrows & P \nonumber\\
B + C &\rightleftarrows & Q \\
2A     &\to  & C \nonumber\\
C + D & \to &Q +E \nonumber 
\end{eqnarray}

\begin{figure}[h]
\centering
\includegraphics[scale=.43]{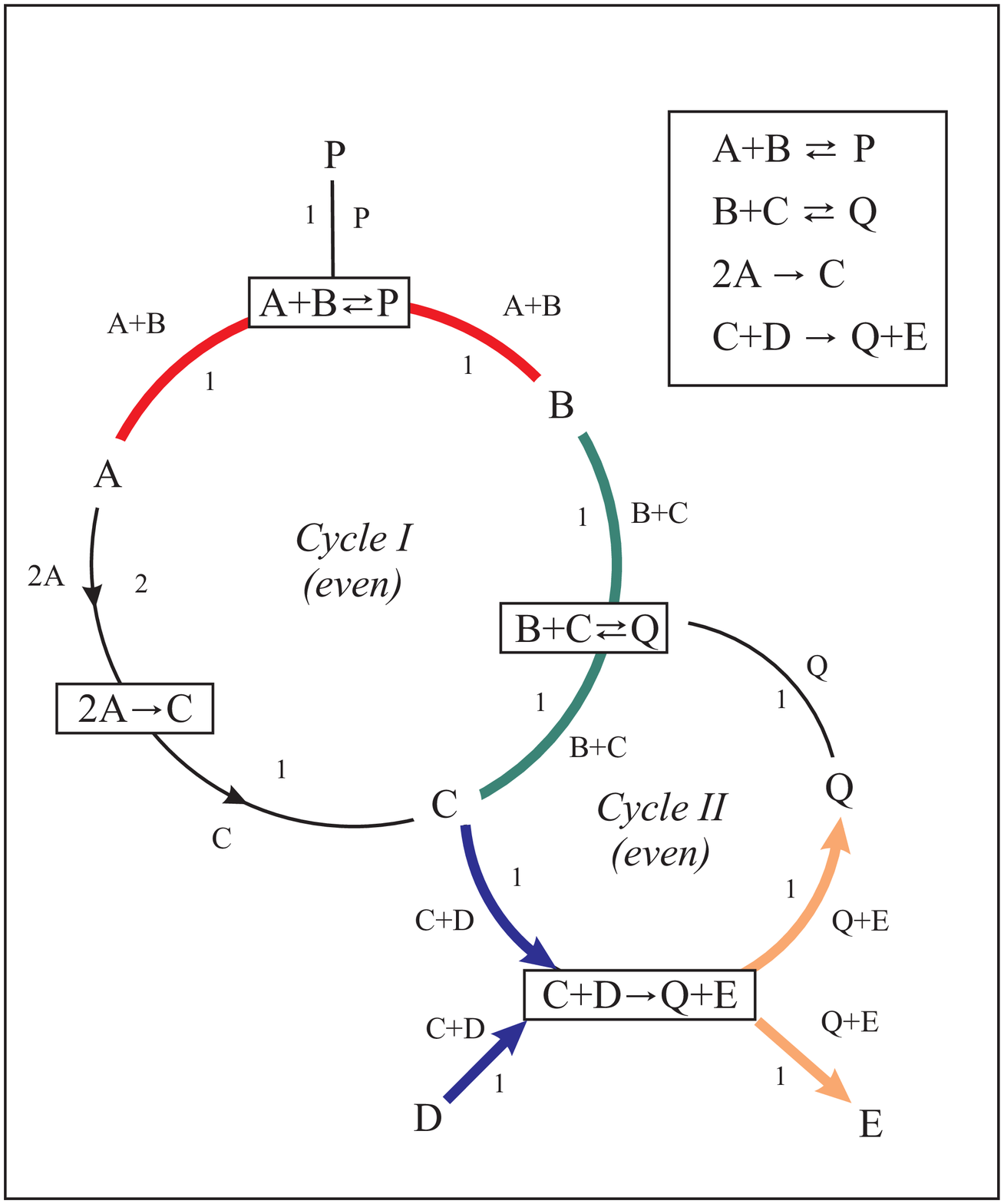}
\caption{An example of a Species-Reaction Graph}
\label{fig:FirstSRGraphExample}
\end{figure}

		It will be understood that our focus is on  concordance of a ``fully open" reaction network \rnet. To say that the network is fully open is to say that, for each species $s$ in the network  there is a ``degradation reaction" of the form $s \to 0$. (There might also be ``synthesis reactions" of the form $0 \to s$.) Moreover, we shall also suppose hereafter that a species can appear on only one side of a reaction. The first of these requirements was invoked in \cite{banaji_graph-theoretic_2010,craciun_multiple_2005-1,craciun_multiple_2006-1}, while the second was invoked only in \cite{banaji_graph-theoretic_2010}.

	Even in consideration of fully open reaction networks we will always restrict our attention to the Species-Reaction Graph for the ``true chemistry" --- that is, the Species-Reaction Graph for the original fully open network but with degradation-synthesis ``reactions" such as $A \to 0$ or $0 \to A$ omitted. As a reminder we will sometimes refer to the ``true-chemistry Species-Reaction Graph," but this understanding will always be implicit even when it is not made explicit.

	By the \emph{intersection of two cycles} in the Species-Reaction Graph, we mean the subgraph consisting of all vertices and edges common to the two cycles.
We say that two cycles have a \emph{species-to-reaction intersection (S-to-R intersection)} if their intersection is not empty and each of the connected components of the intersection is a path having a species at one end and a reaction at the other. In the phrase ``S-to-R intersection" no directionality is implied. In Figure \ref{fig:FirstSRGraphExample}, Cycle I and Cycle II have an S-to-R intersection consisting of the edge connecting species C to reactions \mbox{$B+C \rightleftarrows Q$}. There is a third unlabeled ``outer" cycle, hereafter called Cycle III, that traverses species $A, C, Q, B$ and returns to $A$. It has an S-to-R intersection with Cycle I, in particular, the long outer path connecting reactions \mbox{$B+C  \rightleftarrows Q$} to species $C$ via species $A$. Cycle III also has an S-to-R intersection with Cycle II.

	An \emph{edge-pair} in the SR Graph is a pair of edges adjacent to a common \emph{reaction} vertex. A \emph{c-pair} (abbreviation for \emph{complex-pair}) is an edge-pair whose edges carry the same complex label. For readers with access to color, the c-pairs in Figure \ref{fig:FirstSRGraphExample} are given distinct colors.

	An \emph{even cycle} in the SR Graph is a cycle whose edges contain an even number of c-pairs.\footnote{In the graph theory literature, the term ``even cycle" sometimes refers to a cycle containing an even number of edges. That is never the usage here. In a bipartite graph --- in particular in an SR graph ---  cycles \emph{always} have an even number of edges.} Cycle I contains two c-pairs and Cycle II contains none, so both are even. Cycle III, the large outer cycle, contains one c-pair, so it is not even.

	A \emph{fixed-direction edge-pair} is an edge-pair that is not a c-pair and for which the reaction of the edge-pair is irreversible. In this case, we assign a \emph{fixed direction} to each of the two edges in the intuitively obvious way: The edge adjacent to the reactant species (i.e., the species appearing in the reactant complex) is directed away from that species, and the edge adjacent to the product species (i.e., the species appearing in the product complex) points toward that species. Thus, for example, a fixed-direction edge-pair such as

\begin{equation}
A \quad \overset{A+B}{\mbox{-----}} \quad \fbox{$A + B \to C$} \quad \overset{C}{\mbox{-----}} \quad C \nonumber
\end{equation}
has the fixed direction
\begin{equation}
A \quad \overset{A+B}{\longrightarrow} \quad \fbox{$A + B \to C$} \quad \overset{C}{\longrightarrow} \quad C. \nonumber
\end{equation}
In Figure \ref{fig:FirstSRGraphExample} there are fixed-direction edge-pairs centered at the reactions $2A \to C$ and $C+D \to Q+E$; the corresponding fixed-directions for the adjacent edges are shown in the figure.

	An \emph{orientation} for a simple cycle $s_1R_1s_2R_2...s_nR_ns_1$ in the SR Graph is an assignment of one of two directions to the edges, either
\begin{equation}
s_1 \rightarrow R_1 \rightarrow s_2 \rightarrow R_2...\rightarrow s_n\rightarrow R_n \rightarrow s_1 \nonumber
\end{equation}
or
\begin{equation}
s_1 \leftarrow R_1 \leftarrow s_2 \leftarrow R_2...\leftarrow s_n\leftarrow R_n \leftarrow s_1, \nonumber
\end{equation}

\noindent
that is consistent with any fixed directions the cycle might contain. A cycle is \emph{orientable} if it admits at least one orientation.  By an \emph{oriented cycle} in the SR Graph we mean an orientable cycle taken with a choice of orientation.

	Note that a cycle that has no fixed-direction edge-pair will be orientable and have two orientations. (This happens when  every reaction in the cycle is either reversible or else is contained in a c-pair within the cycle.) A cycle that has just one fixed-direction edge-pair is orientable and has a unique orientation. A cycle that has more than one fixed-direction edge-pair
might or might not be orientable, depending on whether the fixed-direction edge-pairs all point ``clockwise" or ``counterclockwise," but if there is an orientation it will be unique. All cycles in Figure \ref{fig:FirstSRGraphExample} are orientable, with each cycle having a unique orientation.

	A set of cycles in an SR graph \emph{admits a consistent orientation} if the cycles can all be assigned orientations such that each edge contained in one of the cycles has the same orientation-direction in every cycle of the set in which that edge appears. To see that a set of cycles might not admit a consistent orientation, consider a pair of cycles sharing a common edge, each cycle having the same compulsory orientation (e.g., ``clockwise"). In fact, this is precisely the situation for Cycles I and II, so the set consisting of those two cycles does not admit a consistent orientation. On the other hand the set consisting of Cycles I and III (the large outer cycle) does admit a consistent orientation, as does the set consisting of Cycles II and III.

	A \emph{critical subgraph} of an SR Graph is a union of a set of \emph{even} cycles taken from the SR Graph that admits a consistent orientation. Because Cycle III is not even, it is easy to see that there are only two critical subgraphs in Figure \ref{fig:FirstSRGraphExample}: One consists only of Cycle I and the other consists only of Cycle II.

	Consider an oriented cycle in the SR Graph, $s_1 \to R_1 \to s_2\to R_2...\to s_n \to R_n \to s_1$. For each directed reaction-to-species edge $R \to s$ we denote by $f_{R \to s}$ the stoichiometric coefficient associated with that edge. For each directed species-to-reaction edge, we denote by  $e_{s \to R}$ the stoichiometric coefficient associated with it. The cycle is \emph{stoichiometrically expansive} relative to the given orientation if
\begin{equation}
\label{eq:StoichExpansiveCondition}
\frac{f_{R_1 \to s_2}f_{R_2 \to s_3} \dots f_{R_n \to s_1}}{e_{s_1 \to R_1}e_{s_2 \to R_2} \dots e_{s_n \to R_n}} > 1.
\end{equation}
\smallskip

	One of the major goals of this article is proof of the following theorem:

\begin{theorem}
\label{thm:ConcordanceTheorem}
 A fully open reaction network is concordant if its  true-chemistry Species-Reaction Graph has the following properties:
\smallskip

 (i) No even cycle admits a stoichiometrically expansive orientation.
\smallskip

(ii) In no critical subgraph do two even cycles have a species-to-reaction intersection.

\end{theorem}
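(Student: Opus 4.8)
The plan is to prove the contrapositive: assuming the fully open network is discordant, I will produce either an even cycle that is stoichiometrically expansive (violating (i)) or two even cycles lying in a single critical subgraph that have a species-to-reaction intersection (violating (ii)). First I would unwind discordance in the fully open setting. Writing $L$ for the map $\alpha \mapsto \sum_{\rxn}\alpha_{\rxn}(y'-y)$, discordance (as characterized earlier) furnishes a nonzero $\sigma$ in the stoichiometric subspace together with $\alpha \in \RR$ with $L(\alpha)=0$ and the sign-compatibility conditions linking $\sgn\alpha_{\rxn}$ to the signs of $\sigma$ on $\supp y$. Since the network is fully open, the stoichiometric subspace is all of $\RS$ and each degradation reaction $s\to 0$ (whose reactant complex is the singleton $s$) forces $\sgn\alpha_{s\to 0}=\sgn\sigma_s$; together with the balance equation this gives $\alpha_{s\to0}=v_s$ and hence $\sgn\sigma_s=\sgn v_s$, where $v_s$ is the net true-chemistry production of $s$. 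Replacing $\sigma$ by $v$, discordance becomes a purely SR-Graph statement: there is an $\alpha$ on the true reactions with $v:=L(\alpha)\neq 0$ such that every active reaction ($\alpha_{\rxn}\neq0$) has a reactant species $s\in\supp y$ with $\sgn v_s=\sgn\alpha_{\rxn}$, and every active species ($v_s\neq0$) has an adjacent reaction contributing to $v_s$ with the matching sign.

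Next I would extract cycles by a sign-tracing walk on the subgraph induced by the active species and reactions. From a species $s$ with $v_s\neq0$, the species-witness condition selects an adjacent reaction $r$ whose contribution to $v_s$ carries $\sgn v_s$; from $r$ the reaction-witness condition selects a reactant species $s'$ with $\sgn v_{s'}=\sgn\alpha_r$, and one checks $s'\neq s$. Iterating yields an alternating walk which, the graph being finite, closes into a simple cycle $C$. The decisive bookkeeping is that, at each reaction of the walk, the traversed edge-pair is a c-pair exactly when the $v$-sign flips between the two incident species and is a non-c-pair exactly when it is preserved. Because the walk returns to its starting species with that species' fixed $v$-sign, the total number of flips around $C$ is even; hence $C$ carries an even number of c-pairs, i.e.\ $C$ is an \emph{even} cycle. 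Moreover the walk traverses every non-c-pair in the product-to-reactant sense, so the reverse direction is consistent with all fixed-direction edge-pairs; thus $C$ is orientable, and a single even orientable cycle is already a critical subgraph.

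To convert the existence of such a cycle into a violation of (i) or (ii), I would pass to a discordance solution whose active-reaction support is minimal and study its support graph $G$. Every active species and reaction lies on a cycle obtained as above, so $G$ is covered by even orientable cycles, and minimality should force these to assemble into a critical subgraph. The quantitative step is to propagate the magnitudes of $v$ around an even cycle using the two-term balance at each cycle species: traversing a reaction multiplies the running magnitude by a ratio of a reaction-to-species coefficient $f_{R\to s}$ to a species-to-reaction coefficient $e_{s\to R}$. Consistency upon returning to the start equates the product of these ratios to $1$, whereas the witness (sign) inequalities force the product taken in the compulsory orientation to exceed $1$ --- that is, the cycle is stoichiometrically expansive, contradicting (i). If no single even cycle can absorb the imbalance, the surplus must be routed through a second even cycle of the critical subgraph that shares a connected species-to-reaction path with the first, an S-to-R intersection, contradicting (ii).

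The hard part is this last, quantitative stage. The sign-tracing argument cleanly delivers an even, orientable cycle, but turning the mere existence of a nonzero discordance solution into the sharp dichotomy --- one stoichiometrically expansive even cycle versus two even cycles meeting in an S-to-R intersection inside a single critical subgraph --- requires global control of the solution's support (via minimality) together with exact stoichiometric bookkeeping of the $e_{s\to R}$ and $f_{R\to s}$ factors and of the fixed-direction/orientation conventions, so that ``expansive relative to the compulsory orientation'' comes out precisely. Matching the combinatorics of critical subgraphs to the linear algebra of the balance relations, rather than the linear-algebraic setup or the final inequality, is where the real work lies; a parallel route would be to invoke the equivalence of concordance with injectivity and to read off the same cycle conditions from sign cancellations among cycle-cover terms, as in the injectivity literature surveyed above.
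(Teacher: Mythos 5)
Your first two stages reproduce the paper's opening moves faithfully: the identification $\sgn\sigma_s=\sgn\alpha_{s\to 0}$ for the degradation reactions, the sign-tracing walk (the paper's \emph{sign-causality graph}), the observation that c-pairs correspond to sign flips so every closed walk of causal units is an even cycle, and the compatibility of the $\rightsquigarrow$-direction with fixed-direction edge-pairs, hence orientability. The single-cycle telescoping you describe — each species on the cycle adjacent to exactly two cycle reactions, so the magnitude inequalities multiply around the cycle and force stoichiometric expansiveness — is also exactly the paper's argument in the case where the relevant subgraph is one cycle (\S\ref{subsec:ESBPossibilites1and2}).

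The gap is the concluding stage, which you yourself flag as ``where the real work lies'' but then dispose of with the assertion that ``the surplus must be routed through a second even cycle\dots\ that shares a connected species-to-reaction path with the first.'' Nothing in your argument shows the intersection would be S-to-R rather than S-to-S or R-to-R, and the per-species balance inequality generically has more than two terms, so the telescoping simply does not close. The paper's resolution is structural and occupies most of its length: it works not with a minimal-support solution but with a \emph{source} (a strong component of the sign-causality graph with no incoming edges), shows the source is a critical subgraph, decomposes it into nonseparable blocks, and proves via a directed-ear-decomposition argument (Proposition \ref{prop:SourceBlockSimplicity}, Appendix \ref{app:ProofOfSourceBlockProposition}) that condition (ii) forces every block to be an S-block or an R-block — this is where condition (ii) actually enters, not through exhibiting two offending cycles. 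The two block types then require genuinely different quantitative arguments (direct telescoping for S-blocks; for R-blocks, a system of weights $M_s$ whose existence rests on Gale's alternative theorem and a decomposition of $\ker T^T\cap\PbarU$ into directed cycle vectors, Proposition \ref{prop:RBlockMExistence}), followed by an induction that prunes leaves of the block-tree. Your sketch also omits the preliminary reduction to irreversible true chemistry (Lemma \ref{lemma:RemoveRevRxns}), which the paper needs so that the $\rightsquigarrow$-orientation is consistent with the SR Graph's fixed directions. As it stands, the proposal proves the theorem only when every source is a single cycle; the general case is asserted, not established.
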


\begin{rem}
If there are no critical subgraphs --- in particular if there are no orientable even cycles --- then the conditions of the theorem are satisfied trivially, and the network is concordant.
\end{rem}

\begin{rem}
	Determination of which subgraphs of the Species-Reaction Graph are critical requires consideration of orientability.  However, it should be understood that condition $(ii)$ of Theorem \ref{thm:ConcordanceTheorem} is imposed on every \emph{undirected} critical subgraph of the \emph{undirected} Species-Reaction Graph.  In particular, account should be taken of all even cycles in the subgraph, whether or not they be orientable. (Although, a critical subgraph is the union of a set of even cycles that admit a consistent orientation, the resulting subgraph might also contain cycles that are not orientable.) Again, no directionality should be associated with the phrase ``species-to-reaction" intersection.
\end{rem}
\bigskip

	With regard to condition (i), note that a cycle that admits two orientations can be non-expansive relative to both orientations only if, relative to one of the orientations, the number on the left side of  \eqref{eq:StoichExpansiveCondition} is \emph{equal to one} (in which case it will also have that same value with respect to the opposite orientation). 

	In the language of \cite{craciun_multiple_2006-2,craciun_understanding_2006-1,banaji_graph-theoretic_2010}, we say that a (not necessarily orientable) cycle in the SR Graph is an \emph{s-cycle} if relative to an arbitrarily imposed ``clockwise" direction, the calculation on the left side of \eqref{eq:StoichExpansiveCondition} yields the value \emph{one}. With this in mind we can state an ``orientation-free" corollary of Theorem \ref{thm:ConcordanceTheorem}:

\begin{corollary}
\label{cor:ConcordanceThmCorollary}
A fully open reaction network is concordant if its true-chemistry Species-Reaction Graph has the following properties:
\smallskip

 (i) Every even cycle is an s-cycle.
\smallskip

(ii) No two even cycles have a species-to-reaction intersection.

\end{corollary}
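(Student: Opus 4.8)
The plan is to derive Corollary \ref{cor:ConcordanceThmCorollary} directly from Theorem \ref{thm:ConcordanceTheorem} by showing that each of the two corollary conditions is at least as strong as the corresponding theorem condition. Once this is established, any network meeting the corollary's hypotheses falls within the scope of Theorem \ref{thm:ConcordanceTheorem}, and its concordance follows immediately. Thus the entire argument reduces to two short implications: corollary condition (i) implies theorem condition (i), and corollary condition (ii) implies theorem condition (ii).

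For the first implication, the key observation is how the product on the left side of \eqref{eq:StoichExpansiveCondition} transforms under reversal of orientation. If a simple cycle is traversed in the opposite sense, every reaction-to-species edge becomes a species-to-reaction edge and conversely, so the stoichiometric coefficients that appeared in the numerator now appear in the denominator and vice versa; the product is therefore replaced by its reciprocal. Consequently, if a cycle is an s-cycle --- meaning the product equals one relative to the arbitrarily imposed ``clockwise" direction (hence, by the reciprocal relation, relative to either direction) --- then the product equals one relative to every orientation the cycle may admit, so such a cycle is never stoichiometrically expansive. Since corollary condition (i) declares every even cycle to be an s-cycle, no orientable even cycle can have an expansive orientation; and any even cycle that fails to be orientable admits no orientation at all, hence vacuously admits no expansive one. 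Either way theorem condition (i) holds. This is, in essence, the content of the remark preceding the corollary.

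For the second implication, recall that a critical subgraph is by definition a union of even cycles of the SR Graph. Hence any two even cycles lying within some critical subgraph are, in particular, even cycles of the SR Graph. Corollary condition (ii) asserts that \emph{no} two even cycles whatsoever have a species-to-reaction intersection; this at once forbids such an intersection for any two even cycles sitting inside a common critical subgraph, which is exactly theorem condition (ii).

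With both implications in hand, a network satisfying the corollary's hypotheses satisfies the hypotheses of Theorem \ref{thm:ConcordanceTheorem} and is therefore concordant. I do not expect a serious obstacle: the argument is a pair of logical implications resting on an already-proved theorem rather than a genuinely new proof. The only points demanding care are the reciprocal behavior of \eqref{eq:StoichExpansiveCondition} under orientation reversal, the attendant claim that the s-cycle property is orientation-independent, and the explicit handling of non-orientable even cycles by vacuity in condition (i); I would make sure that this bookkeeping is airtight.
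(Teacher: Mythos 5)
Your proposal is correct and follows exactly the route the paper intends: the corollary is presented as an immediate consequence of Theorem \ref{thm:ConcordanceTheorem}, with the remark preceding it supplying precisely your reciprocal-product observation showing that an s-cycle can never be stoichiometrically expansive, and with the universal quantification of corollary condition (ii) trivially subsuming the critical-subgraph-restricted theorem condition (ii). No gap; your handling of non-orientable cycles by vacuity is consistent with the paper's framing.
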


	Especially for networks in which there are several irreversible reactions, Theorem \ref{thm:ConcordanceTheorem} is likely to be more incisive than its corollary. Consider, for example, the Species-Reaction Graph for network \eqref{eq:NiceNetworkExample}. \emph{Both} conditions of Corollary \ref{cor:ConcordanceThmCorollary} fail. Nevertheless, \emph{both} of the (weaker) conditions of Theorem \ref{thm:ConcordanceTheorem} are satisfied, so a fully open network having \eqref{eq:NiceNetworkExample} as its true chemistry is concordant.

\begin{rem} We shall see very easily in Section \ref{sec:Extensions} that network \eqref{eq:NiceNetworkExample} is normal. Because every normal network with a concordant fully open extension is itself concordant, we can conclude that network \eqref{eq:NiceNetworkExample}, like its fully open extension, is concordant.
\end{rem}
\bigskip

	When a fully open network \emph{does} satisfy the stronger conditions of Corollary \ref{cor:ConcordanceThmCorollary} we can say more, this time about \emph{strong} concordance.

\begin{theorem}
\label{thm:StrongConcordanceThm}
 A fully open reaction network is strongly concordant if its true-chemistry Species-Reaction Graph has the following properties:
\smallskip

 (i) Every even cycle is an s-cycle.
\smallskip

(ii) No two even cycles have a species-to-reaction intersection.

\end{theorem}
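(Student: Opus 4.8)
The plan is to obtain Theorem~\ref{thm:StrongConcordanceThm} by re-running the SR-Graph analysis already carried out for Theorem~\ref{thm:ConcordanceTheorem} and Corollary~\ref{cor:ConcordanceThmCorollary}, but with the sign conditions appropriate to \emph{strong} concordance in place of those for ordinary concordance. Recall from \cite{shinar_concordant_2011} that concordance fails exactly when there is a nonzero $\sigma$ in the stoichiometric subspace together with a reaction-indexed vector $\alpha$ satisfying $\sum_{y\to y'}\alpha_{y\to y'}(y'-y)=0$ such that, at each reaction, the sign of $\alpha_{y\to y'}$ is compatible with the values of $\sigma$ on the species of the \emph{reactant} complex; strong concordance fails under the analogous condition in which the compatibility must instead hold with respect to the species of the \emph{entire} reaction, the product species entering with the opposite sign, this sign reversal being the linear-algebraic trace of the product inhibition that strong concordance is meant to accommodate. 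Accordingly, I would first restate strong discordance as the existence of such a pair $(\sigma,\alpha)$ in which both reactant and product species are active.

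The core of the argument is then a proof by contradiction on the true-chemistry SR Graph, exactly parallel to the one underlying Theorem~\ref{thm:ConcordanceTheorem}. Starting from a strong-discordance witness $(\sigma,\alpha)$, I would read off a subgraph of the SR Graph from the supports of $\sigma$ and $\alpha$ and track how the signs of $\sigma$ propagate across the reactant and product edges incident to each reaction vertex. Because strong concordance activates \emph{both} edges of every reaction, a sign assignment can close up consistently only around cycles whose c-pair count is even, that is, around the even cycles, and the stoichiometric coefficients accumulated along such a cycle are precisely the quantities whose ratio appears in \eqref{eq:StoichExpansiveCondition}. Hypothesis~(i), that every even cycle is an s-cycle, forces that ratio to equal one in either sense of traversal and so prevents the witness from being supported on a single even cycle; hypothesis~(ii), that no two even cycles share a species-to-reaction intersection, eliminates the only remaining possibility, namely a witness supported on two even cycles glued along an S-to-R path. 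Together these yield the contradiction that establishes strong concordance.

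The principal obstacle, and the reason the stronger, orientation-free hypotheses of Corollary~\ref{cor:ConcordanceThmCorollary} are needed here rather than the weaker orientation hypotheses of Theorem~\ref{thm:ConcordanceTheorem}, is that product inhibition destroys the fixed directions that irreversibility supplies in the ordinary concordance setting. When only reactant species are active, an irreversible reaction pins the orientation of its edge-pair, so it suffices to forbid stoichiometrically expansive \emph{admissible} orientations and to inspect only the \emph{critical} subgraphs; once product species become active, every reaction constrains both of its edges regardless of reversibility, no edge direction can be fixed in advance, and one is driven into the orientation-free regime in which every even cycle and every S-to-R intersection of even cycles must be controlled. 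The delicate point is to verify that the sign-flow bookkeeping of the Theorem~\ref{thm:ConcordanceTheorem} proof survives unchanged once the product edges are activated, so that an alleged strong-discordance witness is forced to produce one of exactly the two configurations ruled out by (i) and (ii). Should the direct bookkeeping prove cumbersome, an attractive alternative is to encode the product-inhibition sign reversal as a derived network $\{\mathscr{S},\bar{\mathscr{C}},\bar{\mathscr{R}}\}$ on the same species and to deduce strong concordance of the original network from concordance of that derived network by way of Corollary~\ref{cor:ConcordanceThmCorollary}.
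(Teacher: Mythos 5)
Your fallback suggestion in the final sentence --- encode the strong-discordance witness as a derived true chemistry on the same species and deduce the result from Corollary \ref{cor:ConcordanceThmCorollary} --- is exactly the route the paper takes (Lemma \ref{lem:StrongConcordanceLemma}, proved in Appendix \ref{app:ProofofStrongConcordanceLemme}), and it is the route you should commit to, because your primary route contains a genuine gap. You assert that a strong-discordance witness $(\sigma,\alpha)$ must be supported either on a single even cycle or on two even cycles glued along an S-to-R path, so that hypotheses (i) and (ii) dispose of the two cases respectively. That is false, and it is already false in the ordinary concordance setting: the sign-causality graph built from a witness can be an arbitrary union of strongly connected pieces, and the entire apparatus of the proof of Theorem \ref{thm:ConcordanceTheorem} --- sources, the inequality system \eqref{eq:SourceInequalitySystem}, the decomposition into S-blocks and R-blocks, Proposition \ref{prop:RBlockMExistence}, and the concluding leaf-removal induction on the block-tree --- exists precisely to handle witnesses that are not one or two cycles. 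Moreover, re-running that machinery with product species activated is not mere bookkeeping: the causal-unit taxonomy of \S\ref{sec:SignCausalityGraph} and the derivation of \eqref{eq:SourceInequalitySystem} rest on the fact that, under Definition \ref{def:Concordant}, only \emph{reactant} species can drive the sign of $\alpha_{\rxn}$, whereas under Definition \ref{DEF:StronglyConcordant} a product species may do so with reversed sign; this changes which edges exist in the causal graph, which complex labels they carry, and the signs of the terms you would need to bound.

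The paper's proof sidesteps all of this by reducing to the already-proved Corollary \ref{cor:ConcordanceThmCorollary}. Given a strong-discordance witness for the fully open extension of the true chemistry, one constructs a new true chemistry by \emph{retaining} a reaction when a reactant species certifies the sign of $\alpha$ there, \emph{reversing} it (and negating the corresponding $\alpha$) when only a product species does, \emph{removing} reactions with vanishing $\alpha$, and collapsing reversible pairs in the spirit of Lemma \ref{lemma:RemoveRevRxns}. The resulting fully open network is discordant in the ordinary sense with the same $\sigma$, and its SR Graph is a subgraph of the original one up to arrow directions inside reaction vertices. Since conditions (i) and (ii) of Theorem \ref{thm:StrongConcordanceThm} are orientation-free and are inherited by subgraphs, Corollary \ref{cor:ConcordanceThmCorollary} applied to the derived network yields the contradiction. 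The case analysis that legitimizes the reversal step --- verifying that the modified $\alpha$ satisfies Definition \ref{def:Concordant} with respect to the \emph{new} reactant complexes, including the treatment of reversible pairs and of the degradation and synthesis reactions --- is the actual substance of the proof, and it is precisely what your sketch omits.
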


	Although the fully open extension of network \eqref{eq:NiceNetworkExample} is concordant, it is not strongly concordant. (This can be quickly determined by the freely available \emph{Chemical Reaction Network Toolbox} \cite{Ji_toolboxWinV21}.) The Species-Reaction Graph for network \eqref{eq:NiceNetworkExample} does not satisfy the conditions of Theorem \ref{thm:StrongConcordanceThm}.

\begin{rem}	Although these theorems nominally speak in terms of concordance or strong concordance of a fully open network, they are even more generous than they seem: Recall that in  Section \ref{sec:Extensions} we indicate why the concordance properties ensured by the theorems actually extend beyond the fully open setting to a far wider class of reaction networks. In particular, they extend to the class of \emph{nondegenerate} networks described earlier in \S\ref{subsec:NotFullyOpen} (including \emph{all} weakly reversible networks). For nondegenerate  networks, then, the fully open requirement of Theorems \ref{thm:ConcordanceTheorem} and \ref{thm:StrongConcordanceThm} becomes moot. 
\bigskip

\end{rem}

	In the spirit of \S\ref{subsec:ConsequencesOfConcordance} we close this section with the statement of a theorem that ties the hypothesis of Theorem \ref{thm:ConcordanceTheorem} not to concordance itself but, instead, to dynamical consequences of concordance proved in \cite{shinar_concordant_2011}. 	The following theorem is essentially a corollary of Theorem \ref{thm:StrongConcordanceThm}, theorems in \cite{shinar_concordant_2011}, and material to be discussed in Section \ref{sec:Extensions}. Language in the theorem is that used in \cite{shinar_concordant_2011}. Much of it is reviewed in Section \ref{sec:CRNTPreliminaries} of this article.
\bigskip

\begin{theorem}
\label{thm:DynamicalTheorem}
For any nondegenerate reaction network whose true chemistry Species-Reaction Graph satisfies conditions (i) and (ii) of Theorem \ref{thm:ConcordanceTheorem} the following statements hold true: 
\medskip

\noindent
(i) For each choice of weakly monotonic kinetics, every positive equilibrium is unique within its stoichiometric compatibility class. That is, no positive equilibrium is stoichiometrically compatible with a different equilibrium, positive or otherwise.
\medskip

\noindent
(ii) If the network is weakly reversible then, for each choice of kinetics (not necessarily weakly monotonic), no nontrivial stoichiometric compatibility class has an equilibrium on its boundary. In fact, at each boundary composition in any non-trivial stoichiometric compatibility class the species-formation-rate vector points into the stoichiometric compatibility class in the sense that there is an absent species produced at strictly positive rate. If, in addition, the network is conservative then, for any choice of a continuous weakly monotonic kinetics, there is precisely one equilibrium in each nontrivial stoichiometric compatibility class, and it is positive.
\medskip

\noindent
(iii) If the kinetics is differentiably monotonic then every positive equilibrium is nondegenerate. Moreover, every real eigenvalue associated with a positive equilibrium is negative.
\end{theorem}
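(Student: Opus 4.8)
\section*{Proof proposal for Theorem \ref{thm:DynamicalTheorem}}

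The plan is to reduce everything to the plain concordance of the network under study and then to harvest the already-catalogued dynamical consequences of concordance from \cite{shinar_concordant_2011}. The pivotal observation is that the true-chemistry Species-Reaction Graph is insensitive to the addition of degradation reactions $s \to 0$: the network and its fully open extension share exactly the same true-chemistry SR Graph. Hence the hypotheses --- conditions (i) and (ii) of Theorem \ref{thm:ConcordanceTheorem} --- hold for the fully open extension as well, and since that extension is by construction fully open, Theorem \ref{thm:ConcordanceTheorem} applies directly to it and declares it concordant. I would then invoke the descent result developed in Section \ref{sec:Extensions} (\S\ref{subsec:ThmsOnSmallToLarge}): a nondegenerate network whose fully open extension is concordant is itself concordant. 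Because the network is assumed nondegenerate, this yields concordance of the network itself, the central structural fact underlying the three assertions.

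With concordance in hand, assertions (i) and (iii) are appeals to theorems of \cite{shinar_concordant_2011}. For (i), concordance is exactly equivalent to injectivity of the associated kinetic system for every weakly monotonic kinetics, and injectivity precludes two distinct stoichiometrically compatible equilibria of which at least one is positive; this is the uniqueness statement. For (iii), I would quote the theorem of \cite{shinar_concordant_2011} stating that, for a concordant network taken with any differentiably monotonic kinetics, the derivative of the species-formation-rate function at a positive equilibrium, restricted to the stoichiometric subspace, is nonsingular (nondegeneracy) and has no nonnegative real eigenvalue, whence every real eigenvalue is negative. No strengthening beyond concordance is needed, since differentiable monotonicity is a subclass of weakly monotonic kinetics.

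Assertion (ii) has two ingredients. The boundary statement --- that for a weakly reversible network and any kinetics the species-formation-rate vector at a boundary composition points into the stoichiometric compatibility class, exhibiting an absent species produced at strictly positive rate --- is a structural consequence of weak reversibility itself, proved in \cite{shinar_concordant_2011}, and in particular rules out boundary equilibria of nontrivial classes. For the conservative case I would combine this inward-pointing behavior with the compactness and convexity of each nontrivial stoichiometric compatibility class (guaranteed by conservativity) to run a Brouwer/degree-theoretic fixed-point argument producing at least one equilibrium in the interior, i.e., a positive one; uniqueness then comes from (i). Note that weak reversibility forces normality and hence nondegeneracy, so the nondegeneracy hypothesis is automatic in this regime and the concordance needed for uniqueness is already secured by the first paragraph.

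The bulk of the argument is thus assembly, and the genuinely non-mechanical steps are two. The first is the descent from the fully open extension to the network itself: it is the bridge that lets a theorem crafted for fully open networks speak about a network that need not be fully open, and it rests squarely on the nondegeneracy hypothesis (degenerate networks are never concordant, so the hypothesis cannot be dropped). The second is the existence half of the conservative case in (ii), where genuine analysis --- a fixed-point or degree computation using the inward-pointing boundary behavior on a compact class --- is required rather than pure citation. Verifying that the precise hypotheses of the cited theorems of \cite{shinar_concordant_2011} (continuity of the kinetics, the exact form of weak monotonicity, the restriction to the stoichiometric subspace in the eigenvalue statement) line up with the present formulation is the remaining care that must be taken.
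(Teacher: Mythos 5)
Your proposal follows exactly the route the paper intends: the paper gives no detailed proof of Theorem \ref{thm:DynamicalTheorem}, describing it as ``essentially a corollary'' of the SR Graph theorems, the material of Section \ref{sec:Extensions}, and the dynamical theorems of \cite{shinar_concordant_2011}, and your assembly --- the network and its fully open extension share the same true-chemistry SR Graph, so the extension is concordant by Theorem \ref{thm:ConcordanceTheorem}, so the nondegenerate network is itself concordant by Theorem \ref{thm:SmallToLargeConcordanceThm}, so the catalogued consequences of concordance apply --- is precisely that argument. One correction for part (ii): the inward-pointing boundary behavior is \emph{not} a structural consequence of weak reversibility alone; the relevant theorem of \cite{shinar_concordant_2011} requires the network to be both weakly reversible \emph{and} concordant, and without concordance a weakly reversible network can perfectly well have boundary equilibria in nontrivial stoichiometric compatibility classes. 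Your argument survives unchanged because concordance is already secured in your first paragraph, but the citation must carry that hypothesis. Also, the existence claim in the conservative case is itself one of the theorems of \cite{shinar_concordant_2011} for conservative, weakly reversible, concordant networks with continuous weakly monotonic kinetics, so the fixed-point/degree computation you propose need not be re-executed here, though your sketch is the right idea for how that cited result is obtained.
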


	As in \cite{shinar_concordant_2011}, it is understood that eigenvalues in the theorem statement are those associated with eigenvectors in the stoichiometric subspace. Similarly, when we say that a positive equilibrium in nondegenerate, we mean that, for the equilibrium, zero is not an eigenvalue corresponding to an eigenvector in the stoichiometric subspace. A stoichiometric compatibility class is nontrivial if it contains at least one positive composition.

\begin{rem} 
\label{rem:AllOrNothing}
Consider a (not necessarily nondegenerate) reaction network whose true chemistry SR Graph satisfies conditions (i) and (ii) of Theorem \ref{thm:ConcordanceTheorem}. With respect to the possibility of nondegenerate positive equilibria, Theorem \ref{thm:DynamicalTheorem} describes something of an all or nothing situation: 
If there is \emph{some} differentiably monotonic kinetics that gives rise to \emph{even one} nondegenerate positive equilibrium, then the network itself is nondegenerate, in which case  \emph{every} positive equilibrium arising from \emph{any} differentiably monotonic kinetics is nondegenerate (and unique within its stoichiometric compatibility class).
\end{rem}

\begin{rem} Consider a nondegenerate reaction network that is not necessarily fully open. If its true-chemistry SR Graph satisfies the conditions of Theorem  \ref{thm:StrongConcordanceThm}, then considerations in Section \ref{sec:Extensions} will indicate that the network is strongly concordant. In this case, the dynamical properties of strongly concordant networks given in \cite{shinar_concordant_2011} accrue to the network at hand, and one can again deduce a theorem which, in the spirit of Theorem \ref{thm:DynamicalTheorem}, makes statements about general properties of kinetic systems the network engenders, this time including those that derive from so-called ``two-way monotonic kinetics."

\end{rem}

\section{Reaction network theory preliminaries}
\label{sec:CRNTPreliminaries}

This section is, for the most part, is a compendium of the definitions and infrastructure used in \cite{shinar_concordant_2011}, repeated here for the reader's convenience. Reference \cite{shinar_concordant_2011} has more in the way of discussion, and  \cite{feinberg_lectureschemical_1979} has still more in the way of motivation. We begin with notation.

\subsection{Notation}

When $I$ is a finite set (for example, a set of species or a set of reactions), we denote the vector space of real-valued functions with domain $I$ by $\mathbb{R}^I$. If $N$ is the number of elements in the set $I$, then $\mathbb{R}^I$ is, in effect, a copy of the standard vector space $\mathbb{R}^N$, with the components of a vector $x \in \mathbb{R}^I$  indexed by the names of the members of $I$ instead of the integers $1,2,\dots,N$. For $x \in \mathbb{R}^I$ and $i \in I$,  the symbol $x_i$ denotes the value (component) of $x$ corresponding to the element $i \in I$. 

	For example, if $\scrS = \{NO, O_2, NO_2\}$ is the set of species in a chemical system and if  $c_{NO}, c_{O_2}$, and $c_{NO_2}$ are the molar concentrations of the three species in a particular mixture state, then that state can be represented by a ``composition vector" $c$ in the vector space \RS. That is, $c$ represents an assignment to each species of a number, the corresponding molar concentration. Readers who wish to do so can, in this case, simply regard $c$ to be the 3-vector $[c_{NO}, c_{O_2}, c_{NO_2}]$, with the understanding that the arrangement of the three numbers in such an \emph{ordered} array is superfluous to the mathematics at hand.

	Vector representations in spaces such as $\mathbb{R}^I$ rather than  $\mathbb{R}^N$ have advantages in consideration of graphs and networks, where one wants to avoid nomenclature that imparts an artificial numerical order to vertices or edges. See, for example, \cite{Biggs1993}.  

The subset of $\mathbb{R}^I$ consisting of vectors having only positive (nonnegative) components is denoted $\mathbb{R}_+^I\: (\overline{\mathbb{R}}_+^I)$. For each $x \in \mathbb{R}^I$ and for each $z \in \overline{\mathbb{R}}_+^I$, the symbol $x^z$ denotes the real number defined by \[x^z := \prod_{i \in I} (x_{i})^{z_i}, \] where it is understood that $0^0 = 1.$ For each $x, x' \in \mathbb{R}^I$, the symbol $x \circ x'$ denotes the vector of $\mathbb{R}^I$ defined by \[ (x \circ x')_i := x_i x'_i, \hspace{0.2 cm} \forall i \in I.\] 
\noindent
	For each $i \in I$, we denote by $\omega_i$ the vector of $\mathbb{R}^I$ such that $(\omega_i)_j = 1$ whenever $j=i$ and $(\omega_i)_j = 0$ whenever $j \neq i$.

The \emph{standard basis} for $\mathbb{R}^I$ is the set $\left\{\omega_i \in \mathbb{R}^I : i \in I\right\}$. Thus, for each $x \in \mathbb{R}^I$, we can write $x = \sum_{i \in I} x_i \omega_i$. The \emph{standard scalar product} in $\mathbb{R}^I$ is defined as follows: If $x$ and $x'$ are  elements  of $\mathbb{R}^I$, then \[ x \cdot x' = \sum_{i \in I} x_i x'_i .\] The standard basis of $\mathbb{R}^I$ is orthonormal with respect to the standard scalar product. It will be understood that $\mathbb{R}^I$ carries the standard scalar product and the norm derived from the standard scalar product. It will also be understood that $\mathbb{R}^I$ carries the corresponding norm topology. 

If $U$ is a linear subspace of $\mathbb{R}^I$, we denote by $U^\perp$ the orthogonal complement of $U$ in $\mathbb{R}^I$ with respect to the standard scalar product.

By the \emph{support} of $x \in \mathbb{R}^I$, denoted $\supp x$, we mean the set of indices $i \in I$ for which $x_i$ is different from zero. When $\xi$ is a real number, the symbol $\sgn(\xi)$ denotes the sign of $\xi$. When $x$ is a vector of $\mathbb{R}^I$, $\sgn(x)$ denotes the function with domain $I$ defined by \[(\sgn(x))_i := \sgn x_i, \hspace{0.2 cm} \forall i \in I. \]

\subsection{Some definitions}
\label{subsec:Definitions}

	As indicated in Section \ref{sec:MainTheorems}, the objects in a reaction network that appear at the heads and tails of the reaction arrows are the \emph{complexes} of the network. Thus, in  network \eqref{EQ:ExampleNetwork}, the complexes are $2A, B, C, C+D$ and $E$.  A reaction network can then be viewed as a directed graph, with complexes playing the role of the vertices and reaction arrows playing the role of the edges.
\begin{align}
\label{EQ:ExampleNetwork}
\nonumber &\hspace{-2mm}2A \hspace{2 mm} \rightleftarrows \hspace{1.5 mm} B \\
\nonumber & \hspace{.5 mm} \nwarrow \hspace {3 mm}  \swarrow \\
& \hspace{5 mm} C \\[0.75 em]
\nonumber & C + D \rightleftarrows E
\end{align}

\begin{rem}
Let  \scrS \; be the set of species in a network. In chemical reaction network theory, it is sometimes the custom to replace symbols for the standard basis of \RS  \  with the names of the species themselves. For example, in network \eqref{EQ:ExampleNetwork}, with $\scrS = \{A,B,C,D,E\}$,  a vector such as $\omega_C +\omega_D \in \RS$ can instead be written as $C+D$, and $2\omega_A$ can be written as $2A$. In this way, \RS\  can be identified with the vector space of formal linear combinations of the species.  As a result, the complexes of a reaction network with species set \scrS \;can be identified with vectors in \RS.

\end{rem}

\begin{definition}
\label{DEF:ChemicalReactionNetwork}
A \emph{chemical reaction network} consists of three finite sets:
\begin{enumerate}
	\item a set $\mathscr{S}$ of distinct \emph{species} of the network;
	\item a set $\mathscr{C} \subset \bar{\mathbb{R}}_+^{\mathscr{S}}$ of distinct \emph{complexes} of the network;
	\item a set $\mathscr{R} \subset \mathscr{C} \times \mathscr{C}$ of distinct \emph{reactions}, with the following properties:
		\begin{enumerate}
			\item $(y,y) \notin \mathscr{R}$ for any $y\in \mathscr{C}$;
			\item for each $y \in \mathscr{C}$ there exists $y' \in \mathscr{C}$ such that $(y,y') \in \mathscr{R}$ or such that $(y',y) \in \mathscr{R}$.
		\end{enumerate}
	\end{enumerate}
\end{definition}
\smallskip

If $(y,y')$ is a member of the reaction set $\mathscr{R}$, we say that $y$ \emph{reacts to} $y'$, and we write $y \rightarrow y'$ to indicate the reaction whereby complex $y$ reacts to complex $y'$. The complex situated at the tail of a reaction arrow is the \emph{reactant complex} of the corresponding reaction, and the complex situated at the head is the reaction's \emph{product complex}.
\smallskip

	For network \eqref{EQ:ExampleNetwork}, $\scrS = \{A, B, C, D, E\}$, $\scrC = \{2A, B, C, C + D, E\}$, and $\scrR = \{2A \to B, B \to 2A, B \to C, C \to 2A, C + D \to E, E \to C + D\}$. The diagram in \eqref{EQ:ExampleNetwork} is an example of a \emph{standard reaction diagram}: each complex in the network is displayed precisely once, and each reaction in the network is indicated by an arrow in the obvious way. 

	In the context of the present paper and its predecessor \cite{shinar_concordant_2011} the idea of \emph{weak reversibility} plays an important role. The following definition provides some preparation.    

\begin{definition}
\label{DEF:UltimatelyReactsTo}
A complex $y \in \mathscr{C}$ \emph{ultimately reacts} to a complex $y' \in \mathscr{C}$ if any of the following conditions is satisfied:
\begin{enumerate}
\item{$y \rightarrow y' \in \mathscr{R}$};
\item{There is a sequence of complexes $y(1), y(2), \ldots , y(k)$ such that \[ y \rightarrow y(1) \rightarrow y(2) \rightarrow \ldots \rightarrow y(k) \rightarrow y'.\]}
\end{enumerate}
\end{definition}
\smallskip

In network \eqref{EQ:ExampleNetwork} the complex $2A$ ultimately reacts to the complex $C$, but the complex $C$ does not ultimately react to the complex $C + D$. 

\begin{definition}
\label{EQ:WeaklyReversible}
 A reaction network \rnet\ is \emph{reversible} if $y' \rightarrow y \in \mathscr{R}$ whenever $y \rightarrow y' \in \mathscr{R}$. The network is \emph{weakly reversible} if for each $y, y' \in \mathscr{C}$, $y' \mbox{ ultimately reacts to } y$ whenever $y \mbox{ ultimately reacts to } y'$.
\end{definition}

Network \eqref{EQ:ExampleNetwork} is weakly reversible but not reversible. On the other hand, every  reversible  network is also weakly reversible. A reaction network is weakly reversible if and only if in its standard reaction diagram every arrow resides in a directed arrow-cycle.
\smallskip

\begin{definition}
\label{DEF:Reaction vectors}
The \emph{reaction vectors} for a reaction network \rnet\  are the members of the set \[ \left\{y' - y \in \RS : y \rightarrow y' \in \mathscr{R} \right\}. \]
The \emph{rank} of a reaction network is the rank of its set of reaction vectors.
\end{definition}
\smallskip

For network \eqref{EQ:ExampleNetwork} the reaction vector corresponding to the reaction $2A \to B$ is $B - 2A$. The reaction vector corresponding to the reaction $C + D \to E$ is $E - C - D$. 

\begin{definition}
\label{DEF:StoichiometricSubspace}
The \emph{stoichiometric subspace} $S$ of a reaction network \rnet \  is the linear subspace of $\RS$ defined by 

\begin{equation}
S := \mathrm{span} \left\{ y' - y \in \RS : y \rightarrow y' \in \mathscr{R} \right\}.
\label{EQ:StoichiometricSubspace}
\end{equation}
\end{definition}
\smallskip

	The dimension of the stoichiometric subspace is identical to the network's rank. The stoichiometric subspace is a proper linear subspace of \RS \, whenever the network is \emph{conservative}:

\begin{definition}
\label{DEF:Conservative}
A reaction network \rnet \ is \emph{conservative} if the orthogonal complement $S^\perp$ of the stoichiometric subspace $S$ contains a strictly positive member of $\RS$: \[S^\perp \cap \PS \neq \emptyset.\]
\end{definition}

Network \eqref{EQ:ExampleNetwork} is conservative: The strictly positive vector $(A + 2B + 2C + D + 3E) \in \PS$ is orthogonal to each of the reaction vectors of \eqref{EQ:ExampleNetwork}.
\bigskip

For a reaction network \rnet\   a mixture state is generally  represented by a \emph{composition vector} $c \in \PbarS$, where, for each $s \in \mathscr{S}$, we understand $c_s$ to be the molar concentration of $s$. By a \emph{positive composition} we mean a strictly positive composition --- that is, a composition in $\PS$.

\begin{definition}
\label{DEF:Kinetics}
A \emph{kinetics} $\mathscr{K}$ for a reaction network \rnet\ is an assignment to each reaction $y \to y' \in \scrR$ of a \emph{rate function} $\mathscr{K}_{y \to y'}:\PbarS \to \overline{\mathbb{R}}_+$ such that \[ \mathscr{K}_{y \to y'}(c) > 0 \mbox{ if and only if } \supp y \subset \supp c.\]
\end{definition}
\smallskip

\begin{definition}
\label{DEF:KineticSystem}
A \emph{kinetic system} \kinsys\  is a reaction network \rnet\  taken with a kinetics $\mathscr{K}$ for the network.
\end{definition}
\smallskip

\medskip
Many of the dynamical consequences of network concordance require that the kinetics be \emph{weakly monotonic} or \emph{differentiably monotonic}. Both are formally defined below:

\begin{definition}
\label{def:WeaklyMonotonic}
A kinetics \scrK\   for reaction network \rnet\  is \emph{weakly monotonic} if, for each pair of compositions $c^*$ and $c^{**}$, the following implications hold for each reaction $\rxn \in \scrR$ such that  $\supp y \subset \supp c^*$ and $\supp y \subset \supp c^{**}$:\\

\noindent (i) $\scrK_{\rxn}(c^{**})\; >\; \scrK_{\rxn}(c^{*})\quad \Rightarrow \quad$ there is a species $s \in \supp y$ with $c_s^{**} > c_s^*$.\\

\noindent (ii) $\scrK_{\rxn}(c^{**})\;=\; \scrK_{\rxn}(c^{*})\quad \Rightarrow \quad$ $c_s^{**} = c_s^*$ for all $s \in \supp y$ or else there are species $s, s' \in \supp y$ with $c_s^{**} > c_s^*$ and $c_{s'}^{**} < c_{s'}^*$.\\

\noindent We say that the kinetic system \kinsys\ is weakly monotonic when its kinetics \scrK\, is weakly monotonic.
\end{definition}

\begin{definition}
\label{def:DiffMonotonic}
 A kinetics \scrK \, for a reaction network \rnet\,is \emph{differentiably monotonic} at $c^* \in \PS$ if, for every  reaction $\rxn \in \scrR$, $\scrK_{\rxn}(\cdot)$ is differentiable at $c^*$ and, moreover, for each species $s \in \scrS$,
\begin{equation}
\frac{\partial \scrK_{\rxn}}{\partial c_s}(c^*) \ge 0,
\end{equation}
with inequality holding if and only if $s \in \supp y$. A \emph{differentiably monotonic kinetics} is one that is differentiably monotonic at every positive composition.
\end{definition}

	When a kinetics $\mathscr{K}$ for a reaction network \rnet is differentiably monotonic at $c^* \in \PS$, we denote  by the symbol $\nabla \mathscr{K}_{y \to y'} (c^*)$ the member of $\RS$ defined by
\[
\nabla \mathscr{K}_{y \to y'} (c^*) := \sum_{s \in \mathscr{S}} \frac{\partial{\mathscr{K}_{y \to y'} }}{\partial{c_s}}(c^*)s.
\]

	Note that every mass action kinetics is both weakly monotonic and differentiably monotonic. Recall that a \emph{mass action kinetics} is a kinetics in which the rate function corresponding to each reaction $y \to y'$ takes the form $\scrK_{y \to y'}(c) = k_{y \to y'} c^y, \forall c \in \PbarS$, where $k_{y \to y'}$ is a positive \emph{rate constant} for the reaction \rxn.  

\begin{definition}
\label{DEF:SpeciesFormationRateFunction}
The \emph{species formation rate function} for a kinetic system \kinsys\  with stoichiometric subspace $S$ is the map $f:\PbarS \to S$ defined by

\begin{equation}
\label{EQ:SpeciesFormationRateFunction}
f(c) = \sum_{\rxn \in \scrR} \scrK_{y \to y'} (c) (y' - y). 
\end{equation}
\end{definition}
\smallskip

\begin{definition}
\label{DEF:DifferentialEquation}
The \emph{differential equation} for a kinetic system with species formation rate function $f(\cdot)$ is given by

\begin{equation}
\label{EQ:DifferentialEquation}
\dot{c} = f(c).
\end{equation} 
\end{definition}
\smallskip

	From equations \eqref{EQ:StoichiometricSubspace}, \eqref{EQ:SpeciesFormationRateFunction}, and \eqref{EQ:DifferentialEquation}, it follows that, for a kinetic system \kinsys,  the vector $\dot{c}$ will invariably lie in the stoichiometric subspace $S$ for the network \rnet. Thus,  two compositions $c \in \PbarS$ and $c' \in \PbarS$ can lie  along   the same solution of (\ref{EQ:DifferentialEquation}) only if their difference $c' - c$ lies in $S$. This motivates the following definition: 

\begin{definition}
\label{DEF:StoichiometricallyCompatible}
Let \rnet\  be a reaction network with stoichiometric subspace $S$. Two compositions $c$ and $c'$ in $\PbarS$ are \emph{stoichiometrically compatible} if $c' - c$ lies in $S$.
\end{definition}
\smallskip

	For a network \rnet, the stoichiometric compatibility relation serves to partition $\PbarS$ into equivalence classes  called the  \emph{stoichiometric compatibility classes} for the network.   Thus, the stoichiometric compatibility class containing an arbitrary composition $c$, denoted $(c + S) \cap \PbarS$, is given by

\begin{equation}
\label{EQ:StoichiometricCompatibilityClass}
(c + S) \cap \PbarS = \left\{ c' \in \PbarS: c' - c \in S \right\}.
\end{equation}
The notation is intended to suggest that $(c + S) \cap \PbarS$ is the intersection of $\PbarS$ with the parallel of $S$ containing $c$.  A stoichiometric compatibility class is \emph{nontrivial} if it contains at least one (strictly) positive composition.

\begin{definition}
\label{DEF:Equilibrium}
\noindent An \emph{equilibrium} of a kinetic system \kinsys\  is a composition $c \in \PbarS$ for which $f(c) = 0$. A \emph{positive equilibrium}  is an equilibrium that lies in $\PS$.
\end{definition}
\smallskip

	Because compositions along solutions of \eqref{EQ:DifferentialEquation} are stoichiometrically compatible, one is typically interested in changes in values of the species formation rate function that result from changes in composition that are stoichiometrically compatible.  In particular, if $f(c^*)$ is the value of the species formation rate function at composition $c^*$, then one might be interested in the value of $f(c)$ for a composition $c$ very close to  $c^*$ and  stoichiometrically compatible with it. Thus, for a kinetic system \kinsys\ with stoichiometric subspace $S$, with smooth reaction rate functions, and with species formation rate function  $f:\PbarS \to S$, we will want to work with  the derivative  $df(c^*): S \to S$, defined by

\begin{equation}
\label{EQ:Derivative}
df(c^*)\sigma = \left. \frac{df(c^* + \theta \sigma)}{d\theta} \right|_{\theta = 0}, \quad  \forall \sigma \in S.
\end{equation}

\medskip 
	We say that $c^* \in \PS$ is a \emph{nondegenerate equilibrium} if $c^*$ is an equilibrium and if, moreover, $df(c^*)$ is nonsingular.  An \emph{eigenvalue associated with a positive equilibrium} $c^* \in \PS$ is an eigenvalue of the derivative $df(c^*)$.

\section{Concordant reaction networks}
\label{sec:ConcordantNetworks}
Here we recall the definition of reaction network concordance \cite{shinar_concordant_2011}.  We consider a reaction network \rnet with stoichiometric subspace $S \subset \RS$, and we let $L: \RR \to S$ be the linear map defined by
\begin{equation}
L\alpha = \sum_{\rxn \in \scrR}\alpha_{\rxn}(y' - y).
\end{equation}

\begin{definition} The reaction network \rnet is \emph{concordant} if there do not exist an $\alpha \in \ker L$ and a nonzero $\sigma \in S$ having the following properties:\\

\noindent (i) For each \rxn\; such that $\alpha_{\rxn} \neq 0$, \supp $y$ contains a species $s$ for which $\mathrm{sgn}\, \sigma_s = \mathrm{sgn}\, \alpha_{\rxn}$.\\

\noindent (ii) For each \rxn\; such that $\alpha_{\rxn} = 0$, $\sigma_s = 0$\, for all $s \in \supp y$ or else \supp $y$ contains species $s\, \textrm{and } s'$ for which  $\mathrm{sgn}\, \sigma_s = -\, \mathrm{sgn}\; \sigma_{s'}$, both not zero.
\medskip\noindent\\
A network that is not concordant is \emph{discordant}.
\label{def:Concordant}
\end{definition}

	Note that for a fully open network with species set \scrS\ the stoichiometric subspace coincides with \RS.  The following lemma will prove useful later on: 

\begin{lemma} 
\label{lemma:RemoveRevRxns}
If a fully open network is discordant, it is always possible to choose from each reversible pair of non-degradation reactions at least one (and sometimes both) of the reactions for removal such that the resulting fully open subnetwork is again discordant.
\end{lemma}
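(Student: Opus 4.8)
The plan is to start from a discordance witness $(\alpha, \sigma)$ --- that is, an $\alpha \in \ker L$ and a nonzero $\sigma \in S$ satisfying conditions (i) and (ii) of Definition \ref{def:Concordant} --- and to produce, \emph{using the same} $\sigma$, a modified $\alpha'$ supported on a fully open subnetwork obtained by deleting at least one reaction from each reversible pair of non-degradation reactions. Because a fully open network has $S = \RS$ with its degradation reactions alone spanning $S$, deleting only non-degradation reactions leaves the subnetwork fully open and does not alter the stoichiometric subspace; moreover the linear map for the subnetwork is simply the restriction of $L$ to the surviving reaction coordinates. Hence it suffices to arrange that $\alpha'$ lies in $\ker L$, vanishes on the deleted reactions, and that $(\alpha', \sigma)$ still satisfies (i) and (ii). I would leave $\alpha'$ equal to $\alpha$ on every reaction that is not one of a reversible pair of non-degradation reactions (in particular on every degradation and synthesis reaction).

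The modification is reaction-pair-local. For a reversible pair $y \rightleftarrows y'$ of non-degradation reactions, let $\beta := \alpha_{y \to y'} - \alpha_{y' \to y}$ denote the net flux, whose sole contribution to $L\alpha$ is $\beta(y' - y)$. If $\beta = 0$, delete \emph{both} reactions and set the corresponding components of $\alpha'$ to zero. If $\beta \neq 0$, delete exactly one reaction and assign the survivor the coordinate reproducing the net flux --- namely $\beta$ if $y \to y'$ survives, or $-\beta$ if $y' \to y$ survives. In either case the pair's contribution to $L\alpha'$ equals its contribution to $L\alpha$, so $L\alpha' = L\alpha = 0$ and $\alpha' \in \ker L$. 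On every retained reaction outside these pairs $\alpha'$ agrees with $\alpha$, so (i) or (ii) is inherited verbatim from the original witness, while the deleted reactions impose no constraint whatsoever on the subnetwork. Since every surviving reversible-pair reaction carries a \emph{nonzero} $\alpha'$-coordinate (either $\beta$ or $-\beta$), condition (ii) needs no attention on these pairs.

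The crux is therefore to show, when $\beta \neq 0$, that one can always choose which reaction of the pair to retain so that condition (i) holds for its nonzero coordinate. Assume, after possibly interchanging the roles of $y$ and $y'$, that $\beta > 0$. Retaining $y \to y'$ succeeds precisely when \supp $y$ contains a species $s$ with $\sgn \sigma_s = \sgn \beta = +1$, and retaining $y' \to y$ succeeds precisely when \supp $y'$ contains a species $s$ with $\sgn \sigma_s = -1$. Suppose both fail. Since $\beta = \alpha_{y \to y'} - \alpha_{y' \to y} > 0$, either $\alpha_{y \to y'} > 0$ --- in which case original condition (i) forces a species in \supp $y$ with $\sgn \sigma_s = +1$, contradicting the first failure --- or else $\alpha_{y \to y'} \leq 0$, which together with $\beta > 0$ gives $\alpha_{y' \to y} < 0$, whence original condition (i) forces a species in \supp $y'$ with $\sgn \sigma_s = -1$, contradicting the second failure. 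Thus at least one retention choice preserves (i).

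Assembling these pairwise choices yields the desired $\alpha'$ together with the reduced fully open subnetwork, and $(\alpha', \sigma)$ witnesses its discordance; the $\beta = 0$ case accounts for the ``sometimes both'' clause. The main obstacle is exactly the sign bookkeeping of the third paragraph: one must rule out that consolidating a reversible pair to its net flux forces the retained reaction to carry a sign for which the original witness supplies no matching species in the reactant complex. The resolution is that the \emph{direction} of the net flux automatically selects a reaction whose original coordinate already had the needed sign, so no such obstruction arises.
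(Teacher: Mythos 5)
Your proposal is correct and follows essentially the same route as the paper: keep $\sigma$, consolidate each reversible non-degradation pair to its net flux, delete both reactions when the net flux vanishes, and otherwise retain the one reaction whose new coefficient's sign is matched by a species in its reactant support. The paper achieves the same end by retaining the reaction of larger $|\alpha|$-value (so the net flux inherits its sign) and leaves the sign bookkeeping as ``easy to see''; your third paragraph simply makes that verification explicit.
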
 

\begin{proof} Suppose that a fully open network \rnet is discordant. Then there are $\alpha \in \ker L$ and nonzero $\sigma$ that together satisfy conditions $(i)$ and $(ii)$ in Definition \ref{def:Concordant}. In particular, we have
\begin{equation}
\label{eq:BasicAlphaEqn}
\sum_{\rxn \in \scrR}\alpha_{\rxn}(y' - y) = 0.
\end{equation}

Let $\bar{y} \rightleftarrows \bar{y}'$ be a pair of reversible non-degradation reactions in \scrR, and suppose that $\alpha_{\bar{y} \to \bar{y}'} \neq \alpha_{\bar{y}' \to \bar{y}}$ with the complexes labeled such that $|\alpha_{\bar{y} \to \bar{y}'}| > |\alpha_{\bar{y}' \to \bar{y}}|$. In this case, let $\scrR^{\dagger} := \scrR \setminus \{\bar{y}' \to \bar{y}\}$, let $\alpha^{\dagger}_{\bar{y} \to \bar{y}'} := \alpha_{\bar{y} \to \bar{y}'} - \alpha_{\bar{y}' \to \bar{y}}$, and, for all other \rxn\ in $\scrR^{\dagger}$, let $\alpha^{\dagger}_{\rxn} := \alpha_{\rxn}.$  From \eqref{eq:BasicAlphaEqn} it follows easily that
\begin{equation}
\label{eq:ReducedAlphaEqn}
\sum_{\rxn \in \scrR^{\dagger}}\alpha^{\dagger}_{\rxn}(y' - y) = 0.
\end{equation}

	If, on the other hand, $\alpha_{\bar{y} \to \bar{y}'} = \alpha_{\bar{y}' \to \bar{y}}$, then we can choose $\scrR^{\dagger} := \scrR \setminus \{\bar{y}' \rightleftarrows \bar{y}\}$, and,  for all  \rxn\ in $\scrR^{\dagger}$, we can let $\alpha^{\dagger}_{\rxn} := \alpha_{\rxn}.$ In this case, \eqref{eq:ReducedAlphaEqn} will again obtain.

	In either case, it is easy to see that $\alpha^{\dagger}$, taken with the original $\sigma$, suffices to establish the discordance of the subnetwork associated with the reaction set $\scrR^{\dagger}$.
\end{proof}

\begin{rem} In effect, Lemma \ref{lemma:RemoveRevRxns} tells us that every fully open discordant network with reversible non-degradation reactions has a discordant fully open subnetwork in which no non-degradation reaction is reversible. Note that, apart from minor changes in labels within the reaction nodes (i.e., replacement of $y \rightleftarrows y'$ by \rxn) the Species-Reaction Graph derived from the indicated discordant subnetwork is a subgraph of the Species-Reaction Graph drawn for the original network. As we explain at the beginning of Section \ref{sec:ProofOfConcordanceThm}, that subgraph will satisfy the conditions of Theorem \ref{thm:ConcordanceTheorem} and its corollary if the parent Species-Reaction Graph does. These observations will help us simplify certain arguments that are otherwise complicated by the presence of reversible reactions. 
\end{rem}

\medskip

 \section{Proof of Theorem \ref{thm:ConcordanceTheorem}}
\label{sec:ProofOfConcordanceThm}
The proof will be by contradiction. With this in mind, we assume hereafter the true-chemistry Species-Reaction Graph (SR Graph) for the fully open network under consideration has both attributes of the theorem statement and that, contrary to the assertion of the theorem, the fully open network is discordant.  

	In this case, Lemma \ref{lemma:RemoveRevRxns} tells us that, when the true chemistry has reversible reaction pairs, then each such pair can be replaced by a certain irreversible reaction of the pair (or else removed completely) such that the resulting fully open network is again discordant. The SR Graph for the altered (totally irreversible) true chemistry might have fewer cycles than in the original SR Graph (but never more) because of removal of reversible reaction pairs. Similarly, there might be fewer \emph{orientable} cycles (but never more) than in the original SR Graph as a result of replacement of reversible reaction pairs by single irreversible reactions. Moreover, there might be fewer critical subgraphs (but never more) than in the original SR Graph because of a loss of orientable cycles or because more cycles have acquired compulsory orientations.  (Cycles that were even in the original SR Graph and persist in the new SR Graph remain even.) For all of these reasons, \emph{the SR Graph for the totally irreversible subnetwork of the original ``true" chemical reaction network will, like the SR Graph for the original network, satisfy the requirements of Theorem \ref{thm:ConcordanceTheorem}}.

	This is to say that there is no loss of generality in assuming, for the purposes of contradiction, that there is a ``true" chemical reaction network, containing no reversible reactions, for which the SR Graph satisfies both requirements of Theorem \ref{thm:ConcordanceTheorem} but for which the fully open extension is discordant.

	Hereafter in the proof of Theorem \ref{thm:ConcordanceTheorem}, then, \emph{we assume that all reactions in the ``true" chemistry are irreversible, that the corresponding SR Graph satisfies both conditions of  Theorem \ref{thm:ConcordanceTheorem}, and that, contrary to what the theorem asserts,  the fully open extension of the true chemistry is discordant}. 

	Thus there exist fixed $\alpha \in \ker L$ and nonzero $\sigma \in \RS$ satisfying the requirements of Definition \ref{def:Concordant}. (Recall that the stoichiometric subspace for a fully open network is \RS.) \emph{It will be understood throughout the proof that all references to $\alpha$ and $\sigma$ are relative to this fixed pair, so chosen}. 	
	
	It will be helpful to divide the proof into subsections:

\subsection{Preliminaries} We associate a sign with each species in the following way:  \emph{A species $s \in \scrS$ is positive, negative, or zero} according to whether $\sigma_s$ is positive, negative, or zero. By a \emph{signed species} we mean one that is either positive or negative. Similarly, \emph{a reaction $\rxn \in \scrR$ is positive, negative, or zero} according to whether $\alpha_{\rxn}$ is positive, negative, or zero. A \emph{signed reaction} is one that is either positive or negative.

\begin{rem}
\label{rem:signremark}
	Note in particular, that for any ``degradation reaction" of the kind $s \to 0$, Definition \ref{def:Concordant} requires that sgn $\alpha_{s \to 0}$ = sgn $\sigma_s$, so sgn $s \to 0$ = sgn $s$ for every $s \in \scrS$. (A similar situation obtains for any reaction of the form $ns \to y$, where $n$ is a positive number.) For any ``synthesis reaction" $0 \to s$, Definition \ref{def:Concordant} requires that $\alpha_{0 \to s} = 0$, so such reactions are unsigned.
\end{rem}

\subsection{The sign-causality graph and causal units}
\label{sec:SignCausalityGraph}
 Hereafter we denote by  \scrRst\,  the set of all reactions not of the form \mbox{$s \to 0,\ s \in \scrS$}. That is, \scrRst\, is the set of reactions that are not degradation reactions. Because $\alpha$ is a member of $\ker L$ we can write
\begin{equation}
\sum_{\rxn \in \scrRst}\alpha_{\rxn}(y' - y) - \sum_{s \in \scrS}\alpha_{s \to 0}s = 0.
\end{equation}
Thus, for any particular choice of species $s \in \scrS$, we have
\begin{equation}
\label{eq:alphaeqn}
\sum_{\rxn \in \scrRst}\alpha_{\rxn}(y'_s - y_s) = \alpha_{s \to 0}.
\end{equation}

	Now \emph{suppose that in \eqref{eq:alphaeqn}, the species $s$ is positive.}  From \eqref{eq:alphaeqn} and Remark \ref{rem:signremark} we have
\begin{equation}
\label{eq:alphaeqn>}
\sum_{\rxn \in \scrRst}\alpha_{\rxn}(y'_s - y_s) > 0,
\end{equation}
in which case at least one term on the left must be positive. (At least one term on the left is ``causal" for the inequality.\footnote{When we say that X is causal for the outcome Y, we mean to suggest that X abets Y, not necessarily that X, by itself, determines the outcome Y.}) Terms of this kind can arise in precisely two ways:
\medskip 

\noindent (i) There is a \emph{positive} reaction \rxn\; (i.e., $\alpha_{\rxn}$ is positive) with \mbox{$s \in \supp y'$} (so that $y'_{s} > 0, y_s = 0$). Recall, however, that for $\alpha_{\rxn}$ to be positive, the conditions of Definition \ref{def:Concordant} require that there be a \emph{positive} species $s'$ in $\supp y$.  

	In this case, reaction \rxn \, is ``causal" for the sign of species $s$, while species $s'$ is ``causal" for the sign of reaction \rxn. With this in mind, we write
\begin{equation}
\label{eq:pluspluscausal}
\overset{+}{s'} \quad \overset{y}{\rightsquigarrow} \quad \overset{+}{\framebox{\rxn}} \quad \overset{y'}{\rightsquigarrow}\quad \overset{+}{s}.
\end{equation}

\medskip
\noindent The signs above the species and the reactions indicate their respective signs. The complex labels above the ``causal" arrows ($\rightsquigarrow$) indicate the complex in whose support the adjacent species resides.

\bigskip
\noindent (ii) There is a \emph{negative} reaction \rxn\; (i.e., $\alpha_{\rxn}$ is negative) with \mbox{$s \in \supp y$} (so that $y_{s} > 0, y'_s = 0$). But for $\alpha_{\rxn}$ to be negative, the conditions of Definition \ref{def:Concordant} require that there be a \emph{negative} species $s'$ in $\supp y$.
	As in case (i), reaction \rxn \, is causal for the sign of species $s$, while species $s'$ is causal for the sign of reaction \rxn. This time we write 
\begin{equation}
\label{eq:minuspluscausal}
\overset{-}{s'} \quad \overset{y}{\rightsquigarrow} \quad \overset{-}{\framebox{\rxn}} \quad \overset{y}{\rightsquigarrow}\quad \overset{+}{s}.
\end{equation}

	On the other hand,  \emph{suppose that in \eqref{eq:alphaeqn}, the species $s$ is negative.}  From \eqref{eq:alphaeqn} and Remark \ref{rem:signremark} we have
\begin{equation}
\label{eq:alphaeqn<}
\sum_{\rxn \in \scrRst}\alpha_{\rxn}(y'_s - y_s) < 0,
\end{equation}
in which case at least one term on the left must be negative. (At least one term on the left is causal for the inequality.) Here again, terms of this kind can arise in precisely two ways:
\medskip

\noindent (i)${}^\prime$ \,There is a \emph{negative} reaction \rxn\; (i.e., $\alpha_{\rxn}$ is negative) with \mbox{$s \in \supp y'$} (so that $y'_{s} > 0, y_s = 0$). For $\alpha_{\rxn}$ to be negative, the conditions of Definition \ref{def:Concordant} require that there be a \emph{negative} species $s'$ in $\supp y$.  Here we write

\begin{equation}
\label{eq:minusminuscausal}
\overset{-}{s'} \quad \overset{y}{\rightsquigarrow} \quad \overset{-}{\framebox{\rxn}} \quad \overset{y'}{\rightsquigarrow}\quad \overset{-}{s}.
\end{equation}
\smallskip

\noindent (ii)${}^\prime$\, There is a \emph{positive} reaction \rxn\; (i.e., $\alpha_{\rxn}$ is positive) with \mbox{$s \in \supp y$} (so that $y_{s} > 0, y'_s = 0$). For $\alpha_{\rxn}$ to be positive, the conditions of Definition \ref{def:Concordant} require that there be a \emph{positive} species $s'$ in $\supp y$. We write

\begin{equation}
\label{eq:plusminuscausal}
\overset{+}{s'} \quad \overset{y}{\rightsquigarrow} \quad \overset{+}{\framebox{\rxn}} \quad \overset{y}{\rightsquigarrow}\quad \overset{-}{s}.
\end{equation}
\smallskip

\noindent	The diagrams drawn in \eqref{eq:pluspluscausal}, \eqref{eq:minuspluscausal}, \eqref{eq:minusminuscausal} and \eqref{eq:plusminuscausal} can be viewed as edge-pairs in a bipartite directed graph:
\smallskip

	The \emph{sign-causality graph}, drawn for the network (relative to the $\alpha$, $\sigma$ pair under consideration) is constructed according to the following prescription: The vertices  are the signed species and signed (non-degradation) reactions. An edge $\leadsto$ is drawn from a signed species $s'$ to a signed reaction \rxn\ whenever $s'$ is contained in  $\supp y$ and the two signs agree; the edge is then labeled with the complex $y$. An edge $\leadsto$ is drawn from a signed reaction \rxn\, to a signed species $s$ in either of the following situations: (i) $s$ is contained in $\supp y'$ and the sign of $s$ agrees with the sign of the reaction; in this case the edge carries the label $y'$ or (ii) $s$ is contained in $\supp y$ and the sign of $s$ disagrees with the sign of the reaction; in this case the edge carries the label $y$.  It is understood that the signed species and the signed reactions are labeled by their corresponding signs.

	By a \emph{causal unit} we mean a directed two-edge subgraph of the sign-causality graph of the kind $s' \leadsto R \leadsto s$, where $R$ denotes a reaction. (We will often designate a reaction by the symbol $R$ when there is no need to indicate the reactant and product complexes.) The species $s'$ is the \emph{initiator} of the causal unit $s' \leadsto R \leadsto s$, while $s$ is its \emph{terminator}. It is not difficult to see that the initiator and terminator of a causal unit must be \emph{distinct} species.\footnote{Recall that, by supposition, a species can appear on only one side of a reaction.}

	Causal units are of the four varieties shown in \eqref{eq:pluspluscausal}, \eqref{eq:minuspluscausal}, \eqref{eq:minusminuscausal} and \eqref{eq:plusminuscausal}. Of these, \eqref{eq:pluspluscausal} and \eqref{eq:minusminuscausal} carry distinct complex labels on the two edges. On the other hand, \eqref{eq:minuspluscausal} and \eqref{eq:plusminuscausal} carry identical labels on the edges. By a \emph{c-pair causal unit} we mean a causal unit of the second kind. (As with the Species-Reaction Graph, the term is meant to be mnemonic for \emph{complex pair}.)

\begin{rem} 
\label{rem:SignChangeCPair}
It is important to note that a c-pair causal unit results in a \emph{change} of sign as the edges are traversed from the initiator species to the terminator species. Otherwise, a causal unit is characterized by \emph{retention} of the sign.
\end{rem}

\begin{rem}
\label{rem:underlyingSRGraph}
It should be clear that, apart from the direction $\leadsto$ imparted to its edges, the sign-causality graph corresponding to $\alpha$ and $\sigma$ can be identified with a subgraph of the undirected Species-Reaction Graph (which we will sometimes refer to as the sign-causality graph's ``counterpart" in the Species-Reaction Graph). Moreover, every fixed-direction edge-pair in the SR Graph has direction consistent with that imparted by the $\leadsto$-relation (because each proceeds from a reactant species to an irreversible reaction to a product species).
\end{rem}

\subsection{The sign-causality graph must contain a directed cycle, and all of its directed cycles are even.}
\label{subsec:EvenCycleSec}
 By supposition $\sigma$ is not zero, so there is at least one signed species, say $s_1$. From the discussion in Section \ref{sec:SignCausalityGraph} it is clear that  $s_1$ is the terminator of a causal unit $s_2 \leadsto R_2 \leadsto s_1$, where $s_2$ is distinct from the initiator $s_1$. Because $s_2$ is also signed, it too must be the terminator of a causal unit $s_3 \leadsto R_3 \leadsto s_2$, where the signed species $s_3$ is distinct from $s_2$.  Continuing in this way, we can see that there is a directed sequence of the form
\begin{equation}
\cdots \leadsto s_n \leadsto R_n \leadsto s_{n-1} \leadsto R_{n-1} \cdots \leadsto  s_3 \leadsto R_3 \leadsto s_2 \leadsto R_2 \leadsto s_1, 
\end{equation}
with $s_i \neq s_{i+1}$.  

	Because the number of species is finite, two non-consecutive species in the sequence must in fact coincide, which is to say that \emph{the sign-causality graph must contain a directed cycle}.  Moreover, it is easy to see that each vertex in the sign-causality graph resides in a cycle or else there is a cycle $\leadsto$-upstream from it.

	 With this in mind, we record here some vocabulary that will be useful in the next section: A \emph{source} is a strong component of the sign-causality graph whose vertices have no incoming edges originating at vertices outside that strong component. Clearly, every component of the sign-causality graph contains a source, and every vertex in a source resides in a directed cycle.

	As with the SR Graph, we say that a (not necessarily directed) cycle in the sign-causality graph is \emph{even} if it contains an even number of c-pairs. 

\begin{lemma}
\label{lem:EvenDirectedCycle}
 A (not necessarily directed) cycle in the sign-causality graph that is the union of causal units is even. In particular, every directed cycle in the sign-causality graph is even.
\end{lemma}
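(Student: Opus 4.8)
The plan is to track the signs $\sgn \sigma_s$ of the species as one walks around the cycle, exploiting the dichotomy recorded in Remark \ref{rem:SignChangeCPair}: traversing a c-pair causal unit \emph{flips} the sign of the adjacent species, while traversing any other causal unit \emph{preserves} it. Because the walk eventually returns to its starting species, whose sign is nonzero, the total number of sign-flips encountered---which I will identify with the number of c-pairs in the cycle---must be even.

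First I would set up the decomposition. Write the cycle as $s_1 R_1 s_2 R_2 \cdots s_n R_n s_1$, which is alternating since the sign-causality graph is bipartite. Since the cycle is by hypothesis a union of causal units, and each causal unit consists of two edges meeting at a reaction vertex, and since each reaction vertex $R_i$ of a simple cycle carries exactly two cycle-edges, the causal units in the union must pair up precisely the two cycle-edges at each $R_i$. Thus the causal units are in bijection with the reaction vertices $R_1,\dots,R_n$, the unit at $R_i$ having species endpoints $s_i$ and $s_{i+1}$ (indices mod $n$). I would then note that the unit at $R_i$ is a c-pair causal unit precisely when its two edges carry identical complex labels, which is exactly the condition that the edge-pair at $R_i$ be a c-pair; hence the number of c-pair causal units equals the number of c-pairs of the cycle.

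Next I would establish the sign bookkeeping. Inspecting the four varieties \eqref{eq:pluspluscausal}, \eqref{eq:minuspluscausal}, \eqref{eq:minusminuscausal}, \eqref{eq:plusminuscausal}, the two species of a c-pair causal unit (types \eqref{eq:minuspluscausal} and \eqref{eq:plusminuscausal}) carry opposite signs, while the two species of a non-c-pair causal unit (types \eqref{eq:pluspluscausal} and \eqref{eq:minusminuscausal}) carry the same sign. Crucially, this relationship between the signs of $s_i$ and $s_{i+1}$ depends only on whether the unit is a c-pair and not on which endpoint is the initiator; so even though the cycle need not be directed, each causal unit still constrains its two species in this symmetric way. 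Therefore $\sgn \sigma_{s_{i+1}} = \varepsilon_i\, \sgn \sigma_{s_i}$, where $\varepsilon_i = -1$ if the unit at $R_i$ is a c-pair and $\varepsilon_i = +1$ otherwise. Composing around the cycle gives $\sgn \sigma_{s_1} = (\varepsilon_1 \varepsilon_2 \cdots \varepsilon_n)\, \sgn \sigma_{s_1}$; since $s_1$ is signed, $\sgn \sigma_{s_1} \neq 0$, forcing $\varepsilon_1 \cdots \varepsilon_n = 1$, i.e.\ an even number of the $\varepsilon_i$ equal $-1$. That number is the number of c-pairs of the cycle, so the cycle is even. Finally, the ``in particular'' clause follows at once: a directed cycle is a concatenation of consecutive directed two-edge paths $s_i \leadsto R_i \leadsto s_{i+1}$, each of which is a causal unit, so it is a union of causal units and hence even.

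The main obstacle I anticipate is not the sign arithmetic but the careful verification of the decomposition and labeling claims: that in a union-of-causal-units cycle each reaction vertex hosts exactly one causal unit using both of its cycle-edges (so that no causal unit reaches outside the cycle and none is double-counted), and that ``c-pair causal unit'' coincides exactly with ``c-pair'' of the Species-Reaction Graph counterpart. Once these identifications are pinned down, the parity argument is immediate.
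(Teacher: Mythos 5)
Your proposal is correct and follows essentially the same route as the paper's proof: traverse the cycle, observe that sign changes occur exactly at c-pair causal units (Remark \ref{rem:SignChangeCPair}), and conclude from the return to the starting signed species that the number of such changes---hence of c-pairs---is even. The paper's version is terser, leaving implicit the decomposition and labeling bookkeeping that you spell out, but the underlying parity argument is identical.
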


\begin{proof}
If we traverse the cycle beginning at a species $s^0$ and count the number of species-sign changes when we have returned to $s^0$, that number clearly must be even. But, if the cycle is the union of causal units, the number of sign changes is identical to the number of  c-pair causal units the cycle contains (Remark \ref{rem:SignChangeCPair}). Clearly, every directed cycle in the sign-causality graph is the union of causal units.
\end{proof}

	Because a directed cycle in the sign-causality graph is even, its (orienttable) cycle counterpart in the undirected Species-Reaction Graph (Remark \ref{rem:underlyingSRGraph}) will also have an even number of c-pairs. Because the sign-causality graph must contain a source, which in turn must contain a directed cycle, we now know that \emph{a reaction network is concordant if its Species-Reaction Graph contains no orientable even cycles}. 
\medskip
\begin{rem}
\label{rem:SourceAsCriticalSubgraph}
In fact, \emph{a source in the sign-causality graph, when viewed in the SR Graph, must be a critical subgraph}. That this is so follows from the fact that a source is a strong component of the sign-causality graph and therefore is the union of $\leadsto$-directed cycles. These even cycles, viewed in the SR Graph, have a consistent orientation, the orientation conferred by the directed-cycle $\leadsto$--orientation in the sign-causality graph, which in turn is consistent with any fixed-direction edge-pairs the SR Graph might contain.
\end{rem}
\medskip

	In the next section we begin to consider what happens when the Species-Reaction Graph does contain at least one orientable even cycle (and therefore at least one critical subgraph). We will want to show that if the fully open network under consideration satisfies the hypothesis of Theorem \ref{thm:ConcordanceTheorem} then the very existence of a source in the putative sign-causality graph becomes impossible.\footnote{As a matter of vocabulary, a strong component of a directed graph --- in particular a source --- is, formally, a set of vertices in the graph, but here, less formally, we will also associate the source with the subgraph of the sign-causality graph obtained by joining the vertices of the source with directed edges inherited from the parent graph.}
\bigskip

\subsection{Inequalities associated with a source; stoichiometric coefficients associated with edges in the sign-causality graph} Consider a source in the sign-causality graph having $\scrSo \subset \scrS$ as its set of species nodes and $\scrRo \subset \scrRst$ as its set of reaction nodes. If $s$ is a \emph{positive} species in the source, then  \eqref{eq:alphaeqn>} can be written

\begin{equation}
\label{eq:alphaeqn>two-part}
\sum_{\rxn \in \scrRo}\alpha_{\rxn}(y'_s - y_s) \;\;+ \sum_{\rxn \in \scrRst \setminus \scrRo}\alpha_{\rxn}(y'_s - y_s)> 0.
\end{equation}
\noindent
Now suppose that a term in the second sum on the left, corresponding to reaction $\bar{y} \to \bar{y}'$, is not zero. Because $\bar{y} \to \bar{y}'$ is not a vertex of the source, any edge of the sign-causality graph that connects species $s$ to reaction $\bar{y} \to \bar{y}'$ must point \emph{away} from $s$. Thus, the reaction $\bar{y} \to \bar{y}'$ cannot be causal for the positive sign of species $s$, so the term 
\begin{equation}
\alpha_{\bar{y} \to \bar{y}'}(\bar{y}'_s - \bar{y}_s) \nonumber
\end{equation}
must be negative. This implies that \eqref{eq:alphaeqn>two-part} can obtain only if we have 

\begin{equation}
\label{eq:SourceInequalityPositive}
\sum_{\rxn \in \scrRo}\alpha_{\rxn}(y'_s - y_s) > 0.
\end{equation}
When $s$ is a \emph{negative} species in the source, we can reason similarly to write
\begin{equation}
\label{eq:SourceInequalityNegative}
\sum_{\rxn \in \scrRo}\alpha_{\rxn}(y'_s - y_s) < 0.
\end{equation}

\begin{rem} 
\label{rem:OutOfSourceEdges}
Note that, for a particular $s \in \scrSo$ a nonzero term in \eqref{eq:SourceInequalityPositive} or \eqref{eq:SourceInequalityNegative} might not correspond to any edge of the sign-causality graph (as when, for a particular reaction \rxn,  $s$ is a member of $\supp y'$ and disagrees in sign with $\alpha_{\rxn}$). If $s$ is a positive species, then the term in question must be negative, and hence it can be removed from \eqref{eq:SourceInequalityPositive} without changing the sense of that inequality. Similarly, if  $s$ is a negative species, the term in question is positive and can be removed from \eqref{eq:SourceInequalityNegative} without changing the inequality's sense. \emph{In what follows below, we shall assume that such removals have been made, so that every term in \eqref{eq:SourceInequalityPositive} or \eqref{eq:SourceInequalityNegative} corresponds to an edge in the sign-causality graph.}
\end{rem}
\bigskip

	Recall that a directed edge of the sign-causality graph from a species $s$ to a reaction \rxn\  is always of the form
\begin{equation}
s \quad \overset{y}{\rightsquigarrow} \quad \framebox{\rxn}. \nonumber
\end{equation}
That is, the edge-label of such a species-to-reaction edge is invariably the reactant complex, $y$. Note that species $s$ has a positive stoichiometric coefficient, $y_s$, in that complex. On the other hand, reaction-to-species edges of the sign-causality graph are of two kinds:
\begin{equation}
\framebox{\rxn} \quad \overset{y}{\rightsquigarrow}\quad {s} \quad\quad \textrm{or} \quad \quad \framebox{\rxn} \quad \overset{y'}{\rightsquigarrow}\quad {s}. \nonumber
\end{equation}
In either case, the species $s$ has a positive stoichiometric coefficient (either $y_s$ or $y'_s$) in the edge-labeling complex.

	\emph{Hereafter, for a species-to-reaction edge $s \rightsquigarrow R$ of the sign-causality graph we denote by $e_{s \rightsquigarrow R}$ the (positive) stoichiometric coefficient of species $s$ in the corresponding edge-labeling complex. For a reaction-to-species edge $R \rightsquigarrow s$ we denote by $f_{R \rightsquigarrow s}$ the (positive) stoichiometric coefficient of species $s$ in its edge-labeling complex.}

	For the sign-causality graph source under consideration, we will in fact need a small amount of additional notation: For each species $s \in \scrSo$ we denote by $\scrRo \rightsquigarrow s$ the set of all edges of the source that are incoming to $s$ and by $s \rightsquigarrow \scrRo$ the set of all edges of the source that are outgoing from $s$. In light of Remark \ref{rem:OutOfSourceEdges} and in view of notation we now have available, some analysis will indicate that the inequalities given by \eqref{eq:SourceInequalityPositive} and \eqref{eq:SourceInequalityNegative} can be written as a single system:
\begin{equation}
\label{eq:SourceInequalitySystem}
\boxed{\sum_{\scrR_0 \rightsquigarrow s}f_{R \rightsquigarrow s}|\alpha_R|\; - \sum_{s \rightsquigarrow \scrRo}e_{s \rightsquigarrow R}|\alpha_R| > 0, \quad \forall s \in \scrSo.}
\end{equation}

Our aim will be to show that when the conditions of Theorem \ref{thm:ConcordanceTheorem} obtain, the inequality system \eqref{eq:SourceInequalitySystem} cannot be satisfied. The following remark will play an important role.

\bigskip
\begin{rem}
\label{rem:StoichExpansiveRemark}
 Consider a source in the sign-causality graph, and suppose that

\begin{equation}
\label{eq:SourceDirectedCycle}
s_1 \rightsquigarrow R_1  \rightsquigarrow s_2 \rightsquigarrow R_2 \dots\rightsquigarrow s_{n} \rightsquigarrow R_n \rightsquigarrow s_1
\end{equation}
is one of its  $\rightsquigarrow$-directed (and therefore even) cycles. Note that the  \mbox{$\rightsquigarrow$-orientation} of the cycle also provides an orientation of the cycle's counterpart in the Species-Reaction Graph, for it is consistent with directions of the fixed-direction edge-pairs in the SR Graph. (Remark \ref{rem:underlyingSRGraph})

	\emph{Thus, when condition (i) of Theorem \ref{thm:ConcordanceTheorem} is satisfied, the stoichiometric coefficients of the directed cycle \eqref{eq:SourceDirectedCycle} must satisfy the condition}

\begin{equation}
\label{eq:NotStoichExpansiveCondition}
\frac{ f_{R_1 \rightsquigarrow s_2}f_{R_2 \rightsquigarrow R_3} \dots f_{R_n \rightsquigarrow s_1}}{e_{s_1 \rightsquigarrow R_1}e_{s_2 \rightsquigarrow R_2} \dots e_{s_n \rightsquigarrow R_n}} \leq 1.
\end{equation}

\end{rem}

\bigskip

\subsection{The decomposition of a source into its blocks; the block-tree of a source} Here we draw on graph theory vocabulary that is more-or-less standard \cite{bondy_graph_2010}.\footnote{Our focus is on bipartite graphs, so in the discussion of terminology it will be understood that no vertex is adjacent to itself via a self-loop edge.}   A  \emph{separation} of a connected graph is a decomposition of the graph into two edge-disjoint connected subgraphs, each having at least one edge, such that the two subgraphs have just one vertex in common. If a connected graph admits a separation (in which case it is \emph{separable}), then the common vertex of the separation is called a \emph{separating vertex} of the graph. A graph is \emph{nonseparable} if it is connected and has no separating vertices. A maximal nonseparable subgraph of a graph is called a \emph{block} of the graph. In rough terms, a connected graph is made up of its blocks, pinned together at the graph's separating vertices.

\begin{figure}[ht]
\centering
\includegraphics[scale=.4]{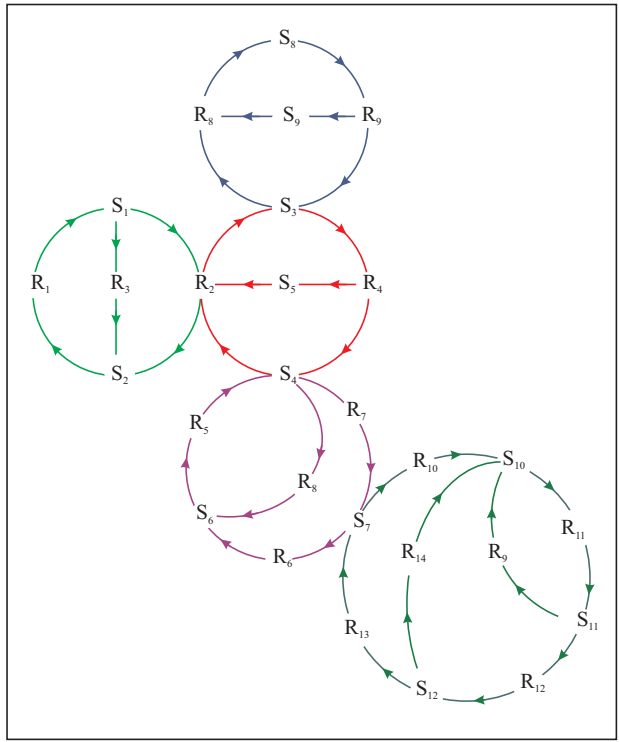}
\caption{A hypothetical source with five  blocks}
\label{fig:BlockIllustration}
\end{figure}

	A source in the sign-causality graph can be decomposed into its blocks, which we call \emph{source-blocks}. Because a source is strongly connected, each of its blocks is strongly connected (and non-separable). To illustrate some of these ideas we show in  Figure \ref{fig:BlockIllustration} a hypothetical source with five blocks. (Although they are inconsequential to the block decomposition, the arrows in the figure are meant to connote the $\rightsquigarrow$-relation.) The separating vertices in the figure are $R_2, S_3,S_4$, and $S_7$. An example of a block is the subgraph having vertices $\{S_1, S_2, R_1, R_2, R_3\}$ and edges $\{S_1 \rightsquigarrow R_2, R_2 \rightsquigarrow S_2, S_2 \rightsquigarrow R_1, R_1 \rightsquigarrow S_1, S_1\rightsquigarrow R_3, R_3 \rightsquigarrow S_2\}$.

	The \emph{block-tree} of a connected graph depicts the way in which the various blocks of the graph are joined at its separating vertices. More precisely, the block-tree \cite{bondy_graph_2010} of a connected graph is a bipartite graph whose vertices are of two kinds: symbols for the graph's blocks and symbols for the graph's separating vertices. If a block contains a particular separating vertex, then an edge is drawn from the block's symbol to the symbol for that separating vertex. In Figure \ref{fig:BlockTree} we show the block-tree for the hypothetical source depicted in Figure \ref{fig:BlockIllustration}. For reasons that will be made clear later on, we have denoted the five blocks in the block-tree by the symbols $RB1,SB1,SB2,RB2$, and $RB3$. (We note that Figures \ref{fig:BlockIllustration} and \ref{fig:BlockTree} are somewhat unrepresentative, for it might happen that three or more source-blocks are adjacent to the same separating vertex.)

\begin{figure}[ht]
\centering
\includegraphics[scale=.35]{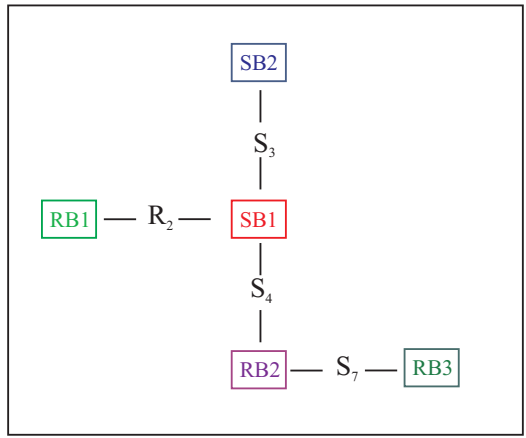}
\caption{The block-tree for Figure \ref{fig:BlockIllustration}}
\label{fig:BlockTree}
\end{figure}

	A block is an \emph{end block} if it contains no more than one separating vertex of the original graph. In this case, the block's symbol is a \emph{leaf} of the block-tree graph. The end-blocks (leaves) in our example are those represented by $RB1, SB2$, and $RB3$.

\subsection{Properties of source-blocks}

	Our aim in this subsection is to show that when condition (ii) of Theorem \ref{thm:ConcordanceTheorem} is satisfied, the internal structure of source-blocks must have a certain degree of simplicity. Condition (ii) will exert itself primarily through the following proposition, which is proved in Appendix \ref{app:ProofOfSourceBlockProposition}.
\bigskip\bigskip

\begin{proposition} 
\label{prop:SourceBlockSimplicity}
Suppose that, for the  reaction network under consideration, the Species-Reaction Graph satisfies condition (ii) of Theorem \ref{thm:ConcordanceTheorem}. Then, in any source-block of the sign-causality graph, at most one of the following can obtain:
\medskip

(i) There is a reaction vertex having more than two adjacent species vertices.
\smallskip

(ii) There is a species vertex having more than two adjacent reaction vertices.
\end{proposition}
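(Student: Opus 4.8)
The plan is to argue by contradiction, exhibiting two even cycles inside a single source-block that have a species-to-reaction intersection, in violation of condition (ii) of Theorem \ref{thm:ConcordanceTheorem}. First I would record the structural facts that make condition (ii) usable \emph{inside} a source-block. A source-block $B$ is strongly connected (because the source is) and nonseparable; viewed in the Species-Reaction Graph, it is a union of $\rightsquigarrow$-directed cycles, each of which is even (Lemma \ref{lem:EvenDirectedCycle}) and all of which carry the consistent orientation inherited from the source (Remark \ref{rem:SourceAsCriticalSubgraph}). Hence $B$ is itself a critical subgraph, and condition (ii) of Theorem \ref{thm:ConcordanceTheorem} forbids any two even cycles of $B$ from having a species-to-reaction intersection. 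I would also note that the underlying graph is bipartite and simple --- a species occurs on only one side of a reaction, so there is at most one edge joining a given species to a given reaction --- and that, having more than one edge, a nonseparable $B$ containing a vertex of degree $\geq 3$ is $2$-connected.

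The key reduction is a parity observation peculiar to the bipartite setting: if two cycles meet in a single path, their intersection is a species-to-reaction path \emph{precisely} when the two endpoints of that path are of opposite type (one a species, one a reaction). In particular, two cycles sharing exactly one edge always have a species-to-reaction intersection (edges join opposite types), whereas two cycles sharing exactly one vertex meet in a species-to-species or reaction-to-reaction ``path'' and so are permitted. Thus it suffices to produce, inside $B$, two even cycles whose overlap is a single edge, or more generally a single path with one species end and one reaction end. To avoid any bookkeeping of c-pairs, I would build the two cycles as $\rightsquigarrow$-\emph{directed} cycles of $B$, which are automatically even by Lemma \ref{lem:EvenDirectedCycle}; strong connectivity of $B$ guarantees a supply of such cycles, and two directed cycles sharing an edge necessarily traverse it in the same fixed direction, so the overlap is consistent.

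The construction must use \emph{both} hypotheses of the proposition, and this is exactly where the ``opposite type'' requirement enters. A single branch vertex cannot sit alone in a $2$-connected graph: branch vertices force theta-like configurations (three internally disjoint paths between two branch vertices), and if both branch vertices were reactions, or both species, the shared paths of the resulting cycle pairs would be reaction-to-reaction or species-to-species paths --- \emph{allowed} by condition (ii). Having a reaction vertex $R$ with at least three species-neighbors \emph{and} a species vertex $t$ with at least three reaction-neighbors is what lets me force a configuration with branch vertices of \emph{opposite} type: starting from a cycle through both $R$ and $t$, each of them has a spare edge leaving that cycle, and $2$-connectivity (Menger's theorem / ear decompositions) lets me reconnect those spare edges so as to produce two cycles sharing a single path that runs from a reaction end to a species end.

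The hard part will be the combinatorial routing: guaranteeing that the spare edges at $R$ and at $t$ can be reconnected to yield two cycles whose overlap is \emph{exactly} a species-to-reaction path --- neither a larger shared region nor an overlap with species-to-species or reaction-to-reaction components, which condition (ii) would tolerate --- while keeping both cycles $\rightsquigarrow$-directed so that evenness comes for free. I expect this to require a case analysis on how the spare edges at $R$ and $t$ re-enter the block (in particular whether their return paths first meet each other, meet $R$, or meet $t$), choosing shortest or otherwise minimal such paths to prevent unwanted sharing, and appealing to strong connectivity to orient every piece coherently. Concentrating the argument at an end-block of the block-tree, where the surrounding structure is simplest, should make this routing manageable.
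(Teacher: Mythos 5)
Your overall strategy is the right one, and it matches the spirit of the paper's proof: assume both a branch reaction vertex and a branch species vertex exist in one source-block, observe that the block (being a union of consistently oriented directed even cycles) is a critical subgraph, and derive a contradiction with condition (ii) by exhibiting two even cycles whose common part is a species-to-reaction path. Your parity observation --- that a shared path violates condition (ii) exactly when its endpoints have opposite types, so that opposite-type branch vertices are what matter --- is also correct and is the real engine of the argument. But the proposal stops exactly where the proof begins. The ``combinatorial routing'' you defer is the entire content of the proposition, and the paper spends two lemmas on ear decompositions (Lemmas \ref{lem:R-to-REarDecomposition} and \ref{lem:S-to-SEarDecomposition}) plus a six-case analysis (Figure \ref{fig:SixCases}) to carry it out. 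Without that, what you have is a plan, not a proof.

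More importantly, the specific shortcut you are counting on --- keeping both offending cycles $\rightsquigarrow$-directed so that evenness ``comes for free'' from Lemma \ref{lem:EvenDirectedCycle} --- does not go through in general. First, in a strongly connected digraph two prescribed vertices need not lie on a common directed \emph{simple} cycle, so ``starting from a cycle through both $R$ and $t$'' is already unjustified; the paper avoids this by inducting on a directed ear decomposition rather than routing through two chosen vertices. Second, and decisively, when the R-type and S-type branch points interleave on a directed cycle with disjoint directed R-to-R and S-to-S ears, the two cycles that exhibit the forbidden S-to-R intersection cannot in general both be chosen directed: some of them traverse arcs of the original cycle against their orientation. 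This is precisely why the paper proves Lemmas \ref{lem:R-to-RThreeCycleEven} and \ref{lem:S-to-SThreeCycleEven}, which establish that the ``third cycle'' formed from two even cycles meeting in an R-to-R or S-to-S path is again even; evenness of the undirected offending cycles in Figure \ref{fig:SixCases} rests on these parity lemmas, not on directedness. Any complete proof along your lines will need an analogue of those lemmas, so the c-pair bookkeeping you hoped to avoid is unavoidable.
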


\medskip
	The proposition provides motivation for the following definition:

\begin{definition}
\label{def:SBlockRBlock}
A source-block in the sign-causality graph is a \emph{species block (S-block}) if each species node is adjacent to precisely two reaction nodes. A source-block in the sign-causality graph is a \emph{reaction block (R-block)} if each reaction node is adjacent to precisely two species nodes.
\end{definition}

	Proposition \ref{prop:SourceBlockSimplicity} tells us that when condition (ii) of Theorem \ref{thm:StrongConcordanceThm} is satisfied, every block within the sign-causality graph source is either an S-block or an R-block (or both in the case that the source-block is simply a single cycle). Note that in Figure \ref{fig:BlockTree} we have labeled the hypothetical source-blocks as $RBn$ or $SBn$ according to whether the corresponding source-block in Figure \ref{fig:BlockIllustration} is an R-block or an S-block.

	We conclude this section with two more propositions. Neither is essential to the proofs of the main theorems of this paper, but they provide some additional and not-so-obvious properties of a sign-causality graph source.

	The following proposition, proved in Appendix \ref{app:ProofOfSourceBlockProposition}, does not presuppose that condition (ii) of Theorem \ref{thm:ConcordanceTheorem} is satisfied. Rather, it tells us about properties of  R-blocks or S-blocks that might exist within a sign-causality-graph source. We already know that every directed cycle is even, but the proposition tells us that \emph{all} cycles within a source's R-blocks and S-blocks are even.

\begin{proposition}
\label{prop:EvenCyclesWithinSBlocksRBlocks} Every (not necessarily directed) cycle that lies within an S-block or an R-block of a sign-causality graph source is even. 
\end{proposition}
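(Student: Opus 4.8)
The plan is to work entirely with the three kinds of edges that can occur in the sign-causality graph and to track how the fixed species signs behave around the cycle. Recall that an edge incident to a reaction $R$ (\rxn) is of exactly one of three types: (A) a species-to-reaction edge $s \rightsquigarrow R$, necessarily labeled by the reactant complex $y$, along which $\sgn s = \sgn R$; (B) a reaction-to-species edge $R \rightsquigarrow s$ labeled by the product complex $y'$, along which $\sgn s = \sgn R$; and (C) a reaction-to-species edge $R \rightsquigarrow s$ labeled by the reactant complex $y$, along which $\sgn s \neq \sgn R$. This trichotomy is exactly the edge data appearing in the four causal units. Since a species occurs on only one side of a reaction, at most one edge joins a given species to a given reaction, so the sign-causality graph is simple. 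Two edges at a common reaction form a c-pair precisely when they carry the same complex label, that is, when both are reactant-labeled (each of type A or C) or both product-labeled (both of type B); equivalently, an edge-pair at $R$ fails to be a c-pair exactly when one edge is of type B and the other of type A or C.

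First I would record the basic parity fact for an arbitrary simple cycle $s_1 R_1 s_2 R_2 \cdots s_m R_m s_1$. The species signs $\sgn s_1, \dots, \sgn s_m$ form a cyclic sequence, so the number of indices $i$ with $\sgn s_i \neq \sgn s_{i+1}$ is even. A short check of the three edge types shows that, at the reaction $R_i$ lying between $s_i$ and $s_{i+1}$, one has $\sgn s_i \neq \sgn s_{i+1}$ if and only if exactly one of the two edges at $R_i$ is of type C (since $\sgn s = \sgn R$ for types A and B but $\sgn s \ne \sgn R$ for type C). Let $N$ denote the number of reaction vertices of the cycle whose edge-pair contains exactly one type-C edge; the previous sentence identifies $N$ with the number of species-sign changes, so $N$ is even. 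This single observation drives both cases.

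For the R-block case I would invoke strong connectivity: a source-block is strongly connected, so every vertex has both an incoming and an outgoing block-edge, and in an R-block each reaction has block-degree two. Hence at each reaction of the cycle the two edges are exactly one incoming (type A) and one outgoing (type B or C), so every reaction edge-pair is $\{A,B\}$ or $\{A,C\}$, and the c-pairs are precisely the $\{A,C\}$ pairs. Each $\{A,C\}$ reaction has exactly one type-C edge and each $\{A,B\}$ reaction has none, so the number of c-pairs equals $N$, which is even; thus every cycle in an R-block is even. For the S-block case the same strong-connectivity argument, applied now to species (each of block-degree two), shows that every species of the cycle has exactly one outgoing edge (type A) and one incoming edge (type B or C). Consequently the number of type-A edges equals the number of species, which equals the number of reactions in the cycle. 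Counting type-A edges instead from the reaction side, where a reaction contributes $2,1,1,0,0,0$ type-A edges according as its edge-pair is $\{A,A\},\{A,C\},\{A,B\},\{B,B\},\{C,C\},\{B,C\}$, and equating the two tallies yields the identity $n_{AA} = n_{BB}+n_{CC}+n_{BC}$, where $n_{XY}$ is the number of reactions of the cycle whose edge-pair is of type $\{X,Y\}$. The number of c-pairs is $n_{AA}+n_{AC}+n_{BB}+n_{CC}$; substituting the identity reduces this, modulo $2$, to $n_{AC}+n_{BC}$, which is exactly $N$ and hence even. Thus every cycle in an S-block is even.

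The only genuine obstacle is the bookkeeping in the S-block case, where reactions are not degree-restricted and so all six edge-pair types must be carried along; the crux there is the counting identity relating $n_{AA}$ to the remaining classes, obtained by equating the species-side and reaction-side counts of type-A edges. Everything else is the parity-of-sign-changes observation of the second paragraph together with the one-incoming/one-outgoing consequence of strong connectivity, which is precisely where the defining hypotheses of an S-block and of an R-block (block-degree two) enter.
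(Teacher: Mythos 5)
Your proof is correct, and it takes a genuinely different and considerably more elementary route than the paper's. The paper disposes of the R-block case essentially as you do (degree-two reaction vertices plus strong connectivity force every cycle to be a union of causal units, so the sign-change count of Lemma \ref{lem:EvenDirectedCycle} applies), but for the S-block case it runs a directed ear decomposition of the block and an induction over the ears, repeatedly invoking Lemma \ref{lem:R-to-RThreeCycleEven} through an iterative construction of intermediate cycles $\scrC_0,\dots,\scrC_p$ whose sub-paths are progressively exchanged until the target cycle is reached. Your argument replaces all of that with a single double count: you classify the cycle's edges into the three types forced by the causal-unit definitions, identify species-sign changes with reaction vertices of the cycle carrying exactly one type-C edge (so their number $N$ is even), and then use the degree-two condition at species vertices together with strong connectivity to count type-A edges from both sides of the bipartition, yielding $n_{AA}=n_{BB}+n_{CC}+n_{BC}$ and hence reducing the c-pair count modulo $2$ to $n_{AC}+n_{BC}=N$. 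I checked the edge-type bookkeeping (in particular that $\{A,C\}$ and $\{A,A\}$ pairs are c-pairs while $\{A,B\}$ and $\{B,C\}$ pairs are not, and that a sign change at a reaction occurs precisely when exactly one incident cycle-edge is of type C), and it is all sound. What the paper's route buys is reuse of machinery (ear decompositions and the three-cycle parity lemmas) that it needs anyway for Proposition \ref{prop:SourceBlockSimplicity}; what yours buys is a short, self-contained, induction-free proof that treats the two block types symmetrically and makes transparent exactly where the degree-two hypothesis of an S-block or R-block enters.
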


	The following proposition is a direct consequence of the two preceding ones:

\begin{proposition}
\label{prop:EvenCyclesWithinSource}
Suppose that, for the  reaction network under consideration, the Species-Reaction Graph satisfies condition (ii) of Theorem \ref{thm:ConcordanceTheorem}. Then in any source of the sign-causality graph every cycle is even.
\end{proposition}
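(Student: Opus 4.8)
The plan is to deduce the statement by combining the two preceding propositions with a single elementary fact from block-decomposition theory. The whole argument hinges on the observation that every cycle of a graph is confined to one of its blocks, so that the local structural information supplied by Propositions \ref{prop:SourceBlockSimplicity} and \ref{prop:EvenCyclesWithinSBlocksRBlocks}---which is stated block by block---transfers immediately to arbitrary cycles of the source. In this sense the proposition really is, as claimed, a direct consequence of the two that precede it.

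First I would fix an arbitrary (not necessarily directed) cycle of the source under consideration and recall the standard fact \cite{bondy_graph_2010} that the blocks of a connected graph partition its edge set, with two edges lying in a common block precisely when they lie on a common cycle. Since a cycle is itself a nonseparable subgraph, all of its edges lie on one common cycle (namely the cycle itself), and therefore belong to a single block. Hence the fixed cycle lies entirely within one of the source-blocks; it cannot straddle two distinct blocks.

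Next I would invoke the standing hypothesis that condition (ii) of Theorem \ref{thm:ConcordanceTheorem} holds. By Proposition \ref{prop:SourceBlockSimplicity}, together with the observation recorded immediately after Definition \ref{def:SBlockRBlock}---that the failure of at least one of alternatives (i) and (ii), combined with the fact that every vertex of a (nonseparable, cycle-containing) source-block has degree at least two, forces each reaction node or each species node to have \emph{exactly} two neighbors---every source-block is then either an S-block or an R-block (or both, when the block is a single cycle). Consequently the fixed cycle lies within an S-block or an R-block.

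Finally I would apply Proposition \ref{prop:EvenCyclesWithinSBlocksRBlocks}, which asserts that every cycle lying within an S-block or an R-block is even. It follows that the fixed cycle is even, and since it was arbitrary, every cycle in the source is even. I expect no genuine obstacle here: the substantive content is already packaged into the two preceding propositions, and the only additional ingredient---that a cycle cannot span two blocks---is a textbook property of the block decomposition.
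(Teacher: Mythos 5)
Your proof is correct and follows exactly the route the paper intends: the paper states Proposition \ref{prop:EvenCyclesWithinSource} as a direct consequence of Propositions \ref{prop:SourceBlockSimplicity} and \ref{prop:EvenCyclesWithinSBlocksRBlocks}, and your only added ingredient---that every cycle lies entirely within a single block of the block decomposition---is precisely the standard fact needed to make that deduction explicit.
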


\subsection{Properties of an end S-block}
\label{subsec:PropertiesOfEndSBlock}
 Recall that an \emph{end} S-block in a source is an S-block that contains at most one separating vertex of the source. We consider properties of such an end S-block, designated ESB. There are three mutually exclusive possibilities: ESB contains no separating vertex at all; it contains just one separating vertex, and it is a reaction vertex; or it contains just one separating vertex, and it is a species vertex. For our purposes the first two possibilities can be treated together, while the third requires other considerations. 

	In fact, we show that when condition (i) of Theorem \ref{thm:ConcordanceTheorem} holds, the first two possibilities cannot obtain; moreover, if the third obtains, we get sharpened information about the inequality in \eqref{eq:SourceInequalitySystem} corresponding to species at the block's separating vertex.

\subsubsection{Possibilities 1 and 2: ESB contains no separating vertex of the source or it contains a separating reaction vertex of the source} 
\label{subsec:ESBPossibilites1and2}
	Because it is strongly connected, ESB must contain a directed (and even) cycle, which we take to be 
\begin{equation}
\label{eq:SourceDirectedCycleESB}
s_1 \rightsquigarrow R_1  \rightsquigarrow s_2 \rightsquigarrow R_2 \dots\rightsquigarrow s_{n} \rightsquigarrow R_n \rightsquigarrow s_1.
\end{equation} 
Because ESB is a species-block and because the block contains no separating species-vertex of the source, each of the species in the block (and in the chosen cycle) is adjacent to at most two reactions of the block. Thus, the inequalities \eqref{eq:SourceInequalitySystem} corresponding to $s_1, s_2, \dots, s_n$ reduce to
\begin{eqnarray}
\label{eq:SBlockInequalitySequence}
f_{R_n \rightsquigarrow s_1}|\alpha_{R_n}| - e_{s_1 \rightsquigarrow R_1}|\alpha_{R_1}| &>& 0, \nonumber\\ 
f_{R_1 \rightsquigarrow s_2}|\alpha_{R_1}| - e_{s_2 \rightsquigarrow R_2}|\alpha_{R_2}| &>& 0,\\
&\vdots&\nonumber\\
f_{R_{n-1} \rightsquigarrow s_n}|\alpha_{R_{n-1}}| - e_{s_n \rightsquigarrow R_n}|\alpha_{R_n}| &>& 0. \nonumber
\end{eqnarray}
(Recall that, for a species-to-reaction edge $s \rightsquigarrow R$ of the sign-causality graph we denote by $e_{s \rightsquigarrow R}$ the stoichiometric coefficient of species $s$ in the corresponding edge-labeling complex. For a reaction-to-species edge $R \rightsquigarrow s$ we denote by $f_{R \rightsquigarrow s}$ the stoichiometric coefficient of species $s$ in its edge-labeling complex.) By sequentially invoking these inequalities from top to bottom, we can deduce from the last of them that
\begin{equation}
\label{eq:SBlockContradiction}
\left(\frac{ f_{R_1 \rightsquigarrow s_2}f_{R_2 \rightsquigarrow R_3} \dots f_{R_n \rightsquigarrow s_1}}{e_{s_1 \rightsquigarrow R_1}e_{s_2 \rightsquigarrow R_2} \dots e_{s_n \rightsquigarrow R_n}} - 1\right)\,|\alpha_{R_n}| > 0.
\end{equation}
However, when condition (i) of Theorem \ref{thm:ConcordanceTheorem} holds, \eqref{eq:SBlockContradiction} cannot obtain: By supposition $|\alpha_{R_n}|$ is positive. Given the $\rightsquigarrow$-orientation in the SR Graph, the even cycle \eqref{eq:SourceDirectedCycleESB} (viewed in the SR Graph) cannot be stoichiometrically expansive, so the first factor on the left of \eqref{eq:SBlockContradiction} is either zero or negative. (Recall \eqref{eq:NotStoichExpansiveCondition}.) Thus, we have a contradiction of \eqref{eq:SBlockContradiction}.

	We conclude, then, that when condition (i) of Theorem \ref{thm:ConcordanceTheorem} holds, an end S-block must contain a separating \emph{species} vertex of the source. We investigate next what happens in that case.

\subsubsection{Possibility 3: ESB contains a separating species vertex of the source} 
\label{subsec:ESBPossibility3}
Here again we consider a directed cycle in ESB, labeled as in  \eqref{eq:SourceDirectedCycleESB}. If  ESB's (unique) separating species vertex is not in the cycle, then we would again obtain a contradiction, just as in $\S$ \ref{subsec:ESBPossibilites1and2}. We suppose, then, that species $s_n$ is the separating vertex. Because ESB is a species-block, all other species vertices of the cycle are adjacent to precisely two reaction vertices, and those are in the cycle. Thus, all inequalities but the last in \eqref{eq:SBlockInequalitySequence} remain unchanged.

	On the other hand, $s_n$ is adjacent not only to $R_{n-1}$ and $R_n$ but also to certain reactions from nearby blocks sharing $s_n$ as a species. Recall that $\scrR_0$ is the set of all reactions in the source under study. We denote by $\scrR_{ESB}$ the set of reactions in ESB and by  $\scrR_0 \setminus \scrR_{ESB}$ the set of all source reactions not in ESB. Moreover, we let $\scrR_0 \setminus \scrR_{ESB} \rightsquigarrow s_n$ and  $s_n \rightsquigarrow \scrR_0 \setminus \scrR_{ESB}$ be the sets of source edges residing outside of ESB that are, respectively, directed toward and away from species $s_n$.

	In this case, the last inequality in \eqref{eq:SBlockInequalitySequence} must be replaced by
\begin{eqnarray}
f_{R_{n-1} \rightsquigarrow s_n}|\alpha_{R_{n-1}}| &-& e_{s_n \rightsquigarrow R_n}|\alpha_{R_n}|\;\; + \nonumber\\
\\ 
\sum_{\scrR_0 \setminus \scrR_{ESB} \rightsquigarrow s_n}f_{R \rightsquigarrow s_n}|\alpha_R|\; &-& \sum_{s_n \rightsquigarrow \scrR_0 \setminus \scrR_{ESB}}e_{s_n \rightsquigarrow R}|\alpha_R|\;\;>\; 0. \nonumber
\end{eqnarray}
Instead of \eqref{eq:SBlockContradiction}, this time we deduce the inequality
\begin{eqnarray}
\label{eq:SBlockAugmentedInequality}
\left(\frac{ f_{R_1 \rightsquigarrow s_2}f_{R_2 \rightsquigarrow R_3} \dots f_{R_n \rightsquigarrow s_1}}{e_{s_1 \rightsquigarrow R_1}e_{s_2 \rightsquigarrow R_2} \dots e_{s_n \rightsquigarrow R_n}} - 1\right)\,|\alpha_{R_n}| &+&\\
\sum_{\scrR_0 \setminus \scrR_{ESB} \rightsquigarrow s_n}f_{R \rightsquigarrow s_n}|\alpha_R|\; &-& \sum_{s_n \rightsquigarrow \scrR_0 \setminus \scrR_{ESB}}e_{s_n \rightsquigarrow R}|\alpha_R|\;>\; 0. \nonumber
\end{eqnarray}
When condition (i) of Theorem \ref{thm:ConcordanceTheorem} obtains, however, the first term on the left cannot be positive for reasons given in $\S$\ref{subsec:ESBPossibilites1and2}. Thus, we arrive at the following inequality, \emph{which relates entirely to source edges disjoint from ESB}:
\begin{equation}
\sum_{\scrR_0 \setminus \scrR_{ESB} \rightsquigarrow s_n}f_{R \rightsquigarrow s_n}|\alpha_R|\; - \sum_{s_n \rightsquigarrow \scrR_0 \setminus \scrR_{ESB}}e_{s_n \rightsquigarrow R}|\alpha_R|\;>\; 0. 
\end{equation}
For species $s_n$ this amounts to a sharpened form of its counterpart in  \eqref{eq:SourceInequalitySystem}, a form we will draw upon later on.
\subsection{Properties of an end R-block}
\label{subsec:PropertiesOfEndRBlock}
	We begin this subsection with an important proposition about R-blocks. A proof is provided in Appendix  \ref{app:RBlockAppendix}.

\begin{proposition}
\label{prop:RBlockMExistence}
Suppose that, in a sign-causality graph source, an R-block has species set $\scrS^*$, and suppose that no directed cycle in the block is stoichiometrically expansive relative to the $\rightsquigarrow$ orientation. Then there is a set of positive numbers $\{M_s\}_{s \; \in \scrS^*}$ such that, for each causal unit $s \rightsquigarrow R \rightsquigarrow s'$ in the block, the following relation is satisfied:
\begin{equation}
f_{R \rightsquigarrow s'}M_{s'} - e_{s \rightsquigarrow R}M_{s} \leq 0.
\end{equation} 
\end{proposition}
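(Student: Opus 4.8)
The plan is to recognize the sought inequalities as a system of difference constraints on the species of the R-block, written multiplicatively, and to produce a solution by a shortest-path (``potential'') construction in which the no-stoichiometrically-expansive hypothesis plays exactly the role of the classical ``no negative cycle'' feasibility condition.

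First I would pin down the combinatorial structure of an R-block. By Definition~\ref{def:SBlockRBlock} every reaction vertex of the block is adjacent to precisely two species vertices. Since the source is strongly connected, each reaction lies on a $\rightsquigarrow$-directed cycle and so has at least one incoming and one outgoing edge; together with the two-neighbor restriction (and the fact that a species cannot be adjacent to a reaction through both an incoming and an outgoing edge, since a species sits on only one side of a reaction) this forces exactly one incoming edge $s \rightsquigarrow R$ and exactly one outgoing edge $R \rightsquigarrow s'$, with $s \neq s'$. Hence each reaction $R$ of the block determines a single causal unit $s \rightsquigarrow R \rightsquigarrow s'$, and contracting each such reaction to one directed edge $s \to s'$ yields a directed graph $G^*$ on vertex set $\scrS^*$ that is strongly connected and whose directed cycles are in exact correspondence with the $\rightsquigarrow$-directed cycles of the block.

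Next I would attach to the edge $s \to s'$ arising from $R$ the positive ``gain'' $g_R := e_{s \rightsquigarrow R}/f_{R \rightsquigarrow s'}$, and for a directed path $P$ in $G^*$ let $\Pi(P)$ denote the product of the gains along $P$. The target relation $f_{R \rightsquigarrow s'}M_{s'} - e_{s \rightsquigarrow R}M_s \le 0$ is equivalent to $M_{s'} \le g_R\,M_s$ along every edge. The crucial point is that, after interchanging numerator and denominator, the hypothesis that no directed cycle is stoichiometrically expansive is exactly the statement
\[
\Pi(C) = \frac{e_{s_1 \rightsquigarrow R_1}\cdots e_{s_n \rightsquigarrow R_n}}{f_{R_1 \rightsquigarrow s_2}\cdots f_{R_n \rightsquigarrow s_1}} \ge 1 \quad \text{for every directed cycle } C,
\]
which is precisely \eqref{eq:NotStoichExpansiveCondition} read with the roles of the $f$'s and $e$'s reversed.

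I would then fix any root $s_0 \in \scrS^*$ and set $M_s := \inf_P \Pi(P)$, the infimum over all directed paths in $G^*$ from $s_0$ to $s$ (so $M_{s_0} = 1$ via the empty path). Strong connectivity guarantees at least one such path, while the cycle bound $\Pi(C) \ge 1$ shows that inserting any cycle into a path never decreases its gain product; hence the infimum is attained at one of the finitely many simple paths and is therefore a well-defined positive number. Finally, for any edge $s \to s'$ coming from $R$, appending that edge to a path realizing $M_s$ gives a path to $s'$, so $M_{s'} \le M_s\,g_R$, i.e.\ $f_{R \rightsquigarrow s'}M_{s'} \le e_{s \rightsquigarrow R}M_s$, which is the asserted inequality. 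The one genuinely delicate step is the structural reduction of the first paragraph --- confirming that within an R-block each reaction sits in a single causal unit with a well-defined incoming $e$-edge and outgoing $f$-edge, so that the contraction to $G^*$ and the cycle correspondence are legitimate; once that is secured, the rest is the standard potential construction, with positivity of the $M_s$ delivered by the non-expansiveness hypothesis.
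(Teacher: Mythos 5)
Your proposal is correct, but it reaches the conclusion by a genuinely different route from the paper. The paper takes logarithms to turn the multiplicative constraints into a linear system $Tq \le z$, invokes Gale's theorem of the alternative to reduce feasibility to the nonexistence of a nonnegative $p \in \ker T^{T}$ with $p \cdot z < 0$, and then proves a separate lemma (via extreme vectors of the pointed cone $\ker T^{T} \cap \PbarU$ and the kernel of the incidence map of a directed graph) that every such $p$ is a positive combination of directed cycle vectors, so that non-expansiveness forces $p \cdot z \ge 0$. You instead give the classical constructive ``potential'' argument: contract each reaction of the R-block to a single directed edge with gain $e_{s \rightsquigarrow R}/f_{R \rightsquigarrow s'}$, observe that non-expansiveness says every directed cycle has gain product at least one, and define $M_s$ as the minimum gain product over directed paths from a fixed root --- the no-decreasing-cycle condition making this minimum attained on a simple path, hence positive, and the triangle inequality for shortest paths delivering the edge constraints. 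The two arguments rest on the same two facts, which you correctly isolate: in an R-block every reaction has exactly one incoming and one outgoing edge (so it carries exactly one causal unit and the contraction to a species digraph is legitimate), and the non-expansiveness hypothesis is precisely the ``no negative cycle'' feasibility condition. Your route is more elementary and self-contained, dispensing with both Gale's duality theorem and the cycle-decomposition lemma; the paper's duality formulation has the advantage of connecting to the Farkas-type machinery used elsewhere in the reaction-network literature (e.g.\ the detailed-balance analysis the authors cite), but nothing in the proposition requires that extra generality.
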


	Now we consider an end R-block in the source under consideration, designated ERB. We denote by \scrSst\; the set of species in ERB. In consideration of the source inequality system \eqref{eq:SourceInequalitySystem}, we restrict our attention to just those inequalities corresponding to species in \scrSst:
\begin{equation}
\label{eq:RBlockInequalitySystem}
\sum_{\scrR_0 \rightsquigarrow s}f_{R \rightsquigarrow s}|\alpha_R|\; - \sum_{s \rightsquigarrow \scrRo}e_{s \rightsquigarrow R}|\alpha_R| > 0, \quad \forall s \in \scrSst.
\end{equation}
We suppose that condition (i) of Theorem \ref{thm:ConcordanceTheorem} is satisfied, so, by virtue of Remark \ref{rem:StoichExpansiveRemark},  we can choose  $\{M_s\}_{s \; \in \scrS^*}$ as in Proposition \ref{prop:RBlockMExistence}. If we multiply each inequality in \eqref{eq:RBlockInequalitySystem} by the corresponding $M_s$ and sum, we get  the single inequality shown in \eqref{eq:RBlockSingleInequality}.
\begin{equation}
\label{eq:RBlockSingleInequality}
\sum_{s \in \scrSst}\left(\sum_{\scrR_0 \rightsquigarrow s}f_{R \rightsquigarrow s}M_s|\alpha_R|\; - \sum_{s \rightsquigarrow \scrRo}M_s e_{s \rightsquigarrow R}|\alpha_R|\right) > 0
\end{equation}

As in $\S$\ref{subsec:PropertiesOfEndSBlock} there are three possibilities: ERB contains no separating vertex; it contains just one separating vertex, and it is a reaction vertex; or it contains just one separating vertex, and it is a species vertex. We will show that neither of the first two possibilities can obtain. Then, as in $\S$\ref{subsec:ESBPossibility3}, we will show that, if the third possibility is realized, the inequality in \eqref{eq:SourceInequalitySystem} corresponding to the species at the separating vertex can be sharpened to considerable advantage.

\subsubsection{Possibilities 1 and 2: ERB contains no separating vertex of the source or it contains a separating reaction vertex of the source} 
\label{subsec:ERBPossibilites1and2} 
Because ERB is an R-block, each reaction is adjacent to precisely two species in the set \scrSst, which is to say that each reaction in ERB sits at the center of precisely one causal unit in ERB. (Recall that ERB is strongly connected.) Moreover, if either of the first two  possibilities should obtain, no species is adjacent to a reaction not in ERB. In these cases, the inequality \eqref{eq:RBlockSingleInequality} can be rewritten. Let \scrU\;be the set of causal units within ERB. Then \eqref{eq:RBlockSingleInequality} can be made to take the form shown in \eqref{eq:RBlockSingleInequality2}.
\begin{equation}
\label{eq:RBlockSingleInequality2}
\sum_{s \rightsquigarrow R \rightsquigarrow s' \in \scrU}(f_{R \rightsquigarrow s'}M_{s'} - e_{s \rightsquigarrow R}M_{s})|\alpha_R|\; >\; 0
\end{equation}
However, \eqref{eq:RBlockSingleInequality2} is contradicted by the attributes given to the set  $\{M_s\}_{s \; \in \scrS^*}$  in Proposition \ref{prop:RBlockMExistence}. 

	Thus, neither of the first two possibilities can obtain. ERB must contain a separating species vertex of the source. We examine next what can be said in that case.

\subsubsection{Possibility 3: ERB contains a separating species vertex of the source} Suppose that ERB contains a species vertex $s^*$ that is a separating vertex of the source under consideration. When condition (i) of Theorem \ref{thm:ConcordanceTheorem} obtains, we can again choose  positive numbers  $\{M_s\}_{s \; \in \scrS^*}$ to satisfy the requirements of Proposition \ref{prop:RBlockMExistence}, and the inequality \eqref{eq:RBlockSingleInequality} remains in force. On the other hand the passage from \eqref{eq:RBlockSingleInequality} to \eqref{eq:RBlockSingleInequality2} becomes confounded by the fact that species vertex $s^*$ is now adjacent to source edges not residing in ERB. We denote by $\scrR_{ERB}$ the set of reactions in ERB and by  $\scrR_0 \setminus \scrR_{ERB}$ the set of all source reactions not in ERB. Moreover, we let $\scrR_0 \setminus \scrR_{ERB} \rightsquigarrow s^*$ and  $s^* \rightsquigarrow \scrR_0 \setminus \scrR_{ERB}$ be the sets of source edges residing outside of ERB that are, respectively, directed toward and away from species $s^*$. As before, we let \scrU\  be the set of causal units in ERB. In this case, \eqref{eq:RBlockSingleInequality} can be recast as \eqref{eq:RBlockSingleInequality3}.
\begin{eqnarray}
\label{eq:RBlockSingleInequality3}
\sum_{s \rightsquigarrow R \rightsquigarrow s' \in \scrU}(f_{R \rightsquigarrow s'}M_{s'} &-& e_{s \rightsquigarrow R}M_{s})|\alpha_R| \;\; + \nonumber\\ \smallskip\\
 M_{s^*}(\sum_{\scrR_0 \setminus \scrR_{ERB} \rightsquigarrow s^*}f_{R \rightsquigarrow s^*}|\alpha_R| &-& \sum_{s^* \rightsquigarrow \scrRo \setminus \scrR_{ERB}} e_{s^* \rightsquigarrow R}|\alpha_R|)  \; >\; 0 \nonumber
\end{eqnarray}  

	Recall that the set   $\{M_s\}_{s \; \in \scrS^*}$ was chosen to satisfy the requirements of Proposition \ref{prop:RBlockMExistence}, so the first sum in \eqref{eq:RBlockSingleInequality3} cannot be positive. Then, because $M_{s^*}$ is positive, we must have
\begin{equation}
\sum_{\scrR_0 \setminus \scrR_{ERB} \rightsquigarrow s^*}f_{R \rightsquigarrow s^*}|\alpha_R| \;\;- \sum_{s^* \rightsquigarrow \scrRo \setminus \scrR_{ERB}} e_{s^* \rightsquigarrow R}|\alpha_R|  \; >\; 0.
\end{equation}
Note that this is a strengthened counterpart of the inequality in \eqref{eq:SourceInequalitySystem} corresponding to species $s^*$, \emph{a counterpart that makes reference only to reactions external to the end reaction block ERB}.

\subsection{The concluding argument: Leaf removal} We begin this subsection with a review of what was established in $\S$\ref{subsec:PropertiesOfEndSBlock} and $\S$\ref{subsec:PropertiesOfEndRBlock}: An end block in a sign-causality graph source, be it an S-block or an R-block,  must contain a \emph{species} separating vertex of the source. This implies that the hypothetical source depicted in Figure \ref{fig:BlockIllustration} (and having block-tree depicted in Figure \ref{fig:BlockTree}) cannot in fact be a source, for it has an end block, corresponding to RB1 in Figure \ref{fig:BlockTree}, that does not contain a separating \emph{species} vertex of the putative source. (Each of the remaining end blocks does contain a separating species vertex.)

	Moreover, we have established that the inequality in \eqref{eq:SourceInequalitySystem} corresponding to a separating species vertex $s^*$ in an end block EB, be it an S-block or an R-block, can be strengthened to a form that makes no mention of reaction inside EB:
\begin{equation}
\label{eq:StrengthenedSourceInequality}
\sum_{\scrR_0 \setminus \scrR_{EB} \rightsquigarrow s^*}f_{R \rightsquigarrow s^*}|\alpha_R| \;\;- \sum_{s^* \rightsquigarrow \scrRo \setminus \scrR_{EB}} e_{s^* \rightsquigarrow R}|\alpha_R|  \; >\; 0.
\end{equation}
Here $\scrR_{EB}$ is the set of reactions in EB, and   $\scrR_0 \setminus \scrR_{EB}$ the set of source reactions not in EB. Moreover,  $\scrR_0 \setminus \scrR_{EB} \rightsquigarrow s^*$ and  $s^* \rightsquigarrow \scrR_0 \setminus \scrR_{EB}$ are the sets of source edges residing outside of EB that are, respectively, directed toward and away from species $s^*$.

	Now if EB is an end block in the source under consideration, we can replace the inequality in \eqref{eq:SourceInequalitySystem} corresponding to the unique separating species vertex $s^*$ in EB with its strengthened form shown in \eqref{eq:StrengthenedSourceInequality}. Thereafter, we can restrict the now-modified inequality system \eqref{eq:SourceInequalitySystem} to just those inequalities corresponding to  $s^*$ and to species residing outside EB. In effect, the resulting reduced system of inequalities corresponds to a subgraph of the source with the end block EB removed, but with $s^*$ retained. Viewed in the source's block tree, this corresponds to removal of a leaf along with that leaf's adjacent species separating vertex (when that separating vertex is not adjacent to a different leaf).

	It is not difficult to see that the arguments in $\S$\ref{subsec:PropertiesOfEndSBlock} or $\S$\ref{subsec:PropertiesOfEndRBlock} can then be applied to any end block EB$^\prime$ of the resulting source subgraph to produce a still smaller but strengthened inequality system corresponding to a still smaller source subgraph, a subgraph resulting from removal of EB$^\prime$. 

	The process can be continued, amounting to a sequential pruning of the source's block-tree, with each stage corresponding to removal of a (perhaps new) leaf and perhaps its adjacent species separating vertex. The process will terminate when only one block remains, with the final source subgraph having no separating vertex at all.  In this case, by arguments of $\S$\ref{subsec:ESBPossibilites1and2} or  $\S$\ref{subsec:ERBPossibilites1and2} the corresponding reduced inequality system cannot be satisfied, and we have a contradiction.

	This completes the proof of Theorem \ref{thm:ConcordanceTheorem}.

\section{Proof of Theorem \ref{thm:StrongConcordanceThm}} 
\label{sec:StrongConcordance}
Here we prove Theorem \ref{thm:StrongConcordanceThm}, which is repeated below:
\bigskip

\noindent
\textbf{Theorem \ref{thm:StrongConcordanceThm}} \emph{A fully open reaction network is strongly concordant if its true-chemistry Species-Reaction Graph has the following properties:}
\medskip

\emph{ (i) Every even cycle is an s-cycle.}
\medskip

\emph{(ii) No two even cycles have a species-to-reaction intersection.}
\bigskip

	We begin by recalling the definition of strong concordance for an arbitrary reaction network \rnet, not necessarily fully open, with stoichiometric subspace $S \subset \RS$. Again we let $L: \RR \to S$ be the linear map defined by
\begin{equation}
L\alpha = \sum_{\rxn \in \scrR}\alpha_{\rxn}(y' - y).
\end{equation}

\begin{definition}
\label{DEF:StronglyConcordant}
A reaction network \rnet with stoichiometric subspace $S$ is \emph{strongly concordant} if there do not exist $\alpha \in \ker L$ and a non-zero $\sigma \in S$ having the following properties:

\begin{enumerate}[(i)]
\item{For each $y \to y'$ such that $\alpha_{y \to y'} > 0$ there exists a species $s \in \supp (y - y')$ for which $\sgn \sigma_s = \sgn (y - y')_s.$}

\item{For each $y \to y'$ such that $\alpha_{y \to y'} < 0$ there exists a species $s \in \supp (y - y')$ for which $\sgn \sigma_s = - \sgn (y - y')_s$.}

\item{For each $y \to y'$ such that $\alpha_{y \to y'} = 0$, either (a) $\sigma_s = 0$ for all $s \in \supp y$, or (b) there exist species $s, s' \in \supp (y - y')$ for which $\sgn \sigma_s = \sgn (y - y')_s$ and $\sgn \sigma_{s'} = -  \sgn (y - y')_{s'}$.}
\end{enumerate}
\end{definition}

	The following lemma will take us just short of a proof of Theorem \ref{thm:StrongConcordanceThm}. Proof of the lemma is given in Appendix \ref{app:ProofofStrongConcordanceLemme}.

\begin{lemma}
\label{lem:StrongConcordanceLemma}
Suppose that a fully open reaction network with true chemistry \newline  \rnet is not strongly concordant. Then there is another true chemistry \rnetbar whose fully open extension is discordant and whose SR Graph is identical to a subgraph of the SR Graph for \rnet, apart perhaps from changes in certain arrow directions within the reaction vertices.
\end{lemma}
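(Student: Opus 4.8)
The plan is to start from a pair $(\alpha,\sigma)$ that witnesses the failure of strong concordance for the given fully open network --- so $\alpha \in \ker L$, $\sigma \neq 0$, and all three conditions of Definition \ref{DEF:StronglyConcordant} hold --- and to manufacture from it a new true chemistry \rnetbar\ whose fully open extension is rendered discordant by the \emph{same} $\sigma$ together with a suitably chosen $\bar\alpha$. The guiding observation is that strong concordance differs from concordance (Definition \ref{def:Concordant}) only in that its conditions are stated in terms of the full reaction-vector support $\supp(y-y') = \supp y \cup \supp y'$ (recall that a species sits on only one side of a reaction), whereas concordance refers solely to the reactant support $\supp y$. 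Thus, for each signed reaction, the strong-concordance witness supplies a species $s \in \supp(y-y')$ with a prescribed sign relationship to $\alpha_{\rxn}$; if that species happens to lie in the product complex, we simply \emph{reverse} the reaction so that $s$ is relocated to the reactant side. Such a reversal is exactly a change of arrow direction within a reaction vertex, which is the latitude the lemma allows.

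Concretely, I would process the reaction vertices of the SR Graph one at a time. A reversible pair $y \rightleftarrows y'$ is first collapsed to its net coefficient $\beta := \alpha_{y\to y'} - \alpha_{y'\to y}$, exactly as in Lemma \ref{lemma:RemoveRevRxns}; an irreversible reaction keeps its coefficient. Call the resulting per-vertex coefficient $\gamma$. If $\gamma = 0$, I \emph{delete} the vertex. If $\gamma \neq 0$, one of the directed reactions at that vertex carries a coefficient whose sign forces, through Definition \ref{DEF:StronglyConcordant}(i) or (ii), a witness species $s \in \supp(y-y')$ consistent with $\sgn\gamma$ --- namely $s \in \supp y$ with $\sgn\sigma_s = \sgn\gamma$, or $s \in \supp y'$ with $\sgn\sigma_s = -\sgn\gamma$. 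In the first case I retain the reaction with $\bar\alpha := \gamma$; in the second I reverse it and set $\bar\alpha := -\gamma$. Either way the reactant complex of the retained reaction contains a species $s$ with $\sgn\sigma_s = \sgn\bar\alpha$, which is precisely condition (i) of Definition \ref{def:Concordant}. The key bookkeeping point is that reversing a reaction and negating its coefficient, as well as deleting a zero-coefficient reaction, leaves the total $\sum_r \bar\alpha_r(\bar y' - \bar y)$ unchanged from $\sum_r \alpha_r(y'-y)$ over the true chemistry; consequently the forced degradation coefficients of the fully open extension, $\bar\alpha_{s\to0}$, coincide (up to the absorbed contribution of any synthesis reactions $0\to s$) with the original ones, so that $\sgn\bar\alpha_{s\to0} = \sgn\sigma_s$ and the degradation reactions satisfy Definition \ref{def:Concordant} automatically, just as in Remark \ref{rem:signremark}.

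It then remains to verify that $(\bar\alpha,\sigma)$ establishes discordance of the fully open extension of \rnetbar: we have $\bar\alpha \in \ker\bar L$ by the invariance of the reaction-vector sum just noted; $\sigma \neq 0$ because the species set is unchanged; condition (i) of Definition \ref{def:Concordant} holds for every retained reaction and every degradation reaction by construction; and condition (ii) is vacuous on the true chemistry because all zero-coefficient reactions have been removed, while on the degradation reactions it reduces to $\sigma_s = 0$ whenever $\bar\alpha_{s\to0}=0$, which again follows from the sign bookkeeping. Finally, the SR Graph of \rnetbar\ is obtained from that of \rnet\ by deleting some reaction vertices (hence a subgraph) and by flipping arrows inside those reaction vertices that were reversed or whose reversible pair was collapsed; the edges, their complex labels, and the stoichiometric coefficients are untouched, which is exactly the conclusion of the lemma.

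I expect the main obstacle to be the zero-coefficient reactions. Under strong concordance, a reaction with $\alpha_{\rxn}=0$ may satisfy Definition \ref{DEF:StronglyConcordant}(iii)(b) by means of two oppositely signed species that are \emph{split} between the reactant and product complexes (one in each), a configuration that no single reorientation can convert into the two oppositely signed reactant species demanded by concordance's condition (ii). The resolution --- and the crux of the argument --- is that such reactions contribute nothing to $\ker L$ and may therefore be discarded outright without disturbing either the kernel membership of $\bar\alpha$ or the degradation-coefficient signs; recognizing that this deletion is simultaneously necessary and harmless is the heart of the proof. A secondary point requiring care is the accounting for synthesis reactions, whose contributions must be shown to merge into the degradation coefficients with the correct sign.
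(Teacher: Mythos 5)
Your proposal is correct and follows essentially the same route as the paper's own proof in Appendix C: collapse each reversible pair to its net coefficient, delete zero-coefficient reactions, reverse a reaction (negating its coefficient) whenever the strong-concordance witness species lies in the product complex, retain the degradation reactions, and absorb any synthesis reactions into the degradation coefficients, then check that the original $\sigma$ together with the resulting $\bar\alpha$ witnesses discordance. You also correctly isolate the two delicate points --- the disposal of $\alpha_{\rxn}=0$ reactions whose condition (iii)(b) witnesses straddle the two complexes, and the sign bookkeeping $\sgn\bar\alpha_{s\to 0}=\sgn\sigma_s$ after merging syntheses --- which is exactly where the paper's construction does its work.
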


	Proof of Theorem \ref{thm:StrongConcordanceThm} proceeds from Lemma \ref{lem:StrongConcordanceLemma} in the following way:

\begin{proof}[Proof of Theorem \ref{thm:StrongConcordanceThm}] Suppose that the SR Graph for a true chemistry \newline \rnet\  satisfies conditions (i) and (ii) of the theorem but that, contrary to what is to be proved, the fully open extension of \rnet\ is not strongly concordant. Note that when the SR Graph of a true chemistry \rnet\; satisfies conditions (i) and (ii) of Theorem \ref{thm:StrongConcordanceThm}, so will any subgraph of that SR Graph. On the other hand, neither of those conditions depends upon the direction of arrows in the reaction nodes. Thus, the SR Graph of the true chemistry \rnetbar\  given by the lemma will also satisfy conditions (i) and (ii). But then, by Corollary \ref{cor:ConcordanceThmCorollary} of Theorem \ref{thm:ConcordanceTheorem}, the fully open extension of \rnetbar\ could not be discordant, and we have a contradiction.
\end{proof}

\section{Extensions of the main theorems to networks that are not fully open}
\label{sec:Extensions}	

	It is the purpose of this Section to elaborate on remarks made in $\S$\ref{subsec:NotFullyOpen}. 

	When the SR Graph drawn for a true chemical reaction network satisfies the  hypotheses of Theorems \ref{thm:ConcordanceTheorem} or \ref{thm:StrongConcordanceThm}, these theorems tell us that the network's fully open extension is concordant (or strongly concordant). In this case, the fully open network inherits the many attributes described in \cite{shinar_concordant_2011} that accrue to all concordant (or strongly concordant) networks. We would like to know circumstances under which these theorems can be extended in their range to give concordance information about networks that are not fully open. 

	More generally, we would like to know conditions under which, for a given network, concordance or strong concordance of the network's fully open extension implies concordance or strong concordance of the network itself.\footnote{We do not preclude the possibility that the original network contain \emph{some} degradation reactions of the form $s \to 0$.} This is a question quite separate from SR Graph considerations. However, when the network satisfies such conditions and its underlying true chemistry SR Graph satisfies the hypothesis of Theorems  \ref{thm:ConcordanceTheorem} or \ref{thm:StrongConcordanceThm}, then the concordance properties ensured by those theorems for the fully open extension will be inherited by the original network.

	In \cite{shinar_concordant_2011} we showed that a \emph{normal} network is concordant (strongly concordant) if its fully open extension is concordant (strongly concordant). Normality is a mild structural condition given in  Definition \ref{def:normal} below. In \cite{craciun_multiple_2010} it was shown that \emph{every weakly reversible network is normal}. 

	Theorems \ref{thm:SmallToLargeConcordanceThm} and \ref{thm:SmallToLargeStronglyConcordanceThm} below tell us that these same results also obtain for the still larger class of \emph{weakly normal} networks. (See Definition \ref{def:weaklynormal}.)  In particular, a weakly normal network is concordant (strongly concordant) if its fully open extension is concordant (strongly concordant). 

	This improvement on results in \cite{shinar_concordant_2011} is helpful in itself, but it also has significance in another direction: We show in $\S$\ref{subsec:normality-weaknormality} that the class of weakly normal networks is synonymous with the very broad class of \emph{nondegenerate} networks (Definition \ref{def:nondegenerate}), which was described in the Introduction. Thus, \emph{any nondegenerate reaction network with a concordant (strongly concordant) fully open extension is itself concordant (strongly concordant)}. Moreover, we also show that, with respect to the possibility of concordance, \emph{degenerate} reaction networks are not worth considering, for they are never concordant.

	In $\S$\ref{subsec:computationaltests} we provide computational tests that serve to affirm network normality and weak normality (or, equivalently nondegeneracy).

\subsection{Network normality, weak normality, and nondegeneracy}
\label{subsec:normality-weaknormality}
\begin{definition}
\label{def:normal}
Consider a reaction network \rnet \  with stoichiometric subspace $S$. The network is \emph{normal} if there  are $q \in \PS$ and $\eta \in \PR$ such that the linear transformation $T: S \to S$ defined by
\begin{equation}
\label{eq:normalityT}
T \sigma := \sum_{y \to y' \in \scrR}\eta_{y \to y'}(y * \sigma)(y' - y)
\end{equation}
is nonsingular, where ``$*$" is the scalar product in \RS \  defined by
\begin{equation}
x*x' := \sum_{s \in \scrS}q_sx_sx'_s. 
\end{equation}
\end{definition}

\begin{rem} As indicated earlier, it was shown in \cite{craciun_multiple_2010} that every weakly reversible network is normal. Reference \cite{craciun_multiple_2010} also contains structural conditions that ensure normality for certain ``partially open" networks that are not weakly reversible.
\end{rem}
\bigskip

	In preparation for the next definition we note that \eqref{eq:normalityT} can be written as
\begin{equation}
\label{eq:normalityT2}
T \sigma := \sum_{y \to y' \in \scrR}(\eta_{y \to y'}y \circ q) \cdot \sigma\ (y' - y),
\end{equation}
where ``$\cdot$" indicates the standard scalar product in \RS.
\medskip
\begin{definition}
\label{def:weaklynormal}
Consider a reaction network \rnet \  with stoichiometric subspace $S$. The network is \emph{weakly normal} if, for each reaction \rxn, there  is a vector $p_{\rxn} \in \PbarS$ with $\supp  p_{\rxn} = \supp y$ such that the linear transformation $\bar{T}: S \to S$ defined by
\begin{equation}
\label{eq:WeaklyNormalTbar}
\bar{T} \sigma := \sum_{y \to y' \in \scrR}p_{y \to y'} \cdot \sigma(y' - y)
\end{equation}
is nonsingular. Here  ``$\cdot$" is the standard scalar product in \RS.
\end{definition}

\begin{rem} A reaction network that is normal is also weakly normal. In fact, if  
$q \in \PS$ and $\eta \in \PR$ satisfy the requirements of Definition \ref{def:normal}, then the choice $p_{\rxn} := \eta_{\rxn}y\circ q, \forall \rxn \in \scrR$ will satisfy the requirements of Definition \ref{def:weaklynormal}. On the other hand, a weakly normal network need not be normal. An example is given by
\begin{equation}
\label{eq:tminuslcounterexample}
C \leftarrow A + B \rightarrow D \rightarrow 2A,
\end{equation}
which is weakly normal but not normal.  Because every weakly reversible network is normal, it follows that every weakly reversible network is also weakly normal.
\end{rem}
\medskip

\begin{rem} For readers familiar with standard language of chemical reaction network theory \cite{feinberg_lectureschemical_1979}, a network cannot be normal if 
\begin{equation}
\label{eq:t-lcondition}
t - \ell - \delta \ > \ 0, 
\end{equation}
where $t$ is the number of terminal strong linkage classes, $\ell$ is the number of linkage classes, and $\delta$ is the deficiency. This follows without much difficulty from \cite{feinberg_chemical_1977}; see also \cite{feinberg_lectureschemical_1979}. For network \eqref{eq:tminuslcounterexample}, $t = 2,\  \ell = 1$, and $\delta = 0$, so normality is precluded by condition \eqref{eq:t-lcondition}. Network \eqref{eq:tminuslcounterexample} illustrates, however, that the same condition does not also preclude weak normality. 
\end{rem}

\begin{definition}
\label{def:nondegenerate}
A reaction network is \emph{nondegenerate} if there exists for it differentiably monotonic kinetics ($\S$\ref{subsec:Definitions}) such that at some positive composition $c^*$ the derivative of the species-formation-rate function $df(c^*):S \to S$ is nonsingular. Otherwise, the network is \emph{degenerate}.
\end{definition}

\begin{proposition}
\label{prop:WeaklyNormalEqualsNondegenerate}
A reaction network is nondegenerate if and only if it is weakly normal.
\end{proposition}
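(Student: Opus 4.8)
The plan is to show that weak normality and nondegeneracy describe \emph{literally the same} family of linear maps $S \to S$, so that the equivalence becomes almost a matter of matching definitions. The bridge is the observation that, for a differentiably monotonic kinetics \scrK\ and a positive composition $c^*$, the derivative of the species-formation-rate function takes the form
\[
df(c^*)\sigma = \sum_{\rxn \in \scrR}\bigl(\nabla\scrK_{\rxn}(c^*)\cdot\sigma\bigr)(y'-y), \qquad \forall \sigma \in S,
\]
which is precisely the form of $\bar T$ in Definition \ref{def:weaklynormal} with $p_{\rxn} := \nabla\scrK_{\rxn}(c^*)$. Moreover, Definition \ref{def:DiffMonotonic} of differentiable monotonicity at $c^*$ says exactly that each gradient $\nabla\scrK_{\rxn}(c^*)$ lies in \PbarS\ and has support equal to $\supp y$ --- which is exactly the restriction placed on the vectors $p_{\rxn}$ in Definition \ref{def:weaklynormal}.

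For the direction ``nondegenerate $\Rightarrow$ weakly normal,'' I would take the differentiably monotonic kinetics and positive composition $c^*$ furnished by Definition \ref{def:nondegenerate}, set $p_{\rxn} := \nabla\scrK_{\rxn}(c^*)$ for every reaction, and read off from the displayed identity that $\bar T = df(c^*)$, which is nonsingular by hypothesis. The requirements $p_{\rxn} \in \PbarS$ and $\supp p_{\rxn} = \supp y$ are automatic from Definition \ref{def:DiffMonotonic}, so the network is weakly normal.

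The more substantive direction is ``weakly normal $\Rightarrow$ nondegenerate,'' where I must \emph{manufacture} a differentiably monotonic kinetics realizing the prescribed gradients. Given vectors $\{p_{\rxn}\}$ as in Definition \ref{def:weaklynormal}, the plan is to use a power-law (generalized mass-action) kinetics: for each reaction set
\[
\scrK_{\rxn}(c) := \prod_{s \in \supp y}(c_s)^{(p_{\rxn})_s}.
\]
Because $\supp p_{\rxn} = \supp y$, every exponent appearing is strictly positive, and a direct computation gives $\partial\scrK_{\rxn}/\partial c_s(c) > 0$ for $s \in \supp y$ and $\equiv 0$ for $s \notin \supp y$ at any positive $c$; one also checks the positivity clause of Definition \ref{DEF:Kinetics}, namely $\scrK_{\rxn}(c)>0$ if and only if $\supp y \subset \supp c$. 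Thus \scrK\ is a bona fide differentiably monotonic kinetics. Evaluating the gradient at the positive composition $c^*$ all of whose components equal one collapses every power to one, yielding $\nabla\scrK_{\rxn}(c^*) = p_{\rxn}$ for every reaction simultaneously. Hence $df(c^*) = \bar T$, which is nonsingular, and the network is nondegenerate.

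The hard part is really just the bookkeeping in the reverse direction: confirming that the single, explicitly chosen power-law kinetics meets every clause of Definitions \ref{DEF:Kinetics} and \ref{def:DiffMonotonic} while reproducing all the $p_{\rxn}$ at one common composition. The choice of the all-ones composition $c^*$ is what lets every reaction be handled at the same point, and the hypothesis $\supp p_{\rxn}=\supp y$ is exactly what secures the strict-positivity-on-the-support requirement; beyond verifying these routine facts, no genuine difficulty remains.
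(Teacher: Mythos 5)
Your proposal is correct and follows essentially the same route as the paper: the forward direction identifies $p_{\rxn} = \nabla\scrK_{\rxn}(c^*)$ so that $\bar T = df(c^*)$, and the reverse direction uses the power-law kinetics $\scrK_{\rxn}(c) = c^{p_{\rxn}}$ evaluated at the all-ones composition, exactly as in the paper's proof. The only difference is that you spell out the routine verification of Definitions \ref{DEF:Kinetics} and \ref{def:DiffMonotonic}, which the paper leaves implicit.
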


\begin{proof} Suppose that a network \rnet \  is nondegenerate. Then there is for the network a kinetics \scrK \ such that, at some composition $c^* \in \PS$, the kinetics is differentiably monotonic and, moreover, the derivative of the species-formation-rate function,  $df(c^*):S \to S$, is nonsingular. In this case, for each $\sigma \in S$ 
\begin{equation}
\label{eq:deriv}
df(c^*)\sigma = \sum_{\rxn \in \scrR}\nabla\scrK_{\rxn}(c^*)\cdot\sigma(y' - y),
\end{equation}
where the components of $\nabla\scrK_{\rxn}(c^*)$ have the non-negativity properties that follow from differentiable monotonicity ($\S$\ref{subsec:Definitions}). By taking 
\[p_{\rxn} = \nabla\scrK_{\rxn}(c^*), \quad \forall \rxn \in \scrR\]
we can establish that the network is weakly normal. 

	On the other hand, suppose that the network \rnet is weakly normal and, in particular, that the set $\{p_{\rxn}\}_{\rxn \in \scrR}$ satisfies the requirements of Definition \ref{def:weaklynormal}.  Let \scrK\  be the (differentiably monotonic) kinetics defined by
\[\scrK_{\rxn}(c) := c^{p_{\rxn}}, \quad \forall \rxn \in \scrR, \] 
and let $c^* \in \PS$ be such that $c^*_s = 1$ for each $s \in \scrS.$ Note that 
\[\nabla\scrK_{\rxn}(c^*) = p_{\rxn}, \quad \forall \rxn \in \scrR.\]
\noindent 
From this, \eqref{eq:deriv}, and the properties of the set $\{p_{\rxn}\}_{\rxn \in \scrR}$ given by Definition \ref{def:weaklynormal} it follows that $df(c^*):S \to S$, is nonsingular, whereupon the network is nondegenerate.                               
\end{proof}

\begin{rem}
Note that in the proof that nondegeneracy implies weak normality we did not actually require that the network be nondegenerate. In particular, we did not require that the kinetics \scrK \  be differentiably monotonic at \emph{all} positive compositions, only that it be differentiably monotonic at \emph{one} composition, $c^*$ (and, of course, that $df(c^*)$ be nonsingular). However, when  these apparently milder conditions are invoked, the network \emph{must} be nondegenerate nevertheless: The seemingly milder conditions result in weak normality, and, as the second part of the proof indicates, weak normality implies nondegeneracy.
\end{rem}
\medskip

	The following proposition indicates that a network that is not weakly normal  (or, equivalently, is degenerate) has no chance of being concordant.

\begin{proposition}
\label{prop:DegeneracyImpliesDiscordance}
A reaction network that is not weakly normal is discordant. Equivalently, every degenerate network is discordant.
\end{proposition}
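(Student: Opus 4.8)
The plan is to prove the logically primary statement — that a network failing to be weakly normal is discordant — and then obtain the ``degenerate'' version immediately from Proposition \ref{prop:WeaklyNormalEqualsNondegenerate}, which identifies degeneracy with the failure of weak normality. I would argue directly rather than by contraposition, exploiting the structural parallel between the map $\bar T$ of Definition \ref{def:weaklynormal} and the discordance conditions of Definition \ref{def:Concordant}. The key observation is that if $\sigma \in S$ lies in $\ker \bar T$ for some admissible family $\{p_{\rxn}\}$, then setting
\[
\alpha_{\rxn} := p_{\rxn} \cdot \sigma, \qquad \forall \rxn \in \scrR,
\]
produces a vector $\alpha \in \RR$ for which $L\alpha = \sum_{\rxn \in \scrR}(p_{\rxn}\cdot\sigma)(y'-y) = \bar T \sigma$. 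Thus $\bar T \sigma = 0$ is \emph{precisely} the statement $\alpha \in \ker L$, and this single identification converts a singularity of $\bar T$ into a candidate discordance witness $(\alpha,\sigma)$.

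First I would unwind the hypothesis. Since the network is not weakly normal, for \emph{every} admissible family $\{p_{\rxn}\}$ (meaning $p_{\rxn}\in\PbarS$ with $\supp p_{\rxn}=\supp y$) the map $\bar T:S\to S$ is singular. In particular it suffices to fix one convenient admissible family — for instance the one with $(p_{\rxn})_s = 1$ for $s\in\supp y$ and $(p_{\rxn})_s=0$ otherwise — and obtain a nonzero $\sigma\in S$ with $\bar T\sigma=0$. Defining $\alpha$ as above gives $\alpha\in\ker L$ together with a nonzero $\sigma\in S$; it then remains only to check that this pair satisfies conditions (i) and (ii) of Definition \ref{def:Concordant}, which is where the positivity and support constraints on $p_{\rxn}$ do the work.

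The verification is the heart of the argument, though it is largely sign bookkeeping. Because $\supp p_{\rxn}=\supp y$ and $(p_{\rxn})_s>0$ there, we have $\alpha_{\rxn}=\sum_{s\in\supp y}(p_{\rxn})_s\sigma_s$, a sum with strictly positive weights. For condition (i): if $\alpha_{\rxn}>0$ then some summand is positive, forcing an $s\in\supp y$ with $\sgn\sigma_s=\sgn\alpha_{\rxn}$, and the case $\alpha_{\rxn}<0$ is symmetric. For condition (ii): if $\alpha_{\rxn}=0$ then a positively weighted sum vanishes, so either every $\sigma_s$ with $s\in\supp y$ is zero, or else the nonzero terms cannot all share one strict sign, yielding $s,s'\in\supp y$ with $\sgn\sigma_s=-\sgn\sigma_{s'}$, both nonzero. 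These are exactly the two defining clauses of discordance, so $(\alpha,\sigma)$ witnesses it and the network is discordant.

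I do not anticipate a genuine obstacle: the proof is short once one spots the correspondence $\alpha_{\rxn}=p_{\rxn}\cdot\sigma$. The only point requiring care is ensuring that at least one admissible family $\{p_{\rxn}\}$ actually exists — so that the phrase ``every admissible family yields a singular $\bar T$'' has content — and the indicator choice above supplies one; after that, the sign analysis of conditions (i) and (ii) is forced by the strict positivity of the weights, and the equivalence with the statement about degenerate networks is immediate from Proposition \ref{prop:WeaklyNormalEqualsNondegenerate}.
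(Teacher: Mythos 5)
Your proposal is correct and takes essentially the same route as the paper: the paper picks the admissible family $p_{\rxn}=y$, extracts a nonzero $\sigma^*\in\ker\bar T$, sets $\alpha_{\rxn}=y\cdot\sigma^*$, and notes that this pair witnesses discordance. Your choice of the indicator of $\supp y$ instead of $y$ itself is an immaterial variation (any strictly positive weights on $\supp y$ make the sign bookkeeping go through), and your explicit verification of conditions (i) and (ii) just fills in what the paper leaves to the reader.
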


\begin{proof}
	Suppose that a reaction network \rnet \ is not weakly normal (and, in particular, is not normal). From Definition \ref{def:weaklynormal} it follows that, for the special choice $p_{\rxn} = y, \; \forall \rxn \in \scrR$,\  the corresponding map  $\bar{T}: S \to S$ given by \eqref{eq:WeaklyNormalTbar} must be singular. This is to say that there is a nonzero $\sigma^* \in S$ such that
\begin{equation}
 \sum_{y \to y' \in \scrR}y \cdot \sigma^*(y' - y) = 0. \nonumber
\end{equation} 
Now let $\alpha \in \RR$ be defined by $\alpha_{\rxn} = y\cdot \sigma^*, \; \forall \rxn \in \scrR$. Then, in view of Definition \ref{def:Concordant}, the pair consisting of $\alpha$ and $\sigma^*$ serve to establish discordance of the network under consideration. 
\end{proof}

\subsection{Concordance of a network and of its fully open extension}
\label{subsec:ThmsOnSmallToLarge}
	The following theorems about weakly normal networks amount to straightforward extensions of theorems in \cite{shinar_concordant_2011} about normal networks; the proofs are almost identical. Although these theorems give the former ones a somewhat greater range, their main interest lies in the fact that they can be stated in terms of the more tangible, but equivalent, notion of network nondegeneracy.

\begin{theorem}
\label{thm:SmallToLargeConcordanceThm}
A weakly normal (or, equivalently, nondegenerate) network is concordant if its fully open extension is concordant. In particular, a weakly reversible network is concordant if its fully open extension is concordant.
\end{theorem}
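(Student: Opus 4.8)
The plan is to prove the contrapositive: \emph{if a weakly normal network is discordant, then its fully open extension is discordant}. This suffices, since the hypothesis is that the fully open extension is concordant, and weak normality is the same as nondegeneracy by Proposition \ref{prop:WeaklyNormalEqualsNondegenerate} (the ``in particular'' clause then follows because every weakly reversible network is weakly normal). So I would fix a discordance witness for the original network --- vectors $\alpha \in \ker L$ and nonzero $\sigma \in S$ satisfying (i)--(ii) of Definition \ref{def:Concordant} --- and manufacture from it a witness for the fully open extension.

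The organizing device is the family of maps $T_p\sigma' := \sum_{\rxn}(p_{\rxn}\cdot\sigma')(y'-y)$, indexed by assignments $p=\{p_{\rxn}\}$ of nonnegative vectors with $\supp p_{\rxn}=\supp y$. The elementary but crucial fact is that, for \emph{any} $\sigma'$, the numbers $\beta_{\rxn}:=p_{\rxn}\cdot\sigma'$ automatically satisfy conditions (i)--(ii) of Definition \ref{def:Concordant} relative to $\sigma'$: a nonzero $p_{\rxn}\cdot\sigma'$ is a positive combination of the $\sigma'_s$, $s\in\supp y$, so some such $\sigma'_s$ shares its sign, while $p_{\rxn}\cdot\sigma'=0$ forces $\sigma'$ to vanish on $\supp y$ or to change sign there. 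This is precisely the property of the normality transformation exploited in \cite{shinar_concordant_2011}; the weak-normality vectors of Definition \ref{def:weaklynormal} share it, which is why the argument carries over almost verbatim.

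The fully open extension adjoins a degradation reaction $s\to 0$ for each $s$, with reaction vector $-s$ and singleton reactant support $\{s\}$, and its stoichiometric subspace is all of \RS. I would look for a witness $(\bar\alpha,\bar\sigma)$ of the shape: put $\bar\alpha_{\rxn}=p_{\rxn}\cdot\bar\sigma$ on the original reactions, so (i)--(ii) hold there for free, and let the requirement $\bar\alpha\in\ker\bar L$ \emph{determine} the degradation coefficients as $\bar\alpha_{s\to 0}=(T_p\bar\sigma)_s$. Because a degradation reaction has singleton reactant support, Definition \ref{def:Concordant} requires exactly that $\sgn (T_p\bar\sigma)_s=\sgn\bar\sigma_s$ for every $s$. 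Thus the construction succeeds the moment one can choose $p$ from the weak-normality data together with a nonzero $\bar\sigma$ for which $T_p\bar\sigma$ sign-matches $\bar\sigma$; taking all degradation rates equal, this is the same as requiring that some $T_p$ (restricted to $S$) carry a \emph{positive real eigenvalue}, the eigenvector serving as $\bar\sigma$ and the eigenvalue as the common rate.

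The production of such an eigenvalue is the step I expect to be the main obstacle, and it is here that both hypotheses are genuinely used. Discordance yields a choice $p^0$ realizing $\alpha_{\rxn}=p^0_{\rxn}\cdot\sigma$, for which $T_{p^0}\sigma=L\alpha=0$, so $T_{p^0}$ carries $0$ as an eigenvalue with eigenvector $\sigma$; nondegeneracy yields a choice $p^1$ with $T_{p^1}$ nonsingular on $S$. Since the admissible assignments form a convex cone and $p^1-p^0$ is supported within the prescribed supports, the segment $p^0+\epsilon(p^1-p^0)$ stays admissible for small $\epsilon$ of \emph{either} sign, and I would track the eigenvalue emanating from $0$: to first order it leaves the origin with derivative $(\ell\cdot T_{p^1}\sigma)/(\ell\cdot\sigma)$, where $\ell$ is the associated left null vector. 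Choosing the sign of $\epsilon$ then steers a simple such eigenvalue onto the positive real axis, \emph{provided} $\ell\cdot\sigma\neq 0$ (simplicity) and $\ell$ is not orthogonal to the subspace $\{T_p\sigma : p \text{ admissible}\}$ --- a subspace precisely because $\alpha$ lies in the relative interior of the sign-compatible cone, which is what gives the perturbation room to move. Discharging these two genericity conditions --- excluding the degenerate alignments by invoking surjectivity of $T_{p^1}$, and if necessary replacing $\sigma$ by a witness with a smaller zero-set --- is the real work; note that a concordant network furnishes no singular $T_{p^0}$ to start from, and a degenerate one furnishes no nonsingular $T_{p^1}$ to push with, so both hypotheses are essential (consistent with Proposition \ref{prop:DegeneracyImpliesDiscordance}). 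With a positive real eigenvalue in hand, the witness $(\bar\alpha,\bar\sigma)$ built above contradicts concordance of the fully open extension, completing the proof.
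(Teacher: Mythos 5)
Your reduction is sound as far as it goes: recasting discordance of the network as singularity on $S$ of some admissible map $\bar{T}_p\sigma' := \sum_{\rxn}(p_{\rxn}\cdot\sigma')(y'-y)$, recasting discordance of the fully open extension as the existence of admissible $p$, a positive diagonal $D$, and a nonzero $x$ with $\bar{T}_p x = Dx$, and noting that the sign conditions of Definition \ref{def:Concordant} then hold automatically, is exactly the right dictionary, and it is how the argument in \cite{shinar_concordant_2011} (to which this paper defers for the proof) is set up. The gap is in the one step that actually uses both hypotheses: producing the witness. You specialize $D$ to a scalar $\lambda I$, so the task becomes finding an admissible $p$ for which $\bar{T}_p|_S$ has a positive real eigenvalue, and you propose to obtain it by first-order perturbation of the zero eigenvalue of $\bar{T}_{p^0}|_S$ along the segment toward $p^1$. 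As a piece of linear algebra this step is simply not valid: the eigenvalue $0$ of $\bar{T}_{p^0}|_S$ may have algebraic multiplicity greater than one, in which case the perturbed eigenvalues behave like fractional powers of $\epsilon$ and typically leave the real axis (if $\bar{T}_{p^0}|_S$ were $0$ on a two-dimensional $S$ and $\bar{T}_{p^1}|_S$ a rotation, every point of the segment would have purely imaginary spectrum); even for a simple zero eigenvalue the coefficient $\ell\cdot\bar{T}_{p^1}\sigma/(\ell\cdot\sigma)$ can vanish, and nothing you invoke (surjectivity of $\bar{T}_{p^1}$, shrinking the zero set of $\sigma$) controls the multiplicity or the left null vector $\ell$. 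These are not genericity conditions that the hypotheses discharge; they are the theorem. Moreover, discarding nonscalar $D$ at the outset may make the task impossible even where the conclusion holds.

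The argument the paper relies on avoids spectra entirely. Expand $\det(\bar{T}_p - D)$ multilinearly in the independent positive variables $(p_{\rxn})_s$ and $D_{ss}$: the coefficients are, up to the sign $(-1)^{n-|\scrS'|}$, the minors $\det\left[(y'_{\rho(s)}-y_{\rho(s)})_{s'}\right]_{s,s'\in\scrS'}$ over subsets $\scrS'\subseteq\scrS$ and assignments $\rho$ with $s\in\supp y_{\rho(s)}$. Concordance of the fully open extension forces every nonzero such coefficient to carry the sign $(-1)^{|\scrS'|}$, since otherwise a scaling of the variables changes the sign of the determinant and hence produces a zero, i.e., a fully open discordance. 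The terms with $|\scrS'|=r$ assemble into $\det(\bar{T}_p|_S)$ as a sum of same-signed contributions, and weak normality is precisely the statement that at least one of those contributions is nonzero (compare Proposition \ref{prop:PMatrixTest}); hence $(-1)^r\det(\bar{T}_p|_S)>0$ for every admissible $p$, which is concordance. Both hypotheses thus enter through the coefficients of a single multilinear polynomial, not through eigenvalue tracking.
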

\medskip
\begin{theorem}
\label{thm:SmallToLargeStronglyConcordanceThm}
A weakly normal (or, equivalently, nondegenerate) network is  strongly concordant if its fully open extension is strongly concordant. In particular, a weakly reversible network is strongly concordant if its fully open extension is strongly concordant.
\end{theorem}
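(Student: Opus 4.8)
The plan is to prove the contrapositive, following the template already used for the ordinary-concordance statement, Theorem~\ref{thm:SmallToLargeConcordanceThm}: assuming that a weakly normal network $\{\scrS,\scrC,\scrR\}$ is \emph{not} strongly concordant, I would manufacture from a strong-discordance witness for the network a strong-discordance witness for its fully open extension, contradicting the hypothesis that the extension is strongly concordant. Since the argument is advertised to be ``almost identical'' to the normal-case proof in \cite{shinar_concordant_2011}, the real content is to confirm that nothing beyond weak normality is ever invoked, i.e.\ nothing beyond the bare existence of the vectors $p_{\rxn}$ and the nonsingular map $\bar T$ of Definition~\ref{def:weaklynormal} (equivalently, by Proposition~\ref{prop:WeaklyNormalEqualsNondegenerate}, nothing beyond nondegeneracy). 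The final clause is then immediate: by \cite{craciun_multiple_2010} every weakly reversible network is normal, hence weakly normal, so the main statement applies to it.

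Concretely, first I would fix a pair $(\alpha,\sigma)$ with $\alpha\in\ker L$ and nonzero $\sigma\in S$ satisfying the three conditions of Definition~\ref{DEF:StronglyConcordant} for the original network, and pass to the fully open extension, whose reaction map I denote $\bar L$ and whose stoichiometric subspace is all of \RS. The key structural observation is that $\ker\bar L$ is governed by the true-reaction coefficients alone: writing $\bar\alpha|_{\scrR}$ for the restriction to the original reactions, membership in $\ker\bar L$ forces $\bar\alpha_{s\to 0}=(L\,\bar\alpha|_{\scrR})_s$ for each adjoined degradation reaction, with the original coefficients left free. I would take $\bar\sigma=\sigma\in\RS$. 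A short check of Definition~\ref{DEF:StronglyConcordant} on a degradation reaction $s\to 0$ — whose reactant complex is $\omega_s$, so that $\supp(y-y')=\{s\}$ — shows that its three conditions collapse to the single requirement $\sgn\bar\alpha_{s\to 0}=\sgn\sigma_s$. Hence the whole problem reduces to producing $\bar\alpha|_{\scrR}$ such that (a) $L\,\bar\alpha|_{\scrR}$ matches $\sigma$ \emph{exactly} in sign, component by component (this disposes of every degradation reaction at once), and (b) the three sign conditions of strong concordance hold, relative to $\sigma$, on every original reaction.

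This is where weak normality enters, and it is the only place it is needed; the difficulty is that the two natural ways of generating true-reaction coefficients each secure only one of (a),(b). On the one hand, choosing $p_{\rxn}\in\PbarS$ with $\supp p_{\rxn}=\supp y$ so that $\bar T$ is nonsingular, solving $\bar T\mu=\sigma$ for $\mu\in S$, and setting $\gamma_{\rxn}:=p_{\rxn}\cdot\mu$ yields $L\gamma=\bar T\mu=\sigma$, which meets requirement (a) exactly but gives no control over the signs of $\gamma_{\rxn}$ on the true reactions. On the other hand, coefficients of the form $p_{\rxn}\cdot v$ with $v\in S$ supported inside $\supp\sigma$ exploit the identity $\supp p_{\rxn}=\supp y$: on any reaction for which $\sigma$ vanishes on $\supp y$ — in particular on the delicate case-(iii)(a) reactions — such a coefficient is automatically zero, so condition (iii)(a) is left undisturbed; but now the relevant image is $\bar T v$, which matches $\sigma$ only after $\bar T$ is applied and so need not match $\sigma$ in sign.

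The hard part will be exactly this reconciliation: requirement (a) demands that the true-reaction coefficients push $L\,\bar\alpha|_{\scrR}$ into the precise sign-cone of $\sigma$ (matching zeros included), while requirement (b) constrains those same coefficients reaction by reaction, and a single choice must do both. I would seek the witness inside the affine family $\{\gamma:L\gamma=\sigma\}$ — which guarantees (a) — and then use the coset freedom $\ker L$ together with the original witness $\alpha$ (whose signs already certify conditions (i),(ii) and case (iii)(b) wherever $\sigma$ straddles both signs on $\supp(y-y')$) and the support identity $\supp p_{\rxn}=\supp y$ (which quarantines the case-(iii)(a) reactions, where $\sigma$ carries no usable sign) to secure (b). Nonsingularity of $\bar T$ is precisely what makes the sign-cone of $\sigma$ reachable on the degradation side. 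Verifying that these ingredients combine into one admissible $\bar\alpha|_{\scrR}$ is the crux, and it is carried out just as in \cite{shinar_concordant_2011}, the three strong-concordance conditions being tracked in place of the two ordinary-concordance conditions; beyond this the bookkeeping is identical to that of Theorem~\ref{thm:SmallToLargeConcordanceThm}. The resulting contradiction with strong concordance of the fully open extension then completes the proof.
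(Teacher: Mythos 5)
Your framing is consistent with what the paper actually does here: the paper supplies no proof of Theorem \ref{thm:SmallToLargeStronglyConcordanceThm} beyond the remark that it is a ``straightforward extension'' of the normal-network result in \cite{shinar_concordant_2011} with an ``almost identical'' proof, and your reductions are all correct. In particular: the contrapositive set-up; the observation that for a degradation reaction $s \to 0$ the three conditions of Definition \ref{DEF:StronglyConcordant} collapse to $\sgn \bar\alpha_{s \to 0} = \sgn \sigma_s$; the fact that membership in the kernel of the fully open network's map forces $\bar\alpha_{s \to 0} = (L\,\bar\alpha|_{\scrR})_s$; the identification of weak normality --- i.e., only the existence of the $p_{\rxn}$ and the nonsingularity of $\bar T$ --- as the sole extra ingredient; and the disposal of the weakly reversible case via \cite{craciun_multiple_2010}.

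The genuine gap is that the one step carrying all of the mathematical content --- exhibiting a single $\bar\alpha|_{\scrR}$ that simultaneously satisfies (a) sign-compatibility of $L\,\bar\alpha|_{\scrR}$ with $\sigma$, component by component including zeros, and (b) the three sign conditions of strong concordance on every true reaction --- is never carried out; you name it as ``the crux'' and ``the hard part'' and defer it wholesale to \cite{shinar_concordant_2011}. Moreover, the two candidate constructions you exhibit are each shown, by your own analysis, to fail one of the two requirements: $\gamma_{\rxn} = p_{\rxn}\cdot \bar T^{-1}\sigma$ achieves $L\gamma = \sigma$ but gives no control over the reaction-wise signs, while the original witness $\alpha$ controls those signs but has $L\alpha = 0$, which would force every $\bar\alpha_{s\to 0}$ to vanish and hence $\sigma = 0$. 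No mechanism is given for combining them, and the obvious ones do not close: on a reaction whose coefficient is pinned to $0$ (case (iii)(a) with no admissible nonzero sign on $\supp(y-y')$) the vector $\bar T^{-1}\sigma$ need not vanish on $\supp y$, whereas a $v$ supported inside $\supp \sigma$ --- which would quarantine those reactions --- need not have $\bar T v$ sign-compatible with $\sigma$, since nonsingularity of $\bar T$ says nothing about where $\bar T^{-1}$ sends the sign-cone of $\sigma$. Resolving exactly this tension is the theorem; it is also not evident that the resolution can keep $\bar\sigma = \sigma$ fixed, so that commitment may itself be a wrong turn. As it stands, the proposal is an accurate account of what must be proved, together with a correct check that weak normality (rather than normality) suffices for the set-up, but it is not a proof.
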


	Taken together, Proposition \ref{prop:WeaklyNormalEqualsNondegenerate}, Proposition \ref{prop:DegeneracyImpliesDiscordance}, Theorem \ref{thm:SmallToLargeConcordanceThm}, and Theorem \ref{thm:SmallToLargeStronglyConcordanceThm} tell us that, in the class of networks that have a concordant (strongly concordant) fully open extension, the concordant (strongly concordant) ones are \emph{precisely} the nondegenerate ones:

\begin{corollary}
Consider a reaction network that has a concordant (strongly concordant) fully open extension. Then the original network is concordant (strongly concordant) if and only if it is nondegenerate.
\end{corollary}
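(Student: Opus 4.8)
The plan is to obtain this Corollary as a direct synthesis of the four results that immediately precede it --- Propositions \ref{prop:WeaklyNormalEqualsNondegenerate} and \ref{prop:DegeneracyImpliesDiscordance} together with Theorems \ref{thm:SmallToLargeConcordanceThm} and \ref{thm:SmallToLargeStronglyConcordanceThm} --- so that no genuinely new machinery is needed. I would fix a reaction network whose fully open extension is concordant (respectively, strongly concordant) and prove the two implications of the biconditional separately, handling the concordance and strong-concordance versions in parallel, since the arguments differ only in which of the two parenthetical theorems or propositions is invoked.

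For the forward implication I would argue by contraposition: if the original network is degenerate, then by Proposition \ref{prop:DegeneracyImpliesDiscordance} it is discordant and hence not concordant. Thus every concordant network is nondegenerate, and --- crucially --- this half of the argument makes no use of the hypothesis on the fully open extension. For the strong-concordance version I would need the analogous statement that a degenerate network cannot be strongly concordant. I would establish this either by appealing to the fact that strong concordance is the more restrictive notion (so a strongly concordant network is in particular concordant, hence nondegenerate), or, to keep the argument self-contained, by observing that the explicit pair $(\alpha,\sigma^*)$ constructed in the proof of Proposition \ref{prop:DegeneracyImpliesDiscordance} --- namely $\sigma^*$ in the kernel of the singular map $\bar{T}$ and $\alpha_{\rxn}=y\cdot\sigma^*$ --- in fact satisfies the three conditions of Definition \ref{DEF:StronglyConcordant} as well. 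Here one uses that each species appears on only one side of a reaction, so that $\supp(y-y')=\supp y\,\cup\,\supp y'$ with $(y-y')_s=y_s>0$ for $s\in\supp y$, which matches the sign bookkeeping in Definition \ref{DEF:StronglyConcordant} to the sign of $y\cdot\sigma^*$ exactly as it does for Definition \ref{def:Concordant}.

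For the reverse implication I would invoke the chain supplied by the equivalence and transfer results. By Proposition \ref{prop:WeaklyNormalEqualsNondegenerate}, nondegeneracy is equivalent to weak normality, so a nondegenerate network is weakly normal; then Theorem \ref{thm:SmallToLargeConcordanceThm} (respectively, Theorem \ref{thm:SmallToLargeStronglyConcordanceThm}) asserts precisely that a weakly normal network is concordant (strongly concordant) whenever its fully open extension is. Combining the two implications yields the stated equivalence in both versions.

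The hard part is essentially nonexistent at the structural level, since everything assembles from prior results; the only place demanding a moment's care is the forward direction of the strong-concordance version, where one must confirm that degeneracy precludes strong concordance and not merely concordance. I would therefore devote the bulk of the (short) write-up to that single verification --- either citing the strong-implies-ordinary relationship between the two notions, or rechecking the construction of Proposition \ref{prop:DegeneracyImpliesDiscordance} against Definition \ref{DEF:StronglyConcordant} --- while everything else reduces to a one-line citation of the appropriate preceding result.
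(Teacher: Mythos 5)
Your proposal is correct and is exactly the paper's intended argument: the paper offers no separate proof, presenting the corollary as the immediate synthesis of Propositions \ref{prop:WeaklyNormalEqualsNondegenerate} and \ref{prop:DegeneracyImpliesDiscordance} with Theorems \ref{thm:SmallToLargeConcordanceThm} and \ref{thm:SmallToLargeStronglyConcordanceThm}, precisely as you assemble it. Your added care on the forward direction of the strong-concordance version (that degeneracy precludes strong concordance, via strong concordance implying concordance as in \cite{shinar_concordant_2011}, or by rechecking the witness pair against Definition \ref{DEF:StronglyConcordant}) addresses the one point the paper leaves implicit, and both of your proposed verifications are sound.
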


\begin{rem}
It is a consequence of Theorem \ref{thm:SmallToLargeConcordanceThm} and results in \cite{shinar_concordant_2011} that the dynamical statements (i) - (iii) of Theorem \ref{thm:DynamicalTheorem} hold true for any nondegenerate reaction network whose fully open extension is concordant, not merely those that satisfy the SR Graph conditions of  Theorem \ref{thm:ConcordanceTheorem}. In particular, the ``all or nothing" observation of Remark \ref{rem:AllOrNothing} still obtains. The SR Graph conditions of Theorem \ref{thm:ConcordanceTheorem} merely suffice to ensure concordance of the fully open extension. 
\end{rem}

\subsection{Tests for network normality, weak normality, and nondegeneracy}
\label{subsec:computationaltests}

	Here we provide some computational means to affirm normality and weak normality (or, equivalently, nondegeneracy) of a reaction network. Recall that the rank of a network is the rank of its set of reaction vectors.

\begin{proposition}
\label{prop:PMatrixTest}
A reaction network \rnet of rank $r$ is weakly normal (or, equivalently, nondegenerate) if there is a set of $r$ reactions  \mbox{$\{y_i \to y'_i\}_{i=1 \dots r}$} and a set of vectors  $\{p_i\}_{i=1 \dots r} \subset \PbarS$ with $\supp p_i = \supp y_i$, $i = 1 \dots r$,  such that the matrix 
\begin{equation}
\label{eq:p_iMatrix}
\left[ p_i\cdot (y'_j - y_j)\right]_{i,j = 1 \dots r}
\end{equation}
has nonzero determinant. 
\end{proposition}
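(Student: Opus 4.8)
The plan is to verify directly that the hypotheses produce a choice of vectors $\{p_{\rxn}\}_{\rxn \in \scrR}$ satisfying Definition \ref{def:weaklynormal}, so that the network is weakly normal; the equivalence with nondegeneracy then comes for free from Proposition \ref{prop:WeaklyNormalEqualsNondegenerate}. First I would extract linear independence from the determinant hypothesis: if $\sum_j c_j (y'_j - y_j) = 0$ held with the $c_j$ not all zero, the same relation would hold among the columns of $[p_i \cdot (y'_j - y_j)]_{i,j=1\dots r}$, contradicting $\det \neq 0$. Hence the $r$ reaction vectors $b_j := y'_j - y_j$ are linearly independent, and since the network has rank $r$ they form a basis of the stoichiometric subspace $S$.

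Next I would analyze the ``partial'' operator $\bar T_0 : S \to S$ defined by $\bar T_0 \sigma := \sum_{i=1}^r (p_i \cdot \sigma)\,b_i$, which uses only the $r$ selected reactions. Computing $\bar T_0 b_k = \sum_i (p_i \cdot b_k)\,b_i$ shows that the matrix of $\bar T_0$ relative to the basis $\{b_j\}$ is exactly $[p_i \cdot (y'_j - y_j)]_{i,j=1\dots r}$, whose determinant is nonzero by assumption. Thus $\bar T_0$ is nonsingular on $S$.

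Finally I would extend this to all of $\scrR$ by a small perturbation that respects the support constraints. Set $p_{y_i \to y'_i} := p_i$ for the selected reactions, and for every remaining reaction \rxn\ set $p_{\rxn} := \epsilon\, y$ for a parameter $\epsilon > 0$; since scaling a nonnegative vector by a positive constant preserves its support, $\supp p_{\rxn} = \supp y$ as Definition \ref{def:weaklynormal} demands. The resulting map decomposes as $\bar T = \bar T_0 + \epsilon \bar T_1$, where $\bar T_1 \sigma := \sum (y \cdot \sigma)(y'-y)$ ranges over the unselected reactions; both summands carry $S$ into $S$. Expressed in the basis $\{b_j\}$, the entries of $\bar T$ are affine in $\epsilon$, so $\det \bar T$ is a polynomial in $\epsilon$ that equals $\det \bar T_0 \neq 0$ at $\epsilon = 0$. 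Choosing $\epsilon > 0$ small enough to avoid that polynomial's finitely many roots makes $\bar T$ nonsingular, which establishes weak normality.

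The argument is largely routine; the one place demanding care is the extension step, where the requirement $\supp p_{\rxn} = \supp y$ for the unselected reactions rules out simply setting those vectors to zero. The choice $p_{\rxn} = \epsilon\, y$ circumvents this because it keeps the correct support, and the perturbation is harmless by continuity of the determinant — so the main obstacle is less a deep difficulty than the bookkeeping needed to promote nonsingularity of the $r\times r$ submatrix to nonsingularity of the full operator $\bar T$ over the entire reaction set.
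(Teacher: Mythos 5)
Your proposal is correct and follows essentially the same route as the paper's proof: isolate the operator $\bar T_0$ built from the $r$ selected reactions, show it is nonsingular via the determinant hypothesis, and then extend to the full reaction set by taking $p_{\rxn} = \epsilon y$ on the unselected reactions and perturbing. The only cosmetic difference is that you identify \eqref{eq:p_iMatrix} directly as the matrix of $\bar T_0$ in the basis $\{y'_j - y_j\}$, whereas the paper argues nonsingularity by contradiction through the associated homogeneous system; both arguments are the same computation.
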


	Proof of the proposition is provided at the end of this subsection. For a particular choice of $r$ reactions one can readily construct the matrix \eqref{eq:p_iMatrix} in terms of symbols for the species-wise components of the vectors  $\{p_i\}_{i=1 \dots r}$, and one can then calculate the determinant of the matrix as a polynomial in those same symbols.  If, for \emph{even one} choice of $r$ reactions, the resulting determinant   is not \emph{identically} zero, then the network is weakly normal. 

	We remark in passing that a nonzero determinant will require that the $r$ reaction vectors \mbox{$\{y'_i - y_i\}_{i=1 \dots r}$}  be linearly independent. Of course, such independent reaction vector sets will invariably exist for a network of rank $r$.

	A special choice of $\{p_i\}_{i=1 \dots r}$, one that obviates the need for symbolic computation, is invoked in the following corollary. This choice will often suffice to establish weak normality. In fact, when the condition in Corollary \ref{cor:NormalityCondition} below is satisfied, the network will not only be weakly normal but also normal. (See Remark \ref{rem:NormalityRemark} following the proof of Proposition \ref{prop:PMatrixTest}.)  

\begin{corollary}
\label{cor:NormalityCondition}
A reaction network \rnet of rank $r$ is weakly normal (and, in fact, normal) if there is a set of $r$ reactions \mbox{$\{y_i \to y'_i\}_{i=1 \dots r}$} such that the matrix 
\begin{equation}
\label{eq:y_iMatrix}
\left[y_i\cdot (y'_j - y_j)\right]_{i,j = 1 \dots r}
\end{equation}
has nonzero determinant.
\end{corollary}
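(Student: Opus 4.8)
The plan is to derive the corollary directly from Proposition \ref{prop:PMatrixTest} for the weak-normality half, and then to upgrade the conclusion to full normality by exhibiting explicit data $q$ and $\eta$ satisfying Definition \ref{def:normal}. For weak normality I would simply specialize Proposition \ref{prop:PMatrixTest} to the choice $p_i := y_i$, $i = 1 \dots r$. The support condition $\supp p_i = \supp y_i$ is then satisfied trivially, and the matrix in \eqref{eq:p_iMatrix} becomes exactly the matrix in \eqref{eq:y_iMatrix}, which is nonsingular by hypothesis; hence the network is weakly normal (equivalently, nondegenerate). All the real content is therefore in the parenthetical assertion that the network is in fact normal.

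For normality I would take $q \in \PS$ to be the vector all of whose components equal $1$, so that the scalar product $*$ of Definition \ref{def:normal} coincides with the standard scalar product $\cdot$ on \RS. With this choice \eqref{eq:normalityT} reads $T\sigma = \sum_{\rxn \in \scrR}\eta_{\rxn}(y \cdot \sigma)(y' - y)$, and it remains to produce a strictly positive $\eta \in \PR$ making $T$ nonsingular. I would set $\eta_{y_i \to y'_i} := 1$ on the $r$ distinguished reactions and $\eta_{\rxn} := \varepsilon$ on every other reaction, for a parameter $\varepsilon > 0$ to be fixed, and call the resulting map $T_\varepsilon$.

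The key computation is the behavior at $\varepsilon = 0$. Since the determinant in \eqref{eq:y_iMatrix} is nonzero, the reaction vectors $\{y'_i - y_i\}_{i=1 \dots r}$ are linearly independent and, the network having rank $r$, form a basis of $S$. Evaluating $T_0$ on this basis gives $T_0(y'_k - y_k) = \sum_{i=1}^r \big(y_i \cdot (y'_k - y_k)\big)(y'_i - y_i)$, so the matrix of $T_0 \colon S \to S$ in the basis $\{y'_i - y_i\}$ is precisely the matrix \eqref{eq:y_iMatrix}; by hypothesis it is nonsingular, whence $T_0$ is an isomorphism of $S$.

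Finally, I would close with a perturbation argument, which is the only place any care is needed. In the fixed basis $\{y'_i - y_i\}$ the entries of the matrix of $T_\varepsilon$ are affine in $\varepsilon$, so $\det T_\varepsilon$ is a polynomial in $\varepsilon$ that does not vanish at $\varepsilon = 0$; hence it is nonzero for all sufficiently small $\varepsilon > 0$. Fixing such an $\varepsilon$ yields $q \in \PS$ and $\eta \in \PR$ with $T$ nonsingular, establishing normality. I expect this continuity step to be the main (and essentially only) obstacle, and it is mild: it is the same device presumably used to pass from the $r$ distinguished reactions to all reactions in the proof of Proposition \ref{prop:PMatrixTest}, here applied with the reactant complexes $y$ playing the role of the vectors $p_{\rxn}$.
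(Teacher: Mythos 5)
Your proposal is correct and follows essentially the same route as the paper: the paper obtains the corollary by specializing Proposition \ref{prop:PMatrixTest} to $p_i = y_i$ and, for the normality claim, by rerunning that proposition's proof with $p_i$ replaced by $y_i$ and invoking Definition \ref{def:normal} with $q_s = 1$ for all $s$, which is exactly your choice of $q$, your $\eta_\varepsilon$, and your determinant-continuity perturbation. The only cosmetic difference is that you verify nonsingularity of $T_0$ by writing its matrix in the basis $\{y'_i - y_i\}$ directly rather than via the paper's contradiction argument.
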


\begin{example}
Here we apply Corollary \ref{cor:NormalityCondition} to network \eqref{eq:NiceNetworkExample}, the rank of which is 4. For the four-reaction set 
\begin{equation}
\{P \to A + B, \; Q \to B + C,\; 2A \to C,\; C + D \to Q + E\} ,
\end{equation}
it is easy to calculate that the matrix in the corollary has a determinant of 4, so the network is weakly normal. Recall that Theorem \ref{thm:ConcordanceTheorem} established the concordance of the fully open extension of network \eqref{eq:NiceNetworkExample}. Because network \eqref{eq:NiceNetworkExample} is weakly normal, concordance of its fully open extension extends to network \eqref{eq:NiceNetworkExample} itself. 
\end{example}

\begin{rem} Corollary \ref{cor:NormalityCondition} is only one consequence of Proposition \ref{prop:PMatrixTest}. There are other, more interesting ones (some with a graphical flavor) that we intend to take up in another article. For example, a network \rnet\  of rank $r$ is weakly normal if there exist $r$ reactions  \mbox{$\{y_i \to y'_i\}_{i=1 \dots r}$} with the following property: There is a set of distinct species $\scrS_{*} = \{s_1,\dots,s_r\}$ with $s_i \in \supp y_i, \; i = 1 \dots r,$ such that
\begin{equation}
\label{eq:StrippedSpeciesEqn}
\det \left[s_i\cdot (y'_j - y_j)\right]_{i,j = 1 \dots r} \neq 0.
\end{equation}
It is not difficult to see that \eqref{eq:StrippedSpeciesEqn} has the following interpretation: Let  $\{\bar{y}_i \to \bar{y}'_i\}_{i=1 \dots r}$ denote the set of ``reactions" obtained from  $\{y_i \to y'_i\}_{i=1 \dots r}$ by stripping away all species not in $\scrS_{*}$.  Then  \eqref{eq:StrippedSpeciesEqn} is satisfied (whereupon the original network is weakly normal) precisely when the resulting set of ``reaction vectors"  $\{\bar{y}'_i - \bar{y}_i\}_{i=1 \dots r}$ is linearly independent.

\end{rem}

\begin{proof}[Proof of Proposition \ref{prop:PMatrixTest}] Suppose that the reaction vector set \mbox{$\{y'_i - y_i\}_{i=1 \dots r}$} and the set $\{p_i\}_{i=1 \dots r} \subset \PbarS$ satisfy the conditions of the Proposition \ref{prop:PMatrixTest}. In this case, it is not difficult to see that each of these sets must be linearly independent. In particular, the set  \mbox{$\{y'_i - y_i\}_{i=1 \dots r}$} is a basis for $S$, the stoichiometric subspace for the network under consideration.

	We begin by constructing the linear transformation $\bar{T}_0:S \to S$ defined by
\begin{equation}
\bar{T}_0\,\sigma = \sum_{i=1}^r\ p_i \cdot \sigma\,(y'_i - y_i), \quad \forall \sigma \in S,
\end{equation}
which can be seen to be nonsingular in the following way: Suppose on the contrary that there is a nonzero $\sigma^* \in S$ such that $\bar{T}_0\,\sigma^* = 0$. Because the set  \mbox{$\{y'_i - y_i\}_{i=1 \dots r}$} is linearly independent, we must have 
\begin{equation}
\label{eq:pi_dot_sigma}
p_i \cdot \sigma^* = 0, \quad i = 1 \dots r.
\end{equation}
Because \mbox{$\{y'_j - y_j\}_{j=1 \dots r}$}  is a basis for $S$ and $\sigma^*$ is a nonzero member of $S$, there must be $\xi_j, \ j = 1 \dots r$, not all zero, such that
\begin{equation}
\sigma^* = \sum_{j=1}^r\ \xi_j\,(y'_j - y_j).
\end{equation}
Insertion of this into \eqref{eq:pi_dot_sigma} results in the system of $r$ homogeneous equations
\begin{equation}
\sum_{j=1}^r\ p_i\cdot (y'_j - y_j)\,\xi_j  = 0, \quad  \quad i = 1 \dots r
\end{equation}
that must be satisfied by the set $\{\xi_j\}_{j=1\dots r}$. Since the determinant of the matrix \eqref{eq:p_iMatrix} is nonzero, the only solution is 
$\xi_j = 0, \; j=1\dots r$, which amounts to a contradiction. Thus, $\bar{T}_0$ is nonsingular, and, as a result, $\det \bar{T}_0 \neq 0.$

	It remains to be shown that the requirements of weak normality are met by the network \rnet. For this purpose, let $\scrR_0$ be the set of aforementioned reactions \mbox{$\{y_i \to y'_i\}_{i=1 \dots r}$}. Moreover, let $p_{\rxn} \in \PbarS$  be chosen to satisfy
\begin{eqnarray}
p_{y_i \to y'_i} := p_i, \; &&\forall y_i \to y'_i \in \scrR_0 \nonumber\\
p_{y \to y'} := \epsilon y, \; &&\forall \rxn \in \scrR \setminus \scrR_0,\nonumber
\end{eqnarray}
where $\epsilon$ is a small positive number.  Now let  $\bar{T}_\epsilon:S \to S$ be defined by 
\begin{eqnarray}
\bar{T}_\epsilon\, \sigma &:=& \sum_{y \to y' \in \scrR}p_{y \to y'} \cdot \sigma\,(y' - y)\nonumber\\
&=& \bar{T}_0\,\sigma + \;\;\epsilon\left[\sum_{y \to y' \in \scrR \setminus \scrR_0} y \cdot \sigma\,(y' - y)\right]\nonumber.
\end{eqnarray}
Note that $\bar{T}_{\epsilon}|_{\epsilon = 0} = \bar{T}_0$. Because $\det \bar{T}_0 \neq 0$ and because $\det \bar{T}_{\epsilon}$ is continuous in $\epsilon$, it follows that  $\det \bar{T}_{\epsilon} \neq 0$ for $\epsilon$ sufficiently small. Thus, for sufficiently small  $\epsilon$, $\bar{T}_{\epsilon}$ is nonsingular,whereupon the network \rnet is weakly normal.
\end{proof}

\begin{rem}
\label{rem:NormalityRemark}
Corollary \ref{cor:NormalityCondition} derives from Proposition \ref{prop:PMatrixTest} by invoking the special  choice $p_i = y_i,\ i = 1 \dots r$. When the resulting condition in Corollary \ref{cor:NormalityCondition} is satisfied, the network at hand is not only weakly normal but also normal. To see this, it is enough to replace $p_i$ by $y_i$ everywhere in the proof of Proposition \ref{prop:PMatrixTest} and then invoke Definition \ref{def:normal} with $q_s = 1, \; \forall s \in \scrS$.
\end{rem}

\section{A Concluding Remark}
	When their hypotheses are satisfied, the central theorems of this article permit one to affirm concordance of a particular reaction network from inspection of its Species Reaction Graph and, then, to invoke all of the powerful dynamical consequences that concordance implies \cite{shinar_concordant_2011}. At the same time, these theorems provide  delicately nuanced insight into network attributes that give rise to concordance. And, because their hypotheses are fairly easy to satisfy, the theorems  tell us  more --- that concordance in realistic chemical reaction networks is likely to be common.  Moreover, the theorems also serve to make connections between concordance results in \cite{shinar_concordant_2011} and earlier, somewhat different SR-Graph-related results contained in \cite{schlosser_theory_1994,craciun_multiple_2006-1,craciun_understanding_2006-1,banaji_graph-theoretic_2009,banaji_graph-theoretic_2010}.

	It should be remembered, however, that computational means for direct  determination of whether a particular network  --- \emph{fully open or otherwise} --- is concordant \emph{or discordant}, strongly concordant \emph{or not strongly concordant}, are already available in user-friendly, freely-provided software \cite{Ji_toolboxWinV21} prepared in connection with \cite{shinar_concordant_2011}.  In most instances, this software or a variant of it, will be the tool of choice.

\pagebreak
\addcontentsline{toc}{section}{Appendices}
\appendix
\numberwithin{equation}{section}
\appendixpage
\section{Proof of Propositions \ref{prop:SourceBlockSimplicity} and \ref{prop:EvenCyclesWithinSBlocksRBlocks}}
\label{app:ProofOfSourceBlockProposition}

\subsection{Some graph theoretical preliminaries: Ears}
\label{sec:Ears}
In this sub-section we describe a small amount of graph-theoretical material, all of which is available in \cite{bondy_graph_2010}. In a directed graph $G$ let $F$ be a proper subgraph of $G$ (with edge-directions inherited from $G$). A  \emph{directed ear} of $F$ in $G$ is a directed path in $G$ whose end vertices lie in $F$ but whose other vertices do not. In Figure \ref{fig:EarIllustration} we show a hypothetical source block, and we consider the proper subgraph with vertices $S_1,S_2,R_1,R_2,R_3$ and the directed edges connecting them. Within the full graph, the path $S_1 \leadsto R_4 \leadsto S_3 \leadsto R_2$ is a directed ear of that subgraph.

\begin{figure}[ht]
\centering
\includegraphics[scale=.6]{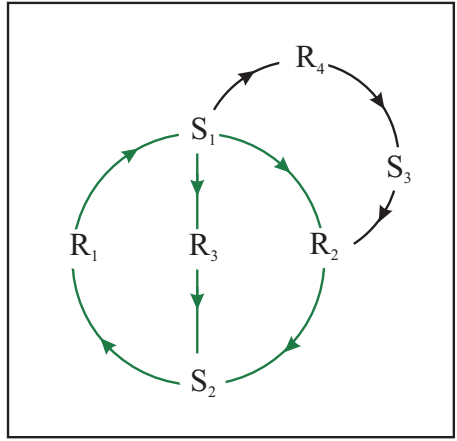}
\caption{A directed ear}
\label{fig:EarIllustration}
\end{figure}

	Three propositions will be helpful later on, all applicable to nonseparable strongly connected directed graphs, of which a source-block in the sign-causality graph is an example.  The first two propositions culminate in the third.

\begin{proposition} \emph{\cite{bondy_graph_2010}} 
\label{prop:FirstEarProposition}
Let $F$ be a (nontrivial)  strongly connected nonseparable proper subgraph of a nonseparable strongly connected directed graph $G$. Then $F$ has a directed ear in $G$. 
\end{proposition}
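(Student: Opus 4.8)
The plan is to establish existence of a directed ear directly from the strong connectivity of $G$, handling the properness of $F$ through two cases according to whether $F$ omits a vertex of $G$ or only an arc.

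First I would dispose of the case $V(F) = V(G)$. Here properness forces $E(F) \subsetneq E(G)$, so there is an arc $(u,v) \in E(G)\setminus E(F)$ with $u,v \in V(F)$; since $G$ has no self-loops, $u \neq v$, and this single arc is itself a directed path whose two (distinct) end vertices lie in $F$ and which has no interior vertices at all. It is therefore a directed ear, and the case is complete.

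The substantive case is $V(F) \subsetneq V(G)$. Writing $W := V(G)\setminus V(F) \neq \emptyset$, the first step is to locate an arc leaving $F$: choosing any $w \in W$ and any $a \in V(F)$, strong connectivity supplies a directed walk from $a$ to $w$; the first vertex of that walk lying in $W$ is reached by an arc whose tail is in $V(F)$, giving an arc $u \leadsto w_0$ with $u \in V(F)$ and $w_0 \in W$. The second step is the return: strong connectivity again supplies a directed walk from $w_0$ back into $V(F)$, and I take $v$ to be the first vertex of $V(F)$ encountered along it, so that the initial segment $w_0 \leadsto \cdots \leadsto v$ meets $V(F)$ only at $v$. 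Prepending the arc $u \leadsto w_0$ and then extracting a simple subpath (any repeated vertices lie in $W$ and can be shortcut) yields a directed path from $u$ to $v$ whose interior vertices all lie in $W$, hence outside $F$. This is the required directed ear, and it has at least the interior vertex $w_0$.

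The delicate point, and the step I expect to be the main obstacle, is whether this ear can be taken \emph{open}, i.e.\ with $u \neq v$, rather than a closed ear attached to $F$ at a single vertex. Strong connectivity alone does not settle this, and here the nonseparability of $G$ must enter: I would argue that if, for a given departure vertex $u$, every return to $F$ through $W$ were forced back to $u$, then $u$ would separate the interior vertices of the ear from $V(F)\setminus\{u\}$, contradicting that $G$ has no separating vertex; hence the return vertex $v$ can always be chosen distinct from $u$. If instead one adopts the convention, standard for strong digraphs, under which a directed ear may be closed, this obstacle evaporates and the construction of the preceding paragraph already suffices, with the nonseparability hypotheses on $F$ and $G$ then carried along for the companion propositions rather than needed here.
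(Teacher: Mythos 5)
The paper offers no proof of this proposition at all: it is imported verbatim from \cite{bondy_graph_2010} as a known fact, so there is nothing internal to compare your argument against; your proposal has to stand on its own. Its overall architecture is sound: the case $V(F)=V(G)$ is handled correctly by a single missing arc, and in the case $V(F)\subsetneq V(G)$ your departure-and-first-return construction does produce a directed path from some $u\in V(F)$ to some $v\in V(F)$ with all interior vertices outside $F$. You are also right that openness ($u\neq v$) is the genuine issue here, since the paper's directed ear decomposition requires each $G_{i+1}=G_i\cup P_i$ to be nonseparable, which a closed ear attached at a single vertex would immediately violate.

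The gap is in the final implication ``if every return to $F$ through $W$ were forced back to $u$, then $u$ would separate the interior vertices of the ear from $V(F)\setminus\{u\}$.'' That implication is false as stated: the hypothesis only controls arcs \emph{leaving} the relevant region of $W$ into $F$; it says nothing about arcs \emph{entering} that region from $V(F)\setminus\{u\}$, and any such arc destroys the separation without contradicting your hypothesis. (It also is not enough to separate only the interior of one particular ear; one must separate the whole piece of $W$ attached there.) The repair is a clean dichotomy. Let $W_1$ be the vertex set of a connected component of the underlying undirected graph of $G-V(F)$; every arc between $W_1$ and its complement has its non-$W_1$ end in $V(F)$. If all of these arcs are incident to a single vertex $u$, then $u$ is a separating vertex of $G$ (note $W_1\neq\emptyset$ and $V(F)\setminus\{u\}\neq\emptyset$ because the nontrivial $F$ has at least two vertices), contradicting nonseparability. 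Otherwise two such arcs meet $V(F)$ at distinct vertices $f\neq f'$; using strong connectivity of $G$ restricted to walks through $W_1$, one of them can be taken as an entry arc $f\to x$ and the other reached as an exit arc $x'\to f'$ with $x,x'\in W_1$, and concatenating $f\to x$, a path from $x$ to $x'$ inside $W_1$, and $x'\to f'$, then simplifying, yields an open directed ear. With that substitution your proof is complete.
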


\begin{proposition}\emph{\cite{bondy_graph_2010}} 
Let F be a strongly connected subgraph of a directed graph G, and let P be directed ear of F in G. Then F $\cup$ P is strongly connected.  
\end{proposition}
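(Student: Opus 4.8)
The plan is to verify strong connectivity directly from the definitions, by exhibiting, for every ordered pair of vertices of $F \cup P$, a directed path between them that lies entirely within $F \cup P$. First I would fix notation for the ear: write $P$ as $x = p_0 \leadsto p_1 \leadsto \cdots \leadsto p_k = y$, where $x, y \in F$ are its two endpoints and $p_1, \ldots, p_{k-1}$ are its internal vertices, none of which lie in $F$. The vertex set of $F \cup P$ is then $V(F) \cup \{p_1, \ldots, p_{k-1}\}$, and I would organize the argument as a case analysis according to whether each of the two chosen vertices lies in $F$ or is internal to $P$.

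The case in which both chosen vertices lie in $F$ requires nothing new: a directed path between them already exists inside $F$, since $F$ is strongly connected by hypothesis, and that path persists in $F \cup P$. Next I would handle the mixed case, where one vertex $u$ lies in $F$ and the other is an internal vertex $p_i$. To pass from $u$ to $p_i$, I would concatenate a directed $u \leadsto x$ path in $F$ (available from strong connectivity of $F$) with the initial segment $x = p_0 \leadsto \cdots \leadsto p_i$ of $P$. To pass from $p_i$ back to $u$, I would concatenate the terminal segment $p_i \leadsto \cdots \leadsto p_k = y$ of $P$ with a directed $y \leadsto u$ path in $F$. Both resulting walks live entirely in $F \cup P$.

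Finally, for the case in which both vertices are internal, say $p_i$ and $p_j$, I would note that if $i \le j$ the subpath of $P$ from $p_i$ to $p_j$ suffices, while if $i > j$ I would route $p_i \leadsto \cdots \leadsto y$ along $P$, then $y \leadsto x$ within $F$, and finally $x = p_0 \leadsto \cdots \leadsto p_j$ along $P$; the symmetric construction gives the reverse direction. This exhausts all cases, and degenerate configurations (an ear that is a single edge with no internal vertices, so that $k = 1$) are covered automatically, since then only the first case arises.

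I do not expect any genuine obstacle here; the proof is a routine enumeration of cases, and the content is entirely a bookkeeping exercise in stitching together the path $P$ with paths supplied by the strong connectivity of $F$. The one point meriting slight care is that the endpoints $x$ and $y$ of the ear need not be distinct: when $x = y$, every routing above that crosses $F$ via an $x \leadsto y$ or $y \leadsto x$ segment is unaffected, since strong connectivity of $F$ trivially furnishes the (possibly trivial) path from a vertex to itself, so the case analysis goes through without modification.
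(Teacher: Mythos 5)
Your proof is correct. The paper itself offers no argument for this proposition --- it is simply quoted from the cited reference \cite{bondy_graph_2010} --- so there is no in-paper proof to compare against; your case analysis (both vertices in $F$; one in $F$ and one internal to $P$; both internal to $P$) is the standard elementary argument, and your handling of the degenerate cases ($k=1$ and $x=y$) is sound. The only cosmetic remark is that some of your routings produce walks rather than simple paths (e.g.\ when the $y \leadsto u$ segment in $F$ revisits $x$), but reachability is all that strong connectivity requires, so nothing further is needed.
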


	A \emph{directed ear decomposition} of a nonseparable strongly connected directed graph $G$ is a sequence of nonseparable strongly connected directed subgraphs of G, say $G_0, G_1, \dots, G_k$, such that $G_i \subset G_{i+i},\  i = 0,..,k-1$ and such that
\begin{enumerate}
\item $G_0$ is a directed cycle, 
\item $G_{i+1} = G_i \cup P_i$, where $P_i$ is a directed ear of $G_i$ in $G, \;i = 0, \dots, k-1$, 
\item $G_k = G$.
\end{enumerate}

	In effect, the following proposition tells us that a nonseparable strongly connected directed graph --- in particular, a source block --- can be built up by beginning with a directed cycle and successively adding directed ears. (It is left understood that we have in mind graphs with more than one vertex.)

\begin{proposition}\emph{\cite{bondy_graph_2010}} Every nonseparable strongly connected directed graph has a directed ear decomposition.
\end{proposition}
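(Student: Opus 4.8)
The plan is to construct the required nested sequence $G_0 \subset G_1 \subset \dots \subset G_k = G$ by induction, using the two preceding propositions as the engine of the inductive step. For the base case, I note that a strongly connected directed graph with more than one vertex necessarily contains a directed cycle: starting from any edge $u \to v$, strong connectivity supplies a directed path from $v$ back to $u$, and a shortest such path together with the edge $u \to v$ closes up into a directed cycle. I take $G_0$ to be any such cycle; a directed cycle is visibly both strongly connected and nonseparable, so the induction has a valid starting point.

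For the inductive step, suppose $G_i$ has been constructed as a nonseparable strongly connected subgraph of $G$. If $G_i = G$ I stop and set $k = i$. Otherwise $G_i$ is a \emph{proper} subgraph, so Proposition \ref{prop:FirstEarProposition} furnishes a directed ear $P_i$ of $G_i$ in $G$, and I set $G_{i+1} := G_i \cup P_i$. The immediately preceding proposition guarantees that $G_{i+1}$ is again strongly connected. Termination is assured because, by the definition of a directed ear, $P_i$ contains at least one edge of $G$ not already in $G_i$; since $G$ is finite the edge count strictly increases at each step, so the process must halt, and it can only halt when $G_i = G$, for otherwise Proposition \ref{prop:FirstEarProposition} would license another step.

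The one property \emph{not} delivered by the two cited propositions, and the step I expect to be the crux, is preservation of \emph{nonseparability}: the preceding proposition yields only strong connectivity of $G_{i+1}$. Here I would invoke the classical fact (available in \cite{bondy_graph_2010}) that if $H$ is nonseparable and $P$ is a path joining two \emph{distinct} vertices of $H$ whose internal vertices all lie outside $H$, then $H \cup P$ is again nonseparable. This applies directly: by the definition given above a directed ear is a directed \emph{path}, hence has distinct endpoints, both lying in $G_i$, while its remaining vertices lie outside $G_i$; thus $P_i$ meets $G_i$ in exactly its two endpoints, and the lemma delivers nonseparability of $G_{i+1}$. The hypothesis that the endpoints be distinct is essential here—a closed ear attached at a single vertex would create a cut vertex and destroy nonseparability—and it is precisely the exclusion of such closed ears, built into the notion of a directed path, that makes the argument go through. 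With both nonseparability and strong connectivity maintained at every stage, the inductive construction goes through, and the resulting sequence $G_0, G_1, \dots, G_k$ is a directed ear decomposition of $G$.
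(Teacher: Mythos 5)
Your argument is correct, and it is the intended one: the paper itself supplies no proof of this proposition, citing \cite{bondy_graph_2010} and remarking only that the first two ear propositions ``culminate in'' this one, so your write-up fills in exactly the induction the paper gestures at. You rightly identify the one point that the two cited propositions do not cover, namely preservation of \emph{nonseparability} when an ear is adjoined, and your patch is the standard and correct one: because a directed ear is by definition a directed \emph{path}, its two end vertices are distinct, and attaching an internally disjoint path at two distinct vertices of a nonseparable graph cannot create a separating vertex (a putative cut vertex internal to the ear leaves the old graph intact and each stub of the ear still attached at an endpoint; a putative cut vertex inside the old graph leaves the old graph connected by nonseparability and the ear still attached at its other endpoint). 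Your observation that excluding closed ears is what makes this work is exactly where the hypothesis that the ambient graph $G$ be nonseparable earns its keep in Proposition \ref{prop:FirstEarProposition}, since a separable strong digraph need not admit any open ear of a given block. The base case and the termination-by-edge-count argument are both fine (with the understood convention, as in the paper, that the graph has more than one vertex and that an ear contributes at least one new edge). No gaps.
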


\subsection{Two lemmas about R-to-R and S-to-S intersections of even cycles}

	Here we will argue that when, in the sign-causality graph or the SR Graph, two (not-necessarily directed) even cycles intersect in either an R-to-R path or an S-to-S path, then the third cycle so-formed is also even. We begin by considering the harder case, involving an R-to-R intersection. In the following lemma, when we indicate a path by a symbol such as $R^{*}AR^{**}$, the $A$ does not denote a species. Rather  $R^{*}AR^{**}$ indicates a perhaps long path connecting reactions $R^*$ and $R^{**}$.

\begin{lemma}
\label{lem:R-to-RThreeCycleEven}
Suppose that in the sign-causality graph (or in the SR Graph) there are distinct reactions $R^*$ and $R^{**}$ and three edge-disjoint (not-necessarily-directed) paths, denoted $R^{*}AR^{**}$, $R^{*}BR^{**}$, and $R^{*}CR^{**}$, connecting $R^*$ to $R^{**}$. If the cycles $R^{*}BR^{**}AR^{*}$ and $R^{*}CR^{**}BR^{*}$ are both even, then so is the cycle $R^{*}CR^{**}AR^{*}$.
\end{lemma}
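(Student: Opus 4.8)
The plan is to argue entirely by a parity (mod $2$) bookkeeping of c-pairs, exploiting the fact that each of the three named cycles is assembled from exactly two of the three paths, and that the number of c-pairs a cycle contains splits cleanly into a part internal to each constituent path plus a part localized at the two shared reaction vertices $R^*$ and $R^{**}$.

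First I would record the purely local nature of a c-pair: a c-pair is an edge-pair at a reaction vertex whose two edges carry the same complex label, so every c-pair of a cycle sits at one of the cycle's reaction vertices, and whether a given edge-pair is a c-pair is determined solely by the two cycle-edges incident to that reaction. For a path $P \in \{A, B, C\}$, let $n_P$ denote the number of c-pairs occurring at reaction vertices strictly interior to $P$; because the two cycle-edges meeting at such an interior reaction both belong to $P$, the number $n_P$ depends only on $P$ and is unchanged in whichever of the two cycles $P$ appears. The only reaction vertices of a cycle not accounted for in this way are the shared endpoints $R^*$ and $R^{**}$.

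Next I would handle $R^*$ (the treatment of $R^{**}$ being identical). Writing the reaction at $R^*$ as $y \to y'$, and recalling that a species lies on only one side of a reaction, the edge of path $P$ incident to $R^*$ carries either the label $y$ or the label $y'$; encode this by $\epsilon_P^* \in \{0,1\}$ (say $0$ for $y$, $1$ for $y'$), and likewise define $\epsilon_P^{**}$ at $R^{**}$. In the cycle built from paths $P$ and $P'$, the edge-pair at $R^*$ is a c-pair exactly when $\epsilon_P^* = \epsilon_{P'}^*$, so its contribution to the c-pair count is $1 + \epsilon_P^* + \epsilon_{P'}^* \pmod 2$, and similarly at $R^{**}$. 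Setting $m_P := n_P + \epsilon_P^* + \epsilon_P^{**} \pmod 2$, the total number of c-pairs in the cycle formed by $P$ and $P'$ is then congruent mod $2$ to
\[
n_P + n_{P'} + (1 + \epsilon_P^* + \epsilon_{P'}^*) + (1 + \epsilon_P^{**} + \epsilon_{P'}^{**}) \equiv m_P + m_{P'} \pmod 2,
\]
the two constant $1$'s canceling. Hence a cycle is even precisely when the characters of its two constituent paths agree mod $2$.

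Finally I would conclude: evenness of $R^{*}BR^{**}AR^{*}$ gives $m_A \equiv m_B$, evenness of $R^{*}CR^{**}BR^{*}$ gives $m_B \equiv m_C$, whence $m_A \equiv m_C$ and $R^{*}CR^{**}AR^{*}$ is even; equivalently, the three cycle c-pair counts sum to $2(m_A + m_B + m_C) \equiv 0$, so any two being even forces the third. The main obstacle — in fact the only place requiring care — is the bookkeeping at $R^*$ and $R^{**}$: one must verify that two edges belonging to \emph{different} paths but meeting at a shared reaction form a c-pair exactly when they lie on the same (reactant or product) side of that reaction, and then track the $1 + \epsilon + \epsilon$ contributions so that the spurious constant terms cancel in the final congruence. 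I would emphasize that this decomposition uses only the edge-disjointness of the three paths and the local definition of a c-pair, so the argument applies verbatim in both the sign-causality graph and the SR Graph.
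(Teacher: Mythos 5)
Your proof is correct, but it takes a genuinely different route from the paper's. The paper first observes (by pigeonhole on the two available labels) that at least one c-pair must sit at $R^*$ and at least one at $R^{**}$, then enumerates in a figure the essentially distinct ways the c-pairs at the two shared reaction vertices can distribute among the three paths, and runs a separate parity comparison of the species-to-species sub-paths in each case (working out only one case in detail and asserting the rest are similar). You replace that case analysis with a single mod-$2$ invariant $m_P = n_P + \epsilon_P^* + \epsilon_P^{**}$ attached to each path, after verifying the one genuinely local fact --- that the edge-pair at a shared reaction vertex is a c-pair precisely when the two incident edges carry the same complex label, so its indicator is $1 + \epsilon_P^* + \epsilon_{P'}^*$ mod $2$ and the constant $1$'s cancel across the two endpoints. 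What your version buys is uniformity and completeness: there is no enumeration of configurations, the ``any two even forces the third'' symmetry is manifest from $m_A + m_B + m_C$ appearing twice, and the same computation with the endpoint terms deleted immediately yields the companion statement for S-to-S intersections (Lemma \ref{lem:S-to-SThreeCycleEven}), which the paper proves separately. What the paper's version buys is concreteness --- the figure makes visible exactly where the forced c-pairs sit --- at the cost of leaving most cases to the reader. One small point worth making explicit in your write-up: the three edges at $R^*$ (one per path) are distinct because the paths are edge-disjoint and each path, having $R^*$ only as an endpoint, contributes exactly one edge there; this is what makes $\epsilon_P^*$ well defined for each path independently.
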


\begin{proof}

	We begin by noting that both $R^*$ and $R^{**}$ will each have adjacent to them three distinct edges. Because there are only two possible labels for each edge (the reactant and product complexes of the adjacent reaction), there must be at least one c-pair centered at $R^*$ and at least one centered at $R^{**}$. (There might be more than one c-pair at $R^*$ or $R^{**}$ if all three adjacent edges carry the same complex label, but this can happen only in chemistries for which three or more species can appear in the same complex; we do not preclude this possibility.)
\begin{figure}[ht]
\centering
\includegraphics[scale=.51]{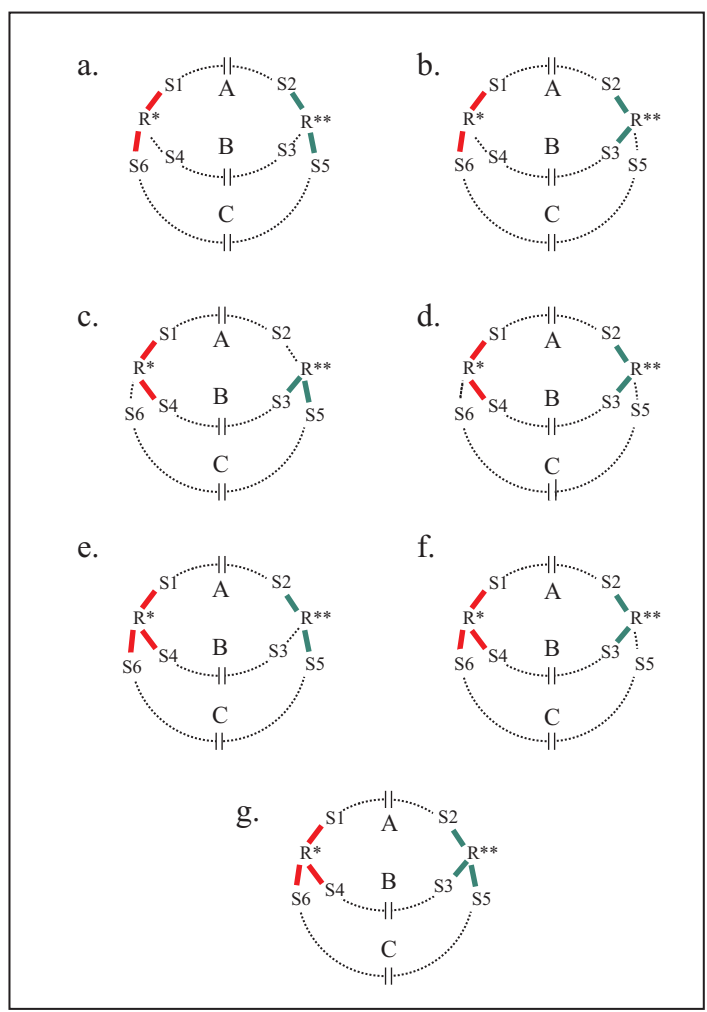}
\caption{Illustration for the proof of Lemma \ref{lem:R-to-RThreeCycleEven}}
\label{fig:R-toRThreeCycleCases}
\end{figure}

	 We show the essentially distinct possibilities for distribution of c-pairs adjacent to $R^*$ and $R^{**}$, relative to the paths $R^{*}AR^{**}$, $R^{*}BR^{**}$, and $R^{*}CR^{**}$, in Figure \ref{fig:R-toRThreeCycleCases}. (We have omitted from the figure cases in which $S_1$ coincides with $S_2$, $S_3$ coincides with $S_4$, or $S_5$ coincides with $S_6$, but analyses of these are not substantially different from those of the cases shown.)    

	The proof essentially amounts to studying each case in Figure \ref{fig:R-toRThreeCycleCases}, beginning with an examination of the relative parity of the number of c-pairs within the species-to-species sub-paths connecting $S_1$ to $S_2$, $S_3$ to $S_4$, and $S_5$ to $S_6$. From these considerations, it can be established in each case that the cycle  $R^{*}CR^{**}AR^{*}$ has an even number of c-pairs.

	Case (b) of the figure is typical: Because the cycle $R^{*}BR^{**}AR^{*}$ is even, because  within the cycle there is a c-pair centered at $R^{**}$, and because within the cycle there is no c-pair  centered at $R^*$, the parity of the number of c-pairs in the $S_1$-to-$S_2$ path must be opposite from the parity of the number of c-pairs within the  $S_3$-to-$S_4$ path.  Because the cycle  $R^{*}CR^{**}BR^{*}$ is even and because there are no c-pairs in the cycle centered at  $R^*$ or  $R^{**}$, the parity of the number of c-pairs in the $S_5$-to-$S_6$ path is identical to the parity of the number of c-pairs in the $S_3$-to-$S_4$ path. Thus the parity of the number of c-pairs in the $S_1$-to-$S_2$ path is opposite to the parity of the number of c-pairs in the $S_5$-to-$S_6
$, and we can conclude that the number of c-pairs in those two disjoint paths is odd. In the cycle  $R^{*}CR^{**}AR^{*}$ there is precisely one additional c-pair --- the one centered at $R^*$ --- so the cycle is even.

	The remaining cases can be examined in the same way.
\end{proof}

\begin{lemma}
\label{lem:S-to-SThreeCycleEven}
Suppose that in the sign-causality graph (or in the SR Graph) there are distinct species $S^*$ and $S^{**}$ and three edge-disjoint (not-necessarily-directed) paths, denoted $S^{*}AS^{**}$, $S^{*}BS^{**}$, and $S^{*}CS^{**}$, connecting $S^*$ to $S^{**}$. If the cycles $S^{*}BS^{**}AS^{*}$ and $S^{*}CS^{**}BS^{*}$ are both even, then so is the cycle $S^{*}CS^{**}AS^{*}$.
\end{lemma}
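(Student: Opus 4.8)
The plan is to exploit the single structural feature that makes this case strictly easier than its R-to-R counterpart in Lemma~\ref{lem:R-to-RThreeCycleEven}: here the two junction vertices $S^*$ and $S^{**}$ are \emph{species} vertices, whereas an edge-pair, and hence a c-pair, is by definition a pair of edges adjacent to a common \emph{reaction} vertex. First I would record that, because the three paths combine pairwise into genuine simple cycles (each vertex is visited once), the three paths must be internally vertex-disjoint, meeting only at $S^*$ and $S^{**}$; the configuration is thus a theta graph. Consequently every reaction vertex of the configuration lies in the interior of exactly one of the three paths $S^{*}AS^{**}$, $S^{*}BS^{**}$, $S^{*}CS^{**}$, and at such an interior reaction vertex the two incident cycle-edges are precisely the two path-edges. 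Whether those two edges form a c-pair is therefore a property of the path alone, independent of which cycle that path currently belongs to.

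The key step is to observe that \emph{no} c-pair can be centered at $S^*$ or at $S^{**}$: the edges meeting at a species vertex do not even constitute an edge-pair. Hence, in contrast with the reaction-junction setting, no c-pair ever straddles a junction, and there is no pigeonhole phenomenon forcing compulsory c-pairs at the meeting points. This removes entirely the label-based case analysis that Figure~\ref{fig:R-toRThreeCycleCases} was needed to organize in the R-to-R proof.

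With these facts the remainder is pure parity bookkeeping. I would let $p(A)$, $p(B)$, and $p(C)$ denote the number of c-pairs centered at the interior reaction vertices of paths $A$, $B$, and $C$, respectively. By the observations above, the number of c-pairs in the cycle $S^{*}BS^{**}AS^{*}$ equals $p(A)+p(B)$, that in $S^{*}CS^{**}BS^{*}$ equals $p(B)+p(C)$, and that in the target cycle $S^{*}CS^{**}AS^{*}$ equals $p(A)+p(C)$. The hypothesis that the first two cycles are even gives $p(A)+p(B)\equiv 0$ and $p(B)+p(C)\equiv 0\pmod 2$; adding these congruences and canceling $2p(B)$ yields $p(A)+p(C)\equiv 0\pmod 2$, so the third cycle is even.

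The only point requiring genuine care — and the place I would expect a referee to probe — is the justification that every c-pair of each cycle is accounted for by exactly one of $p(A)$, $p(B)$, $p(C)$, with none omitted near the junctions and none double-counted. This reduces precisely to the two facts flagged above: c-pairs are centered only at reaction vertices, and the junctions are species vertices from which internally disjoint paths emanate, so the c-pair counts of the three paths partition the c-pairs of every cycle cleanly. Once that partition is in place, the conclusion is immediate and needs no appeal to complex-label casework, confirming the paper's remark that the R-to-R case is the harder of the two.
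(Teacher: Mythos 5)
Your proof is correct and follows essentially the same route as the paper's: the paper likewise observes that c-pairs are centered at reaction vertices and hence lie entirely interior to the three paths, so that the evenness hypotheses force all three paths to carry c-pair counts of the same parity, from which the third cycle's evenness is immediate. Your explicit congruence bookkeeping $p(A)+p(C)\equiv (p(A)+p(B))+(p(B)+p(C)) \pmod 2$ is just a slightly more formal rendering of that same parity argument.
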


\begin{proof} Proof of this lemma is similar to the proof of Lemma \ref{lem:R-to-RThreeCycleEven}, but it is substantially simpler: Because c-pairs are centered at reaction vertices, all c-pairs will be interior to the paths  $S^{*}AS^{**}$, $S^{*}BS^{**}$, and $S^{*}CS^{**}$. In fact, the hypothesis ensures that the number of c-pairs along all three paths will be even (respectively, odd) if that number is even (respectively, odd) along any one of them. From this it follows that the cycle $S^{*}CS^{**}AS^{*}$ is even.
\end{proof}

\subsection{Proof of Proposition \ref{prop:SourceBlockSimplicity}} 
\label{subsec:RBlockSBlock}

In this subsection we provide proof of  Proposition \ref{prop:SourceBlockSimplicity}, which is repeated below:
\medskip

\noindent
\textbf{Proposition \ref{prop:SourceBlockSimplicity}.}\; \emph{Suppose that, for the  reaction network under consideration, the Species-Reaction Graph satisfies condition (ii) of Theorem \ref{thm:ConcordanceTheorem}. Then, in any source-block of the sign-causality graph, at most one of the following can obtain:}
\medskip

\noindent
\emph{(i) There is a reaction vertex having more than two adjacent species vertices.}
\smallskip

\noindent
\emph{(ii) There is a species vertex having more than two adjacent reaction vertices.}

\bigskip
	Throughout this subsection we assume that the Species-Reaction Graph of the network under consideration satisfies condition (ii) of Theorem \ref{thm:ConcordanceTheorem}. Because the putative source under consideration, viewed in the SR Graph, is a critical subgraph (Remark \ref{rem:SourceAsCriticalSubgraph}), it follows that no two even cycles in the source have an S-to-R intersection.

	Consider a source-block $G$, which by its very nature is strongly connected and nonseparable, and let $G_0, G_1, \dots, G_k$ be a directed ear decomposition of the block  (\S \ref{sec:Ears}). If the block consists of just a single directed cycle (i.e., if $G = G_0$), then each reaction is adjacent to precisely two species, and each species is adjacent to precisely two reactions. Thus the block is both an S-block and an R-block. 

	Suppose, then, that the block consists of more than a single directed cycle. Then $G_1$ is a directed cycle $G_0$ together with an adjoined directed ear $P_0$ of that cycle residing within the source-block. If that ear joins a reaction vertex to a species vertex of the  cycle, then it is easy to see that $G_1$ will have two directed (and hence even) cycles with an S-to-R intersection, in which case condition (ii) of Theorem \ref{thm:ConcordanceTheorem} would be violated.

	Thus, there are two possibilities: Either the ear $P_0$ is a directed path joining two reactions of the original directed cycle $G_0$ or else  the ear $P_0$ is a directed path that joins two species of that cycle. We shall suppose the former case and then argue in Lemma \ref{lem:R-to-REarDecomposition} that all subsequent ears in the ear-decomposition must also be reaction-to-reaction paths. Thus, at each stage of the decomposition, new edges are adjoined to old vertices only at R-vertices. \emph{In this case, the block can contain no species vertex adjacent to more than two edges. That is, the block is an S-block.}

  \emph{If, on the other hand, the ear $P_0$ is a species-to-species path, then the block is an R-block} by an almost identical argument, via Lemma \ref{lem:S-to-SEarDecomposition} below, with the roles of species and reactions reversed.)

\begin{lemma} 
\label{lem:R-to-REarDecomposition}
Suppose that, in a directed ear decomposition $G_0, G_1, \dots, G_k$ of a source block $G$, no species vertex of $G_i, \; 0 < i < k - 1$,  is adjacent to more than two edges. Then the directed ear $P_i, \; 0 < i < k - 1$, joins a reaction vertex of $G_i$ to another reaction vertex of  $G_i$.
\end{lemma}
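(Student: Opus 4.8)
The plan is to argue by contradiction. Fix an index $i$ with $0 < i < k-1$ and suppose, contrary to the conclusion, that the directed ear $P_i$ has at least one \emph{species} endpoint $S_a$ in $G_i$. From this I would manufacture two even cycles lying inside the source and sharing a single species-to-reaction path, which contradicts condition (ii) of Theorem \ref{thm:ConcordanceTheorem}; that condition is available because the source, viewed in the SR Graph, is a critical subgraph (Remark \ref{rem:SourceAsCriticalSubgraph}). Since $G_i$ is bipartite, excluding a species endpoint forces both endpoints of $P_i$ to be reactions, which is exactly the assertion.

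First I would exploit the degree hypothesis. By assumption every species vertex of $G_i$ meets at most two edges, and since $G_i$ is strongly connected each vertex has in-degree and out-degree at least one; hence every species of $G_i$ has \emph{exactly} one incoming and one outgoing edge. In particular $S_a$ has a unique $G_i$-in-edge $a_1$ and a unique $G_i$-out-edge $a_2$, while the ear supplies a third edge $a_3$ at $S_a$ (an out-edge when $S_a$ is the tail of $P_i$, an in-edge when $S_a$ is its head). Using strong connectivity of $G_i$, I would produce a simple directed cycle $Z$ in $G_i$ through $S_a$ (which necessarily traverses $a_1$ and $a_2$); being directed, $Z$ is even by Lemma \ref{lem:EvenDirectedCycle}.

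Next I would build a companion even cycle $C$ running through the ear, arranged so that the overlap $Z \cap C$ is a single clean path. Treating the tail case ($S_a$ the tail of $P_i$, head $w \in G_i$), I would close the ear by a return path $C'$ from $w$ to $S_a$ chosen by a \emph{first-hit} recipe: take any directed $w$-to-$S_a$ path in $G_i$, let $x$ be the first vertex on it that already lies on $Z$, and then splice in the $Z$-arc from $x$ to $S_a$. Setting $C := P_i \cup C'$ gives a simple directed, hence even, cycle whose only edges shared with $Z$ are those of the $Z$-arc from $x$ to $S_a$, so $Z \cap C$ is a single path having the species $S_a$ at one end. The head case is handled symmetrically with a \emph{last-hit} return path.

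The hard part, and the real crux, is to show that the far endpoint $x$ of this arc is a \emph{reaction}, so that the overlap is genuinely species-to-reaction rather than a (permissible) species-to-species path. Here the exact-degree-two property of species does the work: were $x$ a species, it would have a unique $G_i$-in-edge, which $Z$ must use in order to pass through $x$; but the $w$-to-$x$ portion of $C'$ also reaches $x$ through that same unique in-edge, placing the preceding reaction vertex on $Z$ strictly before $x$, and contradicting that $x$ is the \emph{first} vertex of the path meeting $Z$. Thus $x$ is a reaction, $Z \cap C$ is a species-to-reaction path, and the two even cycles $Z$ and $C$ violate condition (ii). I would finish by observing that the argument only ever uses the degree-two status of the species of $G_i$ (the ear's interior is touched only at $S_a$ and $w$), so an entirely parallel argument with the roles of species and reactions interchanged yields the companion Lemma \ref{lem:S-to-SEarDecomposition}.
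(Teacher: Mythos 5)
Your overall strategy (rule out a species endpoint of the ear by exhibiting two even cycles of the source with an S-to-R intersection, contradicting condition (ii) via Remark \ref{rem:SourceAsCriticalSubgraph}) is the same as the paper's, and your first-hit/last-hit splicing is a clean way to handle part of it. The observation that every species of $G_i$ has exactly one in-edge and one out-edge, so that a species $x$ can only be entered along the edge that $Z$ itself uses, correctly forces the splice vertex $x$ to be a reaction whenever $x \neq w$. That disposes of the paper's Case 1 (ear joining a species to a reaction) and even of some species-to-species configurations.

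There is, however, a genuine gap at exactly the point your argument declares to be the crux. Your proof that $x$ is a reaction breaks down when $x = w$, i.e.\ when the far endpoint $w$ of the ear is itself a species vertex lying on the chosen cycle $Z$ (you cannot in general choose $Z$ to avoid $w$; e.g.\ for $i=1$ the only directed cycle through $S_a$ may be $G_0$ itself). In that configuration the $w$-to-$x$ portion of the return path is empty, so there is no ``predecessor already on $Z$'' to contradict the first-hit choice, and your cycle $C = P_i \cup (Z\text{-arc from } w \text{ to } S_a)$ meets $Z$ in a \emph{species-to-species} path. Indeed all three cycles formed from $P_i$ and the two arcs of $Z$ between $S_a$ and $w$ pairwise intersect in S-to-S paths, which condition (ii) does not forbid, so no contradiction follows. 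This is precisely the hard half of the paper's proof (its Case 2 with internally disjoint paths $A$ and $B$): the paper must invoke nonseparability and Proposition \ref{prop:FirstEarProposition} to produce a \emph{second} directed ear $C$ of the cycle $S^{**}AS^{*}BS^{**}$ inside $G_i$, note that by the degree hypothesis this second ear must be reaction-to-reaction, and then run a case analysis over the relative positions of the two ears (Figure \ref{fig:SixCases}), using Lemmas \ref{lem:R-to-RThreeCycleEven} and \ref{lem:S-to-SThreeCycleEven} to certify evenness of the resulting (not necessarily directed) cycles before extracting an S-to-R intersection. None of this machinery appears in your proposal, and without it the species-to-species case is not eliminated. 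Your closing remark about interchanging the roles of species and reactions to obtain Lemma \ref{lem:S-to-SEarDecomposition} is fine, but the same gap would recur there.
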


\begin{lemma} 
\label{lem:S-to-SEarDecomposition}
Suppose that, in a directed ear decomposition $G_0, G_1, \dots, G_k$ of a source block $G$, no reaction vertex of $G_i, \; 0 < i < k - 1$,  is adjacent to more than two edges. Then the directed ear $P_i, \; 0 < i < k - 1$, joins a species vertex of $G_i$ to another species vertex of  $G_i$.
\end{lemma}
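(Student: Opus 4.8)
The plan is to prove Lemma \ref{lem:S-to-SEarDecomposition} as the exact dual of Lemma \ref{lem:R-to-REarDecomposition}, interchanging the roles of species and reaction vertices throughout. Fix $i$ in the stated range and consider the directed ear $P_i$: a directed path in $G$ whose two end vertices lie in $G_i$ but whose interior vertices do not. Because the sign-causality graph is bipartite, these two endpoints are either both species vertices, both reaction vertices, or one of each. My goal is to exclude the latter two possibilities, leaving only the species-to-species case asserted by the lemma.

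The key step is a degree count at reaction vertices. Each graph in a directed ear decomposition is strongly connected and nonseparable, so every one of its vertices is incident to at least two of its edges; combined with the hypothesis that no reaction vertex of $G_i$ is incident to more than two edges, every reaction vertex of $G_i$ carries its full complement of exactly two incident edges. Now a directed ear attaches to $G_i$ a \emph{new} edge --- one not already present in $G_i$ --- at each of its two endpoints, since the ear's interior vertices, and hence its edges meeting $G_i$, are disjoint from $G_i$. Consequently, were either endpoint of $P_i$ a reaction vertex $R$ of $G_i$, the ear would supply $R$ with a third incident edge, contradicting the degree hypothesis. Hence neither endpoint of $P_i$ is a reaction vertex, so both are species vertices and $P_i$ joins one species vertex of $G_i$ to another, as claimed.

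Two remarks round out the argument. First, the degree count already excludes every ear incident to a reaction vertex, and in particular the \emph{mixed} species-to-reaction ear; should one prefer the geometric viewpoint, a mixed ear would close up with paths of $G_i$ into two directed --- and therefore even, by Lemma \ref{lem:EvenDirectedCycle} --- cycles whose intersection is a species-to-reaction path, which is impossible because the source, viewed in the Species-Reaction Graph, is a critical subgraph subject to condition (ii) of Theorem \ref{thm:ConcordanceTheorem} (Remark \ref{rem:SourceAsCriticalSubgraph}). Second, note that, in contrast to the treatment of the initial ear $P_0$, the asymmetric condition (ii) is not actually needed for this inductive step: here the reaction-degree bound does all the work, so the dualization of Lemma \ref{lem:R-to-REarDecomposition} carries through verbatim, the asymmetry of condition (ii) having already been absorbed at the $P_0$ stage. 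The only point requiring care --- and the sole, minor obstacle I anticipate --- is the bookkeeping that an attached ear genuinely contributes a new incident edge at each endpoint, which rests on the freshness of an ear's interior vertices together with the minimum-degree-two consequence of nonseparability; granting the already-established Lemma \ref{lem:R-to-REarDecomposition}, no deeper difficulty arises.
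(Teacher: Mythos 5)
Your central step --- the degree count at reaction vertices --- does not work, because the hypothesis of Lemma \ref{lem:S-to-SEarDecomposition} bounds the degree of a reaction vertex \emph{within $G_i$}, not within $G_{i+1}$ or within the full source block $G$. An ear $P_i$ attached at a reaction vertex $R$ of $G_i$ gives $R$ a third incident edge only in $G_{i+1}$; that is perfectly consistent with $R$ having exactly two incident edges in $G_i$, so no contradiction with the stated hypothesis arises. (In the dual situation the ears \emph{do} attach repeatedly at reaction vertices --- that is precisely how a reaction vertex of an S-block comes to have more than two adjacent species --- and your argument, dualized, would forbid that as well.) If instead you read the hypothesis as a bound on degrees in $G$, the lemma becomes trivial but also useless and circular: the lemma is the inductive step by which the paper \emph{propagates} the degree bound from $G_i$ to $G_{i+1}$ and ultimately to $G=G_k$, so the bound on $G_{i+1}$ or $G$ cannot be assumed. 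Consequently your closing claim that condition (ii) of Theorem \ref{thm:ConcordanceTheorem} ``is not actually needed for this inductive step'' is incorrect; it is doing essentially all of the work.

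What must actually be shown is that an ear with a reaction-vertex endpoint forces, inside $G_{i+1}$, two even cycles having a species-to-reaction intersection, contradicting condition (ii) via Remark \ref{rem:SourceAsCriticalSubgraph}. Your parenthetical ``geometric viewpoint'' gestures at this for the mixed species-to-reaction ear, but even there one must handle the sub-case in which the two directed connecting paths of $G_i$ share internal vertices (the paper locates the first common reaction vertex and assembles a different pair of directed, hence even, cycles). The reaction-to-reaction ear --- exactly the case your degree count was meant to dispose of --- is the genuinely hard one: the directed cycle formed by the two connecting paths meets the cycle through $P_i$ in an R-to-R path, which condition (ii) does not directly forbid. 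One must then invoke nonseparability to produce a \emph{second} ear of that cycle lying in $G_i$, use the degree hypothesis to force that second ear to be species-to-species, and examine all arrangements of the two disjoint ears, certifying the relevant cycles as even by means of Lemmas \ref{lem:R-to-RThreeCycleEven} and \ref{lem:S-to-SThreeCycleEven} before condition (ii) can be applied. None of this machinery is replaceable by the degree count, and it is absent from your proposal.
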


	We provide only the proof of  Lemma \ref{lem:R-to-REarDecomposition}, for the proof of Lemma \ref{lem:S-to-SEarDecomposition} is the same, apart from a reversal of the roles of  species and reactions.

\begin{proof}[Proof of Lemma \ref{lem:R-to-REarDecomposition}] 

	The ear $P_i$ can connect (1) a reaction vertex of $G_i$ with a species vertex of  $G_i$, (2)  a species vertex of $G_i$ with a species vertex of  $G_i$, or (3) a reaction vertex of $G_i$ with a reaction vertex of  $G_i$. We will eliminate the first two possibilities.
\medskip

\noindent
\emph{(Case 1)} Suppose that the directed ear $P_i$ connects a reaction vertex of $G_i$, say $R^*$, with a species vertex of $G_i$, say $S^*$. We will suppose also that  $P_i$ is directed from $S^*$ toward $R^*$; if $P_i$ is oppositely directed the argument in similar.  See Figure \ref{fig:SBlockLemmaArgument}. Because $G_i$ is strongly connected there is a directed path (labeled $A$ in the figure) entirely within $G_i$  connecting $S^*$ to  $R^*$ and also a directed path (labeled $B$) entirely within $G_i$ connecting  $R^*$ to  $S^*$.

\begin{figure}[ht]
\centering
\includegraphics[scale=.48]{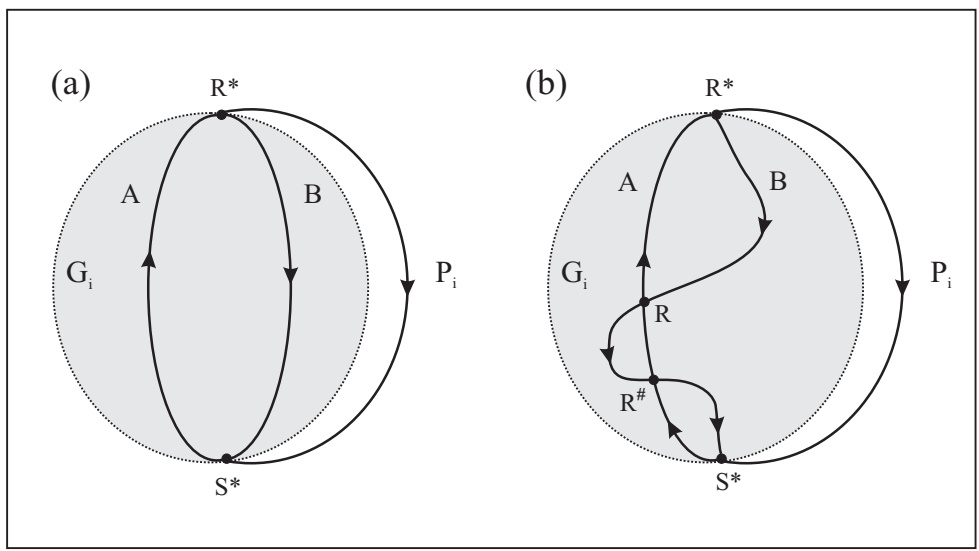}
\caption{Case 1 in the proof of Lemma \ref{lem:R-to-REarDecomposition}}
\label{fig:SBlockLemmaArgument}
\end{figure} 

	If, apart from $S^*$ and $R^*$, these paths have no vertex in common, then they form a directed (and therefore even) cycle within $G_i$, and one of them (path $A$), when taken with the directed ear $P_i$, forms a second directed (and therefore even) cycle that has an S-to-R intersection with the first cycle.  This constitutes a violation of condition (ii) of Theorem \ref{thm:ConcordanceTheorem}.  See Figure \ref{fig:SBlockLemmaArgument}(a).

	Suppose, then, that paths $A$ and $B$ have internal vertices in common. See Figure \ref{fig:SBlockLemmaArgument}(b).\footnote{To simplify the drawing in both Figures \ref{fig:SBlockLemmaArgument} and \ref{fig:SBlockLemmaArgumentPart2} we do not show paths $A$ and $B$ having common edges, but we do not preclude this possibility. The argument for Lemma \ref{lem:R-to-REarDecomposition} does not require that paths $A$ and $B$ be edge-disjoint.} Adjacent to $S^*$ there is an outgoing edge on path $A$ and an incoming edge on path $B$. Thus. the reaction vertices adjacent to $S^*$ on paths $A$ and $B$ are distinct. Now trace path $B$ backward from $S^*$ until its first intersection with path $A$ at a vertex common to both paths. Because at least three edges within $G_i$ meet in this common vertex, it must be a reaction vertex, which we call  $R^{\#}$. We denote by $R^{\#}BS^*$ the directed sub-path along $B$ connecting $R^{\#}$ to  $S^*$ and by  $S^*AR^{\#}$ the directed sub-path along $A$ connecting  $S^*$ to $R^{\#}$. By virtue of their construction, these two sub-paths have no internal vertices in common, so together they form a directed (and therefore even) cycle
 $S^*AR^{\#}BS^*$. Note that this cycle has the S-to-R intersection  $S^*AR^{\#}$ with the directed (and therefore even) cycle formed from path $A$ and the ear $P_i$. This contradicts condition (ii) of Theorem \ref{thm:ConcordanceTheorem}.  Thus, the directed ear $P_i$ cannot connect a reaction vertex of $G_i$ with a species vertex of $G_i$.
\medskip

\noindent
\emph{(Case 2)} Suppose that the directed ear $P_i$ connects a species vertex of $G_i$, say $S^*$, with a different species vertex of $G_i$, say $S^{**}$, and that these vertices have been labeled so that the ear $P_i$ is directed from $S^{*}$ to $S^{**}$.  Because  $G_i$ is strongly connected, there is a directed path (labeled $A$) within $G_i$ that connects  $S^{**}$ to  $S^{*}$ and also a directed path (labeled $B$) within $G_i$ that connects  $S^{*}$ to  $S^{**}$. Should these two paths have internal vertices in common (Figure \ref{fig:SBlockLemmaArgumentPart2}(a)) then, by an argument almost identical to one in Case 1, there would be a contradiction of condition (ii) of Theorem \ref{thm:ConcordanceTheorem}. 

\begin{figure}[ht]
\centering
\includegraphics[scale=.43]{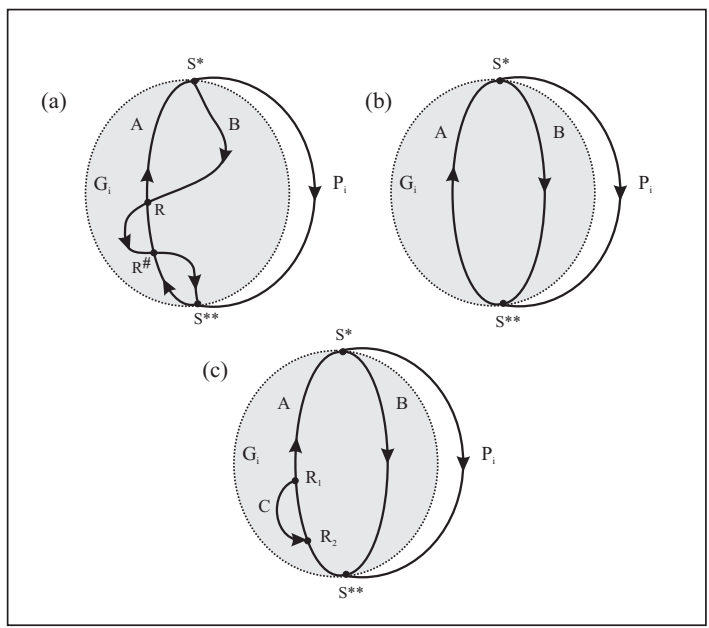}
\caption{Case 2 in the proof of Lemma \ref{lem:R-to-REarDecomposition}}
\label{fig:SBlockLemmaArgumentPart2}
\end{figure} 

	If, however, paths $A$ and $B$ have no internal vertices in common (Figure \ref{fig:SBlockLemmaArgumentPart2}(b)), then this by itself does not constitute a contradiction of condition (ii) of Theorem \ref{thm:ConcordanceTheorem}. On the other hand, $G_i$ is a nonseparable and strongly connected directed graph that is not merely a directed cycle, so the directed cycle $S^{**}AS^{*}BS^{**}$ must have a directed ear within $G_i$, which we label $C$ (Proposition \ref{prop:FirstEarProposition}). Moreover, that ear can only connect two reaction vertices of the cycle because, by hypotheses, within $G_i$ it is only reaction vertices that can be adjacent to more than two edges. (See Figure \ref{fig:SBlockLemmaArgumentPart2}(c).) Thus the cycle $S^{**}AS^{*}BS^{**}$ has two disjoint directed ears, one ($P_i$) connecting a species vertex to a species vertex, and another ($C$) connecting a reaction vertex to a reaction vertex. 

	Note that in Figure \ref{fig:SBlockLemmaArgumentPart2}(c) we show only one of several arrangements whereby the directed R-to-R and S-to-S ears can join to the directed cycle. We show in Figure \ref{fig:SixCases} the essentially different arrangements. (In Figure \ref{fig:SixCases}, $R_O$ and $S_O$ are the cycle-vertices from which the ears are outgoing from the cycle, while $R_I$ and $S_I$ are the cycle-vertices to which the ears are incoming.) In each case there are two cycles --- one containing the R-to-R ear and the other containing the S-to-S ear --- that have an S-to-R intersection. (For readers with access to color these cycles are colored red and green.)  Moreover, each of these cycles is even, either because the cycle is directed or as a consequence of Lemma \ref{lem:R-to-RThreeCycleEven} or \ref{lem:S-to-SThreeCycleEven}. 	Thus, condition (ii) of Theorem \ref{thm:ConcordanceTheorem} is again violated.

	We have shown that Cases 1 and 2 cannot obtain, so $P_i$ must in fact join two R-vertices of $G_i$.
\begin{figure}[tbh]
\centering
\includegraphics[scale=.48]{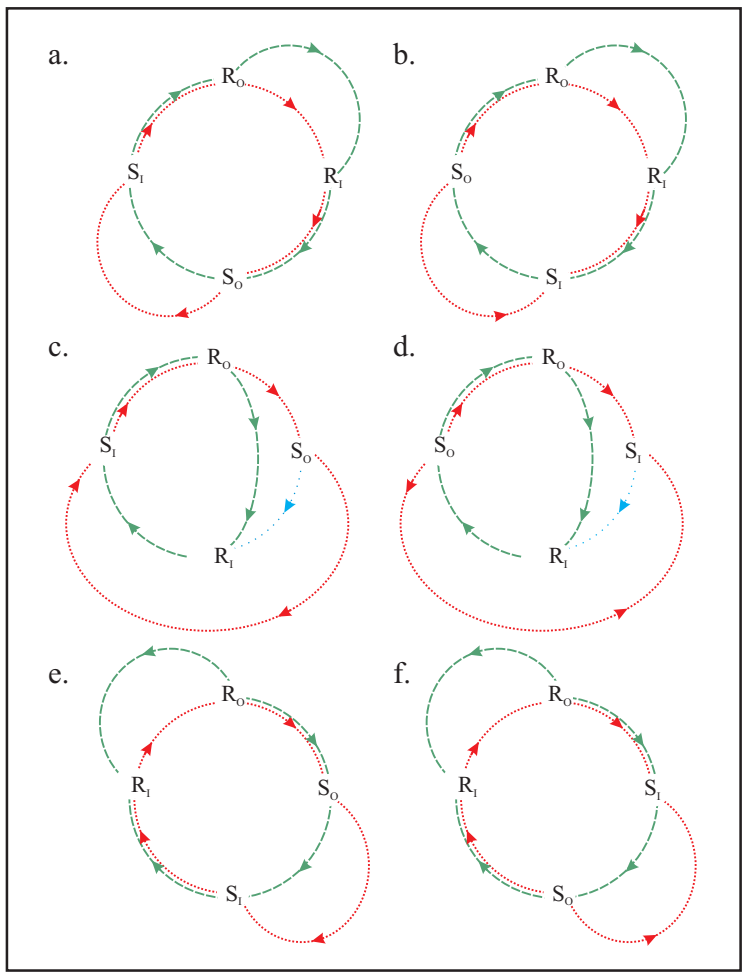}
\caption{A directed cycle with disjoint directed R-to-R and S-to-S ears }
\label{fig:SixCases}
\end{figure}
\end{proof}

	Taken together, Lemmas \ref{lem:R-to-REarDecomposition} and \ref{lem:S-to-SEarDecomposition} ensure that each source-block is either an S-block, an R-block, or a cycle, in which case it is both an S-block and an R-block.

\subsection{Proof of Proposition \ref{prop:EvenCyclesWithinSBlocksRBlocks}}

	Here we prove  Proposition \ref{prop:EvenCyclesWithinSBlocksRBlocks}, which is repeated below: 

\medskip
\noindent
\textbf{Proposition \ref{prop:EvenCyclesWithinSBlocksRBlocks}.} \emph{Every (not necessarily directed) cycle that lies within an S-block or an R-block of a sign-causality graph source is even.} 
\medskip

	We consider a sign-causality graph source-block, $G$, which we will suppose to be an S-block.  (Apart from a reversal of the roles of species and reactions, there is no difference in the proof if the source-block is an R-block.\footnote{For an R-block, however, there is a substantially simpler argument. In an R-block every reaction vertex is adjacent to precisely two species vertices. From this and the fact that the block is strongly connected it follows that every cycle in the block is the union of causal units, so each must be even by an argument given in \S \ref{subsec:EvenCycleSec}.}) If $G$ consists of a single cycle, that cycle is directed and therefore even. Suppose, then, that $G$ consists of more than a single cycle.

	Let $G_0, G_1, \dots, G_k$  be a directed ear decomposition of $G$, so that $G_0$ is a directed (and therefore even) cycle and $G_k = G$. Proceeding inductively, we suppose that for a given value of $i$, $0 \leq i < k$, every cycle in $G_i$ is even. We want to show that $G_{i+1}$ has this same property. Note that  $G_{i+1}$ results from the addition to $G_{i}$ of a directed ear $P_i$, which must join a reaction vertex of $G_{i}$, say $R^{*}$, to another reaction vertex of $G_{i}$, say $R^{**}$. (Recall that $G$ is an S-block, so a species vertex can be adjacent to no more than two edges.)

	Any cycle in $G_{i+1}$ that is entirely within $G_{i}$ is already even by the inductive hypothesis. A cycle in $G_{i+1}$ that is not entirely within $G_{i}$ must contain the ear $P_i$ and is the union of $P_i$ with a not-necessarily-directed path in  $G_{i}$, denoted $A$, that connects $R^{*}$ to $R^{**}$ and that is (necessarily) edge-disjoint from  $P_i$.  Our aim is to show that the cycle $\scrC^{\dagger} := R^{*}P_iR^{**}AR^{*}$ is even. 

	Hereafter, if $C$ is a path in $G$ connecting $R^{*}$ and $R^{**}$, and if $R$ and $R'$ are vertices along that path, we denote by  $RCR'$ the sub-path of $C$ that connects  $R$ with $R'$. No direction is implied.

	Because $G_{i}$ is strongly connected, there is a directed path, labeled $B_0$, that lies entirely within $G_{i}$ and connects $R^{**}$ to $R^{*}$. (See Figure \ref{fig:ProofOfEvenProp}(a).) Note that the cycle $\scrC_0 := R^{*}P_iR^{**}B_0R^{*}$ is even because it is directed. If the path $B_0$ has no edge in common with the path $A$, then  the cycle  $R^{*}AR^{**}B_0R^{*}$, which lies entirely within $G_{i}$ is also even by the inductive hypothesis. In this case, Lemma \ref{lem:R-to-RThreeCycleEven} ensures that the cycle $\scrC^{\dagger} = R^{*}P_iR^{**}AR^{*}$ is even, which is what we wanted to show.

	On the other hand, suppose that the path $B_0$ does have one or more edges in common with the path $A$. If  $B_0$ coincides with $A$ then the cycle $\scrC^{\dagger} = R^{*}P_iR^{**}AR^{*}$ coincides with the even cycle $\scrC_0= R^{*}P_iR^{**}B_0R^{*}$, in which case we are finished.

\begin{figure}[tbh]
\centering
\includegraphics[scale=.49]{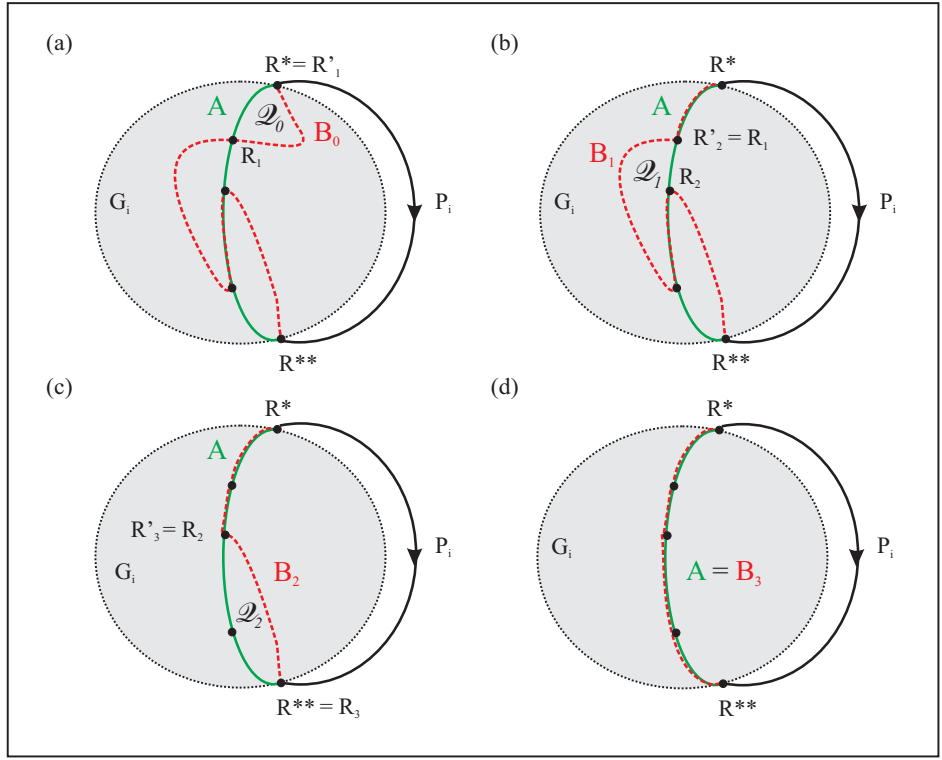}
\caption{Schematic illustration for proof of Proposition \ref{prop:EvenCyclesWithinSBlocksRBlocks} }
\label{fig:ProofOfEvenProp}
\end{figure}

	Suppose, therefore, that the path $A$ contains an edge that is not contained in path $B_0$, so the cycle $\scrC^{\dagger}$ is different from the even cycle $\scrC_0$. In this case we will argue that there is a sequence of cycles $\scrC_0, \scrC_1, \scrC_2,\dots,\scrC_p$ with  the following properties:

\begin{enumerate}[(i)]
\item $\scrC_{\theta}$ is even, $\theta = 0,\dots, p$.

\item $\scrC_{\theta}$ is the union of $P_i$ and a path $B_{\theta}$ that connects $R^{**}$ with $R^{*}$ and that lies entirely within $G_{i}$. 

\item The path $B_{\theta}$ contains a sub-path $R^*{B}_{\theta}R_{\theta}$ that lies entirely within path $A$; that is, $R^*{B}_{\theta}R_{\theta} = R^*AR_{\theta}$.\footnote{For $\theta = 0$, the sub-path mentioned might be the trivial one containing only the vertex $R^*$, in which case $R_0 = R^*$} Moreover, for $\theta > 0$, $R^*{B}_{\theta - 1}R_{\theta - 1}$ (= $R^*AR_{\theta - 1}$) is properly contained within $R^*{B}_{\theta}R_{\theta}$ (= $R^*AR_{\theta}$).

\item $B_p = A$ (so that $\scrC_{p} = \scrC^{\dagger}$, whereupon $\scrC^{\dagger}$ is even).  
\end{enumerate}

	Because of the construction of $\scrC_0$ it satisfies  conditions (i)-(iii).   Suppose, then, that for a particular $\theta > 0$, $\scrC_{\theta}$ satisfies conditions (i)-(iii). We want to show that if path $B_{\theta}$ is not identical to path $A$, then there is a cycle $\scrC_{\theta + 1}$ (and, in particular, a path $B_{\theta + 1}$) such that the requirements of (i)-(iii) are met. 

	Suppose that path $B_{\theta}$ is not identical to path $A$. Moving away from $R^*$ along path $A$, choose the first edge in $A$ that is not in $B_{\theta}$, and consider the longest sub-path of $A$ that contains this edge and that has no  edge in common with $B_{\theta}$. It is not difficult to see that the end-vertices of this sub-path must  be members of both $A$ and $B_{\theta}$ and that each such end-vertex will be adjacent to at least three edges of $G_{i+1}$. Because $G$ is an S-block, those end-vertices must be reaction vertices, denoted $R_{\theta + 1}$ and $R'_{\theta + 1}$, with $R_{\theta + 1}$ denoting the end-vertex furthest from $R^{*}$ along path $A$. (See Figure \ref{fig:ProofOfEvenProp}(a)-(c).) Note that the sub-path $R^*B_{\theta}R_{\theta}$ (= $R^*AR_{\theta}$) must be contained properly within
$R^*AR_{\theta + 1}$. Were it not, then $R'_{\theta + 1}AR_{\theta + 1}$ would lie entirely in  $B_{\theta}$, contradicting what has been assumed.

	Because no edge of $R_{\theta + 1}{A}R'_{\theta + 1}$ is an edge of $B_{\theta}$, the sub-paths $R_{\theta + 1}{B}_{\theta}R'_{\theta + 1}$ and $R_{\theta + 1}{A}R'_{\theta + 1}$ are edge-disjoint. Together they form a cycle,  \[\mathscr{Q_{\theta}} := R_{\theta + 1}{B}_{\theta}R'_{\theta + 1}A R_{\theta + 1},\] that is contained entirely within $G_{i}$ and is therefore even by the inductive hypothesis. 

	Note that the even cycle $\scrC_{\theta}$ is not only the union of $P_i$ and $B_{\theta}$, it is also he union of the path  $R'_{\theta + 1}B_{\theta}R_{\theta + 1}$ and its complementary path within $\scrC_{\theta}$, the one connecting $R_{\theta + 1}$ to $R'_{\theta + 1}$ via path $P_i$.  This last path we denote $R_{\theta + 1}DR'_{\theta + 1}$. Thus, we can write $\scrC_{\theta} = R'_{\theta + 1}B_{\theta}R_{\theta + 1}DR'_{\theta + 1}$. Apart from vertices and edges that lie in $P_i$, all vertices and edges in $R_{\theta + 1}DR'_{\theta + 1}$ reside in $B{_\theta}$.

	Because the paths $R_{\theta + 1}{A}R'_{\theta + 1}$,  $R_{\theta + 1}{B}_{\theta}R'_{\theta + 1}$ and $R_{\theta + 1}DR'_{\theta + 1}$ are edge-disjoint and because the cycles $\scrC_{\theta}$ and $\mathscr{Q}_{\theta}$ are even, it follows from Lemma \ref{lem:R-to-RThreeCycleEven} that the cycle $\scrC_{\theta + 1} := R'_{\theta + 1}AR_{\theta + 1}DR'_{\theta + 1}$ is even. Thus, the cycle $\scrC_{\theta + 1}$ satisfies condition (i) above.

	Note that cycle $\scrC_{\theta + 1}$  is union of $P_i$ with the path 
\begin{eqnarray*}
B_{\theta + 1} &:=& R^{*}B_{\theta}R'_{\theta +1}AR_{\theta +1}B_{\theta}R^{**}\\
&=& R^{*}AR'_{\theta +1}AR_{\theta +1}B_{\theta}R^{**}\\
&=& R^{*}AR_{\theta +1}B_{\theta}R^{**}.
\end{eqnarray*}
Thus, $\scrC_{\theta + 1}$ satisfies condition (ii). Note too that, as indicated earlier,  $R^{*}B_{\theta}R_{\theta}$ (= $R^{*}AR_{\theta}$) is properly contained within  $R^{*}AR_{\theta +1}$. Thus, condition (iii) is also satisfied.

	So long as $B_{\theta}$ is not identical to $A$ these iterations can continue, with every new $B_{\theta + 1}$ obtained by replacing a sub-path $R^*B_{\theta}R_{\theta + 1}$ of $B_{\theta}$ with a sub-path $R^{*}AR_{\theta + 1}$ of $A$, with each such sub-path of $A$ strictly longer than its predecessor,  $R^*AR_{\theta}$. Because $A$ has a finite number of edges, the process must terminate with some $B_p$ identical to $A$.

	Thus, the cycle $\scrC^{\dagger}$, which is identical to   $\scrC_p$, is even.

\section{Proof of Proposition \ref{prop:RBlockMExistence}.}
\label{app:RBlockAppendix}

	In this appendix we prove Proposition \ref{prop:RBlockMExistence}, which is repeated below. The argument here has interesting connections to the mathematics in \cite{feinberg_necessary_1989}, a paper about detailed balance in mass action systems.
\medskip

\noindent
\textbf{Proposition \ref{prop:RBlockMExistence}}. \emph{Suppose that, in a sign-causality graph source, an \mbox{R-block} has species set $\scrS^*$, and suppose that no directed cycle in the block is stoichiometrically expansive relative to the $\rightsquigarrow$ orientation. Then there is a set of positive numbers $\{M_s\}_{s \; \in \scrS^*}$ such that, for each causal unit $s \rightsquigarrow R \rightsquigarrow s'$ in the block, the following relation is satisfied:}
\begin{equation}
f_{R \rightsquigarrow s'}M_{s'} - e_{s \rightsquigarrow R}M_{s} \leq 0.
\end{equation}

\begin{proof} Hereafter we let $\mathscr{U}$ denote the set of causal units in the R-block under consideration. Given the hypothesis, we want to show the existence of positive numbers $\{M_s\}_{s \; \in \scrS^*}$ such that

\begin{equation}
\frac{M_{s'}}{M_{s}} \;\leq \; \frac{e_{s \leadsto R}}{f_{R \leadsto s'}}, \quad \forall \; s  \leadsto R \leadsto s' \; \in \mathscr{U}
\end{equation}
or, equivalently,
\begin{equation}
\label{eq:MCondition}
\ln M_{s'} - \ln M_{s} \; \leq \; \ln\, \frac{e_{s \leadsto R}}{f_{R \leadsto s'}}, \quad \forall \; s  \leadsto R \leadsto s' \; \in \;\mathscr{U}.
\end{equation}

	Now let $\{\omega_s \}_{s \in \scrSst}$ be the standard basis for \RSst, let $\{\epsilon_{s \leadsto R \leadsto s'}\}_{s \leadsto R \leadsto s' \;\in\; \scrU}$ be the standard basis for \RU, let $T: \RSst \to \RU$ be defined by 
\begin{equation}
Tq := \sum_{ s  \leadsto R \leadsto s' \; \in \;\mathscr{U}}(\omega_{s'} - \omega_{s})\cdot q \;\epsilon_{s \leadsto R \leadsto s'},
\end{equation}
and let $z \in \RU$ be defined by
\begin{equation}
z := \sum_{ s  \leadsto R \leadsto s' \; \in \;\mathscr{U}}\ln\, \frac{e_{s \leadsto R}}{f_{R \leadsto s'}}\epsilon_{s \leadsto R \leadsto s'}.
\end{equation}
The existence of  positive numbers $\{M_s\}_{s \; \in \scrS^*}$  satisfying \eqref{eq:MCondition} is readily seen to be equivalent to the existence of $q^* \in \RSst$ satisfying
\begin{equation}
Tq^* \leq z,
\end{equation}
for then we can take $M_s = \exp q^*_s, \; \forall s \in \scrSst$ to meet the requirements of \eqref{eq:MCondition}.

	However, by a theorem of Gale \cite[p. 46] {Gale1960}, the existence of such a $q^*$ is equivalent to the \emph{non-existence} of $p \;\in\; \ker T^T \cap \;\PbarU$ satisfying
\begin{equation}
\label{eq:GaleConditionOnP}
p \cdot z < 0.
\end{equation} 
In our case, $T^T: \RU \to \RSst$ is given by
\begin{equation}
T^Tx := \sum_{ s  \leadsto R \leadsto s' \; \in \;\mathscr{U}}x_{ s  \leadsto R \leadsto s'}(\omega_{s'} - \omega_{s}).
\end{equation}

	We will say that $c \in \PbarU$ is a \emph{directed cycle vector} if, in the R-block under study, there is a directed cycle such that $c_{s' \leadsto R \leadsto s} = 1$ if $s' \leadsto R \leadsto s$ is a causal unit in the cycle and is zero otherwise. It is easy to see that every directed cycle vector is a member of $\ker T^T \cap \;\PbarU$. In fact, as we show in Lemma \ref{lem:DirectedCycleLemma} below,  every nonzero member of $\ker T^T \cap \;\PbarU$ is a positive linear combination of directed cycle vectors. 

	Suppose that $p \;\in\; \ker T^T \cap \;\PbarU$ satisfies condition \eqref{eq:GaleConditionOnP}, and let
\begin{equation}
p = \sum_{\theta = 1}^k\alpha_{\theta}c_{\theta}
\end{equation}
be a representation of $p$ as a positive linear combination of directed cycle vectors. Then we must have 
\begin{equation}
\label{eq:GaleConditionExpanded}
p \cdot z = \sum_{\theta = 1}^k\alpha_{\theta}(c_{\theta}\cdot z) < 0, 
\end{equation}
where all the $\alpha_{\theta}$ are positive. On the other hand, \eqref{eq:GaleConditionExpanded} is contradicted by the fact that each $c_{\theta}\cdot z$ is non-negative: To see this, suppose that, for a particular $\theta$, the directed cycle corresponding to $c_{\theta}$ is 
\begin{equation}
\label{eq:SourceDirectedCycle2}
s_1 \rightsquigarrow R_1  \rightsquigarrow s_2 \rightsquigarrow R_2 \dots\rightsquigarrow s_{n} \rightsquigarrow R_n \rightsquigarrow s_1.
\end{equation}
Then 
\begin{equation}
c_{\theta}\cdot z = \sum_{i = 1}^n \ln \frac{e_{s_i \leadsto R_i}}{f_{R_i \leadsto s_{i+1}}} = - \ln \prod_{i = 1}^n   \frac{f_{R_i \leadsto s_{i+1}}}{e_{s_i \leadsto R_i}} \geq 0,
\end{equation}
with the last inequality coming from the fact that no directed cycle in the R-block is stoichiometrically expansive. Apart from proof of Lemma \ref{lem:DirectedCycleLemma}, which appears below, this concludes the proof of Proposition \ref{prop:RBlockMExistence}.
\end{proof}

\begin{lemma}
\label{lem:DirectedCycleLemma}
Every nonzero member of $\ker T^T \cap \;\PbarU$ is a positive linear combination of directed cycle vectors.
\end{lemma}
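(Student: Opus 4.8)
The plan is to recognize $\ker T^T \cap \PbarU$ as the cone of nonnegative circulations on a directed graph and then to peel off directed cycles one at a time. I would form the directed graph $D$ whose vertex set is $\scrSst$ and whose directed edges are the causal units of the R-block, each unit $s \leadsto R \leadsto s'$ regarded as an edge pointing from its initiator $s$ to its terminator $s'$. For $x \in \RU$ the $\omega_v$-component of $T^Tx$ is $\sum_{s \leadsto R \leadsto v} x_{s \leadsto R \leadsto v} - \sum_{v \leadsto R \leadsto s'} x_{v \leadsto R \leadsto s'}$, i.e. the net flow of $x$ into the species vertex $v$. Hence $p \in \ker T^T \cap \PbarU$ precisely when $p$ assigns nonnegative weights to the causal units in such a way that flow is conserved at every species vertex. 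Directed cycle vectors are exactly the incidence vectors (over causal units) of simple directed cycles of $D$, and the excerpt already notes that each lies in $\ker T^T \cap \PbarU$.

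First I would show that the support of any nonzero such $p$ contains a directed cycle. Let $E^+ := \{ u \in \scrU : p_u > 0 \}$, which is nonempty, and let $D^+$ be the subgraph of $D$ with edge set $E^+$. Flow conservation forces every vertex $v$ incident to an edge of $E^+$ to have both a positive incoming and a positive outgoing edge: if the out-sum at $v$ is positive then the in-sum equals it and so is also positive, and conversely. Starting from any edge of $E^+$ and repeatedly following an outgoing $E^+$-edge, one therefore never gets stuck; since $\scrSst$ is finite, some vertex is revisited, and the portion of the walk between the two visits is a simple directed cycle $C$ all of whose edges lie in $E^+$.

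Next I would carry out the peeling. Let $c \in \PbarU$ be the directed cycle vector of $C$ and set $\lambda := \min_{u \in C} p_u$, which is strictly positive since $C \subseteq E^+$. Because $c \in \ker T^T$ (a simple directed cycle has exactly one incoming and one outgoing edge at each of its vertices, so net flow is zero there) and because $\lambda \le p_u$ for $u \in C$ while $c$ vanishes off $C$, the vector $p' := p - \lambda c$ again lies in $\ker T^T \cap \PbarU$; moreover at least one edge of $C$ has $p'_u = 0$, so $\supp p'$ is a proper subset of $\supp p$. Inducting on the finite cardinality of the support, with base case $p = 0$, expresses $p$ as $\sum_\theta \lambda_\theta c_\theta$ with every $\lambda_\theta > 0$ and every $c_\theta$ a directed cycle vector, which is exactly the assertion.

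The only real obstacle is the cycle-extraction step, and the key point there is that nonnegativity together with the conservation law $T^Tp = 0$ forbids a vertex of $D^+$ from being a pure source or a pure sink — which is precisely what guarantees that following outgoing edges produces a cycle rather than terminating. Everything else is bookkeeping. One should note that $D$ may be a multigraph, since several causal units can join the same ordered pair of species, but this causes no difficulty: each causal unit is treated as its own directed edge, and the cycle $C$ extracted above is simple as a sequence of edges.
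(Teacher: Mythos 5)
Your proof is correct, but it takes a genuinely different route from the paper's. The paper argues via the structure of the pointed polyhedral cone $\ker T^T \cap \PbarU$: it writes an arbitrary member as a sum of extreme vectors and then shows that each nonzero extreme vector is a positive multiple of a directed cycle vector, by restricting $T^T$ to the support of the extreme vector, proving that the restricted kernel is exactly one-dimensional, and invoking the fact that the kernel of the incidence map of a directed graph has a basis of (signed) cycle vectors; nonnegativity of the extreme vector then forces the single surviving cycle to be consistently directed. You instead run the classical flow-decomposition argument: you read $T^Tp = 0$, $p \geq 0$ as a nonnegative circulation on the directed multigraph whose edges are the causal units, observe that conservation of flow forbids any vertex of the support from being a pure source or a pure sink, extract a simple directed cycle lying in the support, subtract the largest feasible positive multiple of its indicator vector, and induct on the cardinality of the support. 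Your route is more elementary and constructive --- it avoids both Gale's cone machinery and the cycle-space results the paper cites from the graph-theory literature --- and it produces the decomposition directly rather than through a characterization of extreme rays. The paper's route buys slightly more information (an identification of the extreme rays of the cone with the directed cycle vectors), but for the lemma as stated your argument is entirely adequate. The one delicate point, which you do address, is that several causal units may join the same ordered pair of species, so the underlying object is a multigraph; treating each causal unit as its own edge disposes of this.
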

\begin{proof} Here we use the terminology and results in \cite[Chapter 2]{Gale1960}. A vector in $\ker T^T \cap \;\PbarU$ is \emph{extreme} if it cannot be written as the sum of two linearly independent vectors of  $\ker T^T \cap \;\PbarU$. Because $\ker T^T \cap \;\PbarU$ is a pointed finite cone (sometimes called a pointed polyhedral cone), any vector within it is the sum of extreme vectors of  $\ker T^T \cap \;\PbarU$. We will be finished if we can show that every nonzero extreme vector of  $\ker T^T \cap \;\PbarU$ is a positive multiple of a directed cycle vector.

	Suppose, then, that $x^*$ is a nonzero extreme vector of $\ker T^T \cap \;\PbarU$.  Thus, $x^*$ must be non-negative and satisfy the equation
\begin{equation}
T^Tx^* = \sum_{ s  \leadsto R_{\theta} \leadsto s' \; \in \;\mathscr{U}}x^*_{ s  \leadsto R \leadsto s'}(\omega_{s'} - \omega_{s}) = 0.
\end{equation}
Let $\scrU^*$ denote the support of $x^*$, and let $\RU_*$ be the linear subspace of \RU consisting of all vectors whose support is a subset of  $\scrU^*$. That is,
\begin{equation}
\RU_* := \{x \in \RU: x_{s \leadsto R \leadsto s'} =0 \; \mathrm{when} \; {s \leadsto R \leadsto s'} \notin \scrU^*\}.
\end{equation}
Moreover, let \PbarUst\; denote the set of vectors of $\RU_*$ having exclusively non-negative components. We denote by $T_*^T: \RU_* \to \RS$ the linear map defined by
\begin{equation}
T_*^Tx := \sum_{ s  \leadsto R \leadsto s' \; \in \;\mathscr{U}^*}x_{ s  \leadsto R \leadsto s'}(\omega_{s'} - \omega_{s}).
\end{equation} 
Because the support of $x^*$ is $\scrU^*$, it is easy to see that $T_*^Tx^* = T^Tx^* = 0$. Thus, the dimension of $\ker T_*^T$ is at least one. 

	In fact if, as has been supposed, $x^*$ is an extreme vector of  $\ker T^T \cap \;\PbarU$, then the dimension of $\ker T_*^T$ must be precisely one:
If not, then there is nonzero vector $d \in  \ker T_*^T \subset \RU_*$ that is not collinear with $x^*$.  By choosing $\epsilon > 0$ sufficiently small, we can construct another vector $x^1 := x^* + \epsilon d$ that also resides in  $\ker T_*^T$ and that has $x^1_{s \leadsto R \leadsto s'} > 0$ for each $s  \leadsto R \leadsto s' \in \scrU^*$. Because $d$ is not collinear with $x^*$, neither is $x^1$. By choosing $\rho > 0$ sufficiently small, the vector $x^2 := x^* - \rho x^1 \in \ker T_*^T$ can also be made to have positive components on $\scrU^*$. Since both $x_1$ and $x_2$ are members of \PbarUst (and hence of \PbarU) and since both are members of $\ker T_*^T$ (and hence of $\ker T^T$), both are members of  $\ker T^T \cap \;\PbarU$. Moreover, since $x^* = \rho x_1 + x_2$, $x^*$ is the sum of two vectors in $\ker T^T \cap \;\PbarU$ that are not collinear with it (and, therefore, not collinear with each other). This implies that $x^*$ is not extreme, which contradicts what we had supposed. Thus, $\dim \ker T_*^T = 1$.

	Now $ T_*^T$ can be identified with the \emph{incidence map} \cite{Biggs1993} of a directed graph $\bar{G}$ having as its vertices the species of $\scrU^*$ and as edges the directed causal units contained in $\scrU^*$. It follows from standard graph theoretical arguments --- see Chapters 4 and 5 in \cite{Biggs1993} --- that the kernel of the incidence map of a directed graph (when it is singular) has a basis consisting of \emph{cycle vectors}. Each vector $c$ of such a basis has support on the edges of a distinct (not necessarily directed) cycle of the graph and has components related to the cycle in following way:  The cycle is given an orientation (e.g. ``clockwise") and the component of $c$ corresponding to a particular edge in the cycle is $+1$ [respectively, $-1$] if the edge's direction agrees [disagrees] with the chosen orientation.

	Because  $\dim \ker T_*^T = 1$, every vector in  $\ker T_*^T$ is a multiple of a single cycle vector $c^*$, corresponding to a cycle in $\bar{G}$ made up of edges (causal units) in $\scrU^*$. Because $x^*$ is a member  $\ker T_*^T$, it is a non-zero multiple of $c^*$. Moreover, because the components of  $x^*$ corresponding to members of $\scrU^*$ are all positive, it must be the case that all components of  $c^*$ corresponding to members of  $\scrU^*$ also have the same sign, which is to say that the directions of members of  $\scrU^*$ are consistent with a fixed cycle orientation (clockwise or counterclockwise). That is, the cycle  in $\bar{G}$ formed by members of  $\scrU^*$ is directed. Thus, we can write $x^* = \alpha c^*$, where $\alpha$ is positive and $c^*$ is the directed cycle vector corresponding to the underlying cycle in $\bar{G}$, taken with its native $\leadsto$ orientation.
\end{proof}

\section{Proof of Lemma \ref{lem:StrongConcordanceLemma}}
\label{app:ProofofStrongConcordanceLemme}

	Here we prove Lemma \ref{lem:StrongConcordanceLemma}, which is repeated below:

\medskip
\noindent
\textbf{Lemma \ref{lem:StrongConcordanceLemma}}. \emph{Suppose that a fully open reaction network with true chemistry \newline  \rnet\  is not strongly concordant. Then there is another true chemistry \rnetbar\ whose fully open extension is discordant and whose SR Graph is identical to a subgraph of the SR Graph for \rnet, apart perhaps from changes in certain arrow directions within the reaction vertices.}
\medskip

	Suppose that, for a fully open network, the true chemistry \rnet\ is not strongly concordant. We let $\scrR_{aug}$ denote the set of all reactions in the fully open network whose true chemistry is \rnet, including synthesis-degradation reactions of the form $0 \to s$ and $s \to 0$. To say that the fully open network is not strongly concordant is to say that there is a non-zero $\sigma \in \RS$ and numbers $\{\alpha_{\rxn}\}_{\rxn \in \scrR_{aug}}$ such that

\begin{equation}
\label{eq:AlphEqnForUnbarNetwork}
\sum_{\rxn \in \scrR_{aug}}\alpha_{\rxn}(y' - y) = 0,
\end{equation}
\noindent
and, in addition,

\begin{enumerate}[(i)]
\item{For each $y \to y' \in \scrR_{aug}$ such that $\alpha_{y \to y'} > 0$ there exists a species $s \in \supp (y - y')$ for which $\sgn \sigma_s = \sgn (y - y')_s$.}
\item{For each $y \to y' \in \scrR_{aug}$ such that $\alpha_{y \to y'} < 0$ there exists a species $s \in \supp (y - y')$ for which $\sgn \sigma_s = - \sgn (y - y')_s$.}

\item{For each $y \to y' \in \scrR_{aug}$ such that $\alpha_{y \to y'} = 0$, either (a) $\sigma_s = 0$ for all $s \in \supp y$, or (b) there exist species $s, s' \in \supp (y - y')$ for which $\sgn \sigma_s = \sgn (y - y')_s $ and $\sgn \sigma_{s'} = -  \sgn (y - y')_{s'}$.}
\end{enumerate}

	Our aim will be to construct a true chemistry \rnetbar\ with a discordant fully open extension and with  $\bar{\scrR}$ resulting from removal of certain reactions in \scrR, retention of certain other reactions in \scrR,  and replacement of still other reactions in \scrR\, by their reverses. It is easy to see that the resulting true chemistry \rnetbar will have an SR Graph that is identical to a subgraph of the SR Graph for \rnet, up to modification of arrow directions within the reaction vertices. 

To produce the discordance we will want to show that, with $\bar{\scrR}_{aug}$ denoting the reaction set $\bar{\scrR}$ augmented by the degradation reactions $\{s \to 0, \forall s \in \scrS\}$, there exist numbers  $\{\bar{\alpha}_{\rxn}\}_{\rxn \in \bar{\scrR}_{aug}}$ satisfying

\begin{equation}
\label{eq:AlphaEqnForConstuctedBarNetwork}
\sum_{\rxn \in \bar{\scrR}_{aug}}\bar{\alpha}_{\rxn}(y' - y) = 0,
\end{equation}

\medskip
\noindent
and, in addition (with $\sigma$ as before),
\medskip
\begin{enumerate}[(i)]
\item{For each $\rxn \in \bar{\scrR}_{aug}$ such that $\bar{\alpha}_{\rxn} \neq 0$, \supp $y$ contains a species $s$ for which $\mathrm{sgn}\, \sigma_s = \mathrm{sgn}\, \bar{\alpha}_{\rxn}$.}

\item{For each  $\rxn \in \bar{\scrR}_{aug}$ such that $\bar{\alpha}_{\rxn} = 0$, $\sigma_s = 0$\, for all $s \in \supp y$ or else \supp $y$ contains species $s\, \textrm{and } s'$ for which  $\mathrm{sgn}\, \sigma_s = -\, \mathrm{sgn}\; \sigma_{s'}$, both not zero.}
\end{enumerate}

\noindent
The resulting fully open reaction network, with reaction set $\bar{\scrR}_{aug}$, will then be discordant. (Recall Definition \ref{def:Concordant}, keeping in mind that the stoichiometric subspace of the fully open network is \RS.)

	The construction of $\bar{\scrR}_{aug}$, which draws on conditions (i) -- (iii) just below \eqref{eq:AlphEqnForUnbarNetwork}, proceeds through sequential retention, removal, or reversal of reactions in $\scrR_{aug}$ in a procedure we will now describe, along with an indication of how the numbers $\{\bar{\alpha}_{\rxn}\}_{\rxn \in \bar{\scrR}_{aug}}$ are to be assigned:

\medskip
\noindent
1. For each $\rxn \in \scrR$ that is irreversible do the following:

\begin{enumerate}[(A)]
\item{If $\alpha_{\rxn} \neq 0$ and there exists $s \in \supp y$ with $\sgn \sigma_s = \sgn \alpha_{\rxn}$ retain \rxn\, and set $\bar{\alpha}_{\rxn} = \alpha_{\rxn}$.} 

\item{If $\alpha_{\rxn} \neq 0$ and there does not exist $s \in \supp y$ with $\sgn \sigma_s = \sgn \alpha_{\rxn}$ (in which case there exists  $s' \in \supp y'$ with $\sgn \sigma_{s'} = - \sgn \alpha_{\rxn}$)
replace \rxn\ by $y' \to y$, and set $\bar{\alpha}_{y' \to y} = -\alpha_{\rxn}$.} 

\item{If $\alpha_{\rxn} = 0$ remove \rxn.}
\end{enumerate}

\medskip
\noindent
2. For each reversible reaction pair $y \rightleftarrows y' \in \scrR$ do the following:

\begin{enumerate}[(A)]
\item{If $\alpha_{\rxn} \neq \alpha_{y' \to y}$ and the complexes have been labeled so that $|\alpha_{\rxn}| > |\alpha_{y' \to y}|$ (whereupon $\sgn (\alpha_{\rxn} - \alpha_{y' \to y}) = \sgn \alpha_{\rxn}$) } then
\begin{enumerate}[(i)]
\item{If there exists $s \in \supp y$ with $\sgn \sigma_s = \sgn \alpha_{\rxn}$ then retain \rxn, remove $y' \to y$, and set $\bar{\alpha}_{\rxn} = \alpha_{\rxn} - \alpha_{y' \to y}$.} 

\item{If there does not exist $s \in \supp y$ with $\sgn \sigma_s = \sgn \alpha_{\rxn}$ (in which case there exists  $s' \in \supp y'$ with $\sgn \sigma_{s'} = - \sgn \alpha_{\rxn}$) then retain $y' \to y$, remove \rxn,
and set $\bar{\alpha}_{y' \to y} = \alpha_{y' \to y} - \alpha_{\rxn}$.} 
\end{enumerate}
\item{If $\alpha_{\rxn} = \alpha_{y' \to y}$ then remove both \rxn\, and $y' \to y$.}
\end{enumerate}

\medskip
\noindent
3. For each degradation reaction $s \to 0$ that is irreversible in $\scrR_{aug}$, retain $s \to 0$ and set $\bar{\alpha}_{s\to 0} = \alpha_{s\to 0}$.

\medskip
\noindent
4. For each reversible pair $s \rightleftarrows 0 \in \scrR_{aug}$, do the following:
\begin{enumerate}[(A)]
\item{If $\alpha_{s \to 0} \neq 0$ (in which case $\alpha_{0 \to s} = 0$ or $\sgn \alpha_{0 \to s} = -\sgn \alpha_{s \to 0}$), retain $s \to 0$, remove $0 \to s$, and set $\bar{\alpha}_{s \to 0} = \alpha_{s \to 0} - \alpha_{0 \to s}$}.
\item{If $\alpha_{s \to 0} = 0$ (in which case $\alpha_{0 \to s} = 0$), retain $s \to 0$, remove $0 \to s$, and set $\bar{\alpha}_{s \to 0} = 0$.}  
 \end{enumerate}
\bigskip

	That \eqref{eq:AlphaEqnForConstuctedBarNetwork} is satisfied follows from the construction and \eqref{eq:AlphEqnForUnbarNetwork}. That the two conditions just below \eqref{eq:AlphaEqnForConstuctedBarNetwork} are satisfied also follows from the construction.

\section*{Acknowledgment} We are grateful to Daniel Knight for helpful discussions and to Uri Alon for his support and encouragement of this work.

\medskip
\noindent

\bibliographystyle{spmpsci}
\bibliography{MonoLibrary}

\end{document}